\theoremstyle{definition}
\newtheorem{defi}{Definition}[section]
\newtheorem*{defi*}{Definition}
\theoremstyle{plain}
\newtheorem{thm}[defi]{Theorem}
\newtheorem*{thm*}{Theorem}
\newtheorem{lem}[defi]{Lemma}
\newtheorem*{lem*}{Lemma}
\newtheorem*{clm*}{Claim}
\newtheorem{cor}[defi]{Corollary}
\newtheorem*{cor*}{Corollary}
\DeclareSymbolFont{matha}{OML}{txmi}{m}{it}% txfonts
\DeclareMathSymbol{v}{\mathord}{matha}{118}
\numberwithin{equation}{section}
\def\fheq{\overset{\mathcal{H}^+}{=}}
\def\pheq{\overset{\mathcal{H}^-}{=}}
\def\ceq{\overset{\mathcal{C}(v)}{=}}
\newcommand{\tceq}{\overset{\tilde{\mathcal{C}}}{=}}
\def\hinteq{\overset{\mathcal{H}_{v_1}^{v_2}}{=}}
\newcommand{\numberthis}{\addtocounter{equation}{1}\tag{\theequation}}
  \let\l@subsubsection\@gobbletwo % hide subsubsections in the TOC
\begin{document}
\title{\Large Dynamical entropy of charged black objects} 
 
\author{Manus~R.~Visser}
\email{manus.visser@ru.nl}
\affiliation{Institute for Mathematics, Astrophysics and Particle Physics, and Radboud Center for Natural Philosophy, Radboud University, 6525 AJ Nijmegen, The Netherlands}
\author{Zihan~Yan}
\email{zy286@cam.ac.uk}
\affiliation{DAMTP, Centre for Mathematical Sciences, University of Cambridge, Wilberforce Road, Cambridge, U.K. CB3 0WA}

\date{\today}

\begin{abstract}

We develop a general framework for electromagnetic potential-charge contributions to the first law of black hole mechanics, applicable to dynamical first-order perturbations of stationary black objects with possibly non-compact bifurcate Killing horizons. Working in the covariant phase space formalism, we derive both comparison and physical process versions of the first law. We consider generic diffeomorphism-invariant  theories of gravity in $D$ spacetime dimensions,  containing non-minimally coupled abelian 
 $p$-form gauge fields. The pullback of the gauge field to the horizon is allowed to diverge while its field strength remains smooth, yielding   gauge-invariant electric potential-charge pairs in the first law. We further extend the construction to include magnetic charges by developing a bundle-covariant, gauge-invariant prescription that fixes the Jacobson-Kang-Myers ambiguity in the improved Noether charge.  Electric and magnetic charges are, respectively,  associated with non-trivial 
\((D - p - 1)\)- and \((p + 1)\)-cycles of the horizon cross-section,   whose homology classes determine the number of independent potential-charge pairs through the Betti numbers \(b_{D - p - 1}\) and \(b_{p + 1}\).
  Further, the dynamical gravitational entropy entering the first law is identified with the gauge-invariant part of the improved Noether charge,  giving a   gauge-invariant extension of the recent proposal by Hollands, Wald and Zhang \cite{Hollands:2024vbe}. We illustrate our framework with dyonic AdS black holes, dipole black rings, and charged black branes.  
  \end{abstract}

\maketitle

\newpage
\tableofcontents

\newpage

\section{Introduction}

Since the pioneering work of Bekenstein \cite{Bekenstein:1973ur} and Hawking \cite{Hawking:1975vcx} it has been understood that black holes are thermodynamic objects, with   associated energy, temperature and entropy.  Bekenstein proposed that  the entropy $S$ of 
  black hole solutions to general relativity  is proportional to the horizon area $A$, and Hawking discovered that the black hole temperature $T$ is proportional to the surface gravity  $\kappa $ of the Killing horizon, i.e.,   
\begin{equation}\label{bekensteinhawking}
    S_{\text{BH}} = \frac{A}{4G}\,  \qquad  \text{and} \qquad  T=\frac{\kappa}{2\pi}\,,
\end{equation} 
where $G$ is Newton's gravitational constant. 
The Bekenstein-Hawking entropy $S_{\text{BH}}$ satisfies a second law, due to the area increase theorem \cite{Hawking:1971tu}, and it obeys the so-called first law of black hole thermodynamics~\cite{Bekenstein:1973ur,Bardeen:1973gs,HaywardPRD}. 

Yet, there is an ambiguity in the Bekenstein-Hawking formula \eqref{bekensteinhawking} for black hole entropy, since there are multiple possible definitions of the horizon \cite{Hawking:1973uf,HaywardPRD,Ashtekar:2002ag,Ashtekar:2004cn,Bousso:2015mqa}. Two possible contenders that both satisfy an area theorem are the event horizon and the apparent horizon \cite{Hawking:1973uf}. For stationary black holes these two horizons coincide, but for   dynamical black holes their locations differ. That is, the apparent horizon of dynamical black holes must lie inside the event horizon if the null energy condition  holds and the weak cosmic censorship is valid. Hence, based on the second law of black hole thermodynamics, there seems to be multiple definitions of   black hole entropy, which possibly correspond to different coarse grainings in the microscopic theory (see, e.g., \cite{Kelly:2013aja,Engelhardt:2017aux,Engelhardt:2018kcs}). Importantly, however, the second law is not the only law of black hole thermodynamics  that defines black hole entropy; the  entropy also features in the first law of black hole thermodynamics. Therefore, the first law should   be taken into account in the search for the right definition(s) of black hole entropy. 
 
 The first law of black hole mechanics was originally   derived for stationary perturbations   of a stationary, axisymmetric black hole background geometry with a bifurcate Killing horizon~\cite{Bekenstein:1973ur,Bardeen:1973gs}. The stationary comparison version of the first law --- that  compares two different,  infinitesimally close,
stationary, axisymmetric, charged black hole geometries --- reads
 \begin{equation} \label{originalfirstlaw}
     \delta M  = T \delta S + \Phi \delta Q  + \Omega \delta J \,,
 \end{equation} 
where $M$ is the mass, $T$ is the Hawking temperature of the background Killing horizon, $\Phi$ is the electric potential, $Q$ is the electric charge, $\Omega$ is the angular velocity of the Killing horizon and  $J$ is the angular momentum.  
In general relativity, the entropy entering the first law for black holes is the Bekenstein-Hawking entropy \eqref{bekensteinhawking}. Although originally established for stationary perturbations~\cite{Bardeen:1973gs}, the validity of the first law with $S=S_{\text{BH}}$ extends to first-order non-stationary perturbations, provided the entropy is evaluated on the bifurcation surface (as demonstrated in \cite{Sudarsky:1992ty,Wald:1993ki} using the Hamiltonian formalism for Einstein-Yang-Mills theory).

More generally, the first law  \eqref{originalfirstlaw} applies   to any   classical diffeomorphism-invariant theory of gravity. For those general higher-curvature theories of gravity, Wald  \cite{Wald:1993nt} showed  that the entropy in the first law is given by  the Noether charge associated to the horizon generating
Killing field, integrated over a horizon cross-section.  
The Noether charge formula for black hole entropy is valid for arbitrary cross-sections of bifurcate Killing horizons in the case of stationary  perturbations, and it holds for first-order non-stationary perturbations as well, if the Noether charge is evaluated at the bifurcation surface \cite{Iyer:1994ys}. By contrast, for arbitrary horizon cross-sections of non-stationary black holes, that are first-order perturbations of stationary black holes,  the Noether charge no longer gives the correct  entropy in the first law of black hole thermodynamics.

Recently, Hollands, Wald, and Zhang (HWZ) \cite{Hollands:2024vbe} demonstrated that when the first law is generalised to non-stationary perturbations of a bifurcate Killing horizon and to arbitrary horizon cross-sections, the black hole entropy acquires a dynamical correction term, while the form of the first law \eqref{originalfirstlaw} remains unchanged. For general relativity, they proposed that the  \emph{entropy of dynamical black holes} is 
\begin{equation} \label{dynamicalentropygr}
    S_{\text{dyn}} =\left  (1- v \frac{\dd}{\dd v} \right)S_{\text{BH}}  \,.
\end{equation}
This dynamical black hole entropy was also studied in other recent work \cite{Rignon-Bret:2023fjq,Kar:2024dqk,Visser:2024pwz,Ciambelli:2023mir}. Here, $v$ is the (future-directed) affine parameter along the outgoing null geodesics of the future event
horizon, and $S_{\text{BH}}$ is the Bekenstein-Hawking entropy of the future event horizon. On the bifurcation surface (where $v=0$) and on Killing horizons (where $\dd A/ \dd v=0$)  $S_{\text{dyn}}$   reduces to $S_{\text{BH}}$. However, for arbitrary affine null time $v$ the entropy of dynamical black holes is smaller than $S_{\text{BH}}$ of the event horizon, if the derivative of the horizon area $\dd A/\dd v$  is positive (which follows from
the area theorem). 
Moreover, HWZ proved, to first order in perturbation theory, the  dynamical black hole entropy \eqref{dynamicalentropygr} is equal to the Bekenstein-Hawking entropy of the apparent horizon (see also \cite{Visser:2024pwz} for a pedagogical proof). Therefore, at least in this restricted setting, it seems that the area of the apparent horizon yields the correct thermodynamic entropy of black holes, and not the area of the event horizon.   

Furthermore, HWZ generalised  the non-stationary comparison version of the first law to arbitrary diffeomorphism-invariant theories   of gravity for which the metric
is the only dynamical field. In this case the dynamical black hole entropy is given by  \cite{Hollands:2024vbe}
\begin{equation}\label{dynentropywall}
    S_{\text{dyn}} =\left  (1- v \frac{\dd}{\dd v} \right)S_{\text{Wall}}  \,.
\end{equation}
The defining relation for $S_{\text{Wall}}$, introduced by Wall in \cite{Wall:2015raa}, is that its second $v$-derivative is related to the $vv$-component of the gravitational field equations
\begin{equation} \label{definingrelationwall}
    \partial_v^2 \delta S_{\text{Wall}} = - 2\pi \int_{\mathcal C(v)} \dd{A} \delta E_{vv}\,,
\end{equation}
where $\mathcal C(v)$ is the horizon cross-section at fixed affine parameter $v $ with area element $\dd A$. The entropy  $S_{\text{Wall}} $ has been computed    explicitly by Wall \cite{Wall:2015raa} for Lagrangians that are contractions of the metric and Riemann tensor, so-called $f$(Riemann) theories of gravity. Notably, this expression differs from the Noether horizon entropy by Wald \cite{Wald:1993nt,Iyer:1994ys} due to extrinsic curvature corrections, which, mysteriously, match the extrinsic curvature corrections in holographic entanglement entropy for $f$(Riemann) theories derived by Dong \cite{Dong:2013qoa}.  Subsequent work established that the Wall entropy is gauge invariant to linear order in perturbations, and extended the construction in effective field theory to second order in perturbation theory \cite{Hollands:2022fkn} (see also \cite{Davies:2022xdq}) as well as to non-perturbative settings \cite{Davies:2023qaa}. Moreover, 
 $S_{\text{Wall}}$
  has been generalised to arbitrary diffeomorphism-covariant theories of gravity with non-minimal couplings to scalar fields, gauge fields \cite{Biswas:2022grc}, (non-gauged) vector fields~\cite{Wall:2024lbd},  and higher-spin fields \cite{Yan:2024gbz}. Finally, a covariant entropy current associated with the Wall entropy for diffeomorphism-invariant theories of gravity was constructed in \cite{Hollands:2024vbe} (see also \cite{Hollands:2022fkn,Bhattacharyya:2021jhr}).

From the defining relation \eqref{dynentropywall}, it follows   that the Wall entropy obeys a second law in higher-curvature gravity, valid for linear non-stationary perturbations of a Killing horizon \cite{Wall:2015raa}. This result holds under the assumptions of the null energy condition and the teleological boundary condition that the perturbation settles into a stationary state as $v\to +\infty.$ Moreover, the dynamical black hole entropy also obeys a linearised second law, since combining \eqref{dynentropywall} and \eqref{definingrelationwall}, and imposing the linearised gravitational field equations $\delta E_{vv} = \delta T_{vv}$, yields 
\begin{equation} \label{linearisedsecondlaw1}
    \partial_v \delta S_{\text{dyn}} = 2\pi \int_{\mathcal C(v)} \dd{A} v  \delta T_{vv} \,.
\end{equation}
Now, if the perturbation is sourced by a stress-energy tensor that satisfies the null energy condition,
$\delta T_{vv}\ge 0$,  then $S_{\text{dyn}}$ obeys a   linearised second law
$ \partial_v \delta S_{\text{dyn}} \ge 0$. This is a local version of the second law, because the entropy changes after matter crosses the horizon. On the other hand, $S_{\text{Wall}}$ and $S_{\text{BH}}$ already change 
before matter crosses the horizon due to the teleological condition at $ v=+\infty$.    Consequently, even though $S_{\text{Wall}}$ and $S_{\text{dyn}}$ both satisfy a (linearised) second law, they exhibit different physical responses to energy influx. Importantly, the entropy relevant to the first law for non-stationary perturbations of stationary black hole backgrounds   is $S_{\text{dyn}}$, when evaluated on an arbitrary horizon cross-section. 

The dynamical black hole entropy is defined in its most general form in the covariant phase space formalism, a.k.a., the Noether charge formalism \cite{Wald:1993nt,Iyer:1994ys}. Compared to Wald's original definition of stationary black hole entropy as the horizon Noether charge, HWZ defined   $S_{\text{dyn}}$ with an additional dynamical correction term  \cite{Hollands:2024vbe}
\begin{equation} \label{improvednoetherdynentr}
    S_{\text{dyn}} = \int_{\mathcal C(v)} \frac{2\pi}{\kappa_3} (\mathbf Q_\xi - \xi \cdot \mathbf B_{\mathcal H^+})\,.
\end{equation}
Here, $\mathbf Q_\xi$ is the Noether charge associated to the horizon generating Killing vector field $\xi$, and $\mathbf B_{\mathcal H^+}$ is a codimension-1 differential form  defined on the future event horizon $\mathcal H^+$, that vanishes on the background Killing horizon and whose variation equals the symplectic potential, cf.~\eqref{entropyconditiona}. Further,    $\kappa_3$ is the surface gravity of the   event horizon, defined with respect to the background horizon Killing field $\xi$, cf.~\eqref{kappa3}. The combination of terms between brackets in \eqref{improvednoetherdynentr} is also known as the improved Noether charge $\tilde{\mathbf Q}_\xi$, hence the entropy of dynamical black holes is the horizon integral of the improved Noether charge (times the inverse temperature $2\pi / \kappa_3$). 

HWZ established that formula  \eqref{improvednoetherdynentr} for $S_{\text{dyn}}$ obeys the non-stationary first law for arbitrary diffeomorphism-invariant theories of gravity for which the metric is the only dynamical field. In a follow-up paper \cite{Visser:2024pwz}, we generalised this result  to  include non-minimal bosonic matter fields that are stationary in the background,   whose pullback to the horizon is smooth and that satisfy    an asymptotically flat fall-off condition (although the last condition was not made explicit). Moreover, our gauge conditions for the perturbations were less restrictive, since we did not set the variations of the horizon Killing field and the surface gravity to zero, thus allowing for variations of the black hole temperature.  Further, a substantial consistency check that we did was evaluating \eqref{improvednoetherdynentr}
explicitly for 
  $f(\text{Riemann})$ theories of gravity and showing that it matches the definition in \eqref{dynentropywall}. 

An important caveat in the literature on dynamical black hole entropy is the absence of  
an electromagnetic potential-charge term  of the form $\Phi \delta Q$ in the non-stationary comparison version of the first law. We intend to fill this gap in the present paper. We restrict to charges associated to abelian  $p$-form gauge fields and mostly focus on electric charges in this paper, but we also have a proposal for the inclusion of magnetic charges in the first law.  For the non-stationary physical process version   of the first law in general relativity, Rignon-Bret \cite{Rignon-Bret:2023fjq} already extended the derivation to include  an electric potential-charge  term in the case of minimally coupled Maxwell fields. In his formulation, the $\Phi_{\mathcal H}   \delta Q$ term arises directly from the matter Killing energy flux across the horizon. We   take a different perspective on the physical process first law where the electric charge   does not only arise from the matter stress-energy tensor,  but is   an intrinsic property of the horizon itself, since in the   non-minimally coupled case there is no clear division between the gravitational and electromagnetic sector. 
We derive a form of the physical process first law in which both the horizon charges and matter charges --- given by an integral of the Hodge dual of the $p$-form electric current sourced by external matter fields  --- appear explicitly on each side of the relation. Since these two charge terms are equal, our formulation reduces to that of \cite{Rignon-Bret:2023fjq,Hollands:2024vbe},  which locally is given by \eqref{linearisedsecondlaw1}.  

The reason the electric charge term was missing    in our previous   derivation \cite{Visser:2024pwz} of   the comparison first law is the following.  If the pullback of an abelian one-form (Maxwell) gauge potential $\mathbf A$ to a bifurcate Killing  horizon~$\mathcal H$ is smooth, and   the stationarity gauge condition $\mathcal L_\xi \mathbf A = 0$ is imposed,   then the electric potential at the horizon, $\Phi_{\mathcal H} = - \xi \cdot \mathbf A |_{\mathcal H}$,  vanishes.  That is because   $\xi$ vanishes on the bifurcation surface, while $\Phi_{\mathcal H}$ is constant along the horizon whenever $\mathcal L_\xi \mathbf A = 0$ \cite{Carter:2009nex,Gauntlett:1998fz,Gao:2001ut,Smolic:2014swa} (see also section \ref{ssec:zeroth-law} for a proof of the constancy of $\Phi_{\mathcal H}$). Thus, smoothness of the pullback of $\mathbf A$ forces $\Phi_{\mathcal H}=0$   everywhere on~$\mathcal H$, eliminating any    $\Phi_{\mathcal H} \delta Q$ contribution in the  first law. Moreover, the asymptotically flat fall-off condition for the gauge field, $A_a \to \mathcal O(1/r^{D-3})$ as $r \to \infty$, ensures that the electric potential at   infinity, $\Phi_\infty = -\xi \cdot \mathbf A |_{r\to \infty}$, also vanishes. Hence, under the assumptions in \cite{Visser:2024pwz},\footnote{Admittedly, the assumptions for the gauge field --- $\mathcal L_\xi \mathbf A =0$, $\mathbf A |_{\mathcal H}$ is smooth and $\mathbf A |_{r\to \infty} = \mathcal O(1/r^{D-3})$ ---  were so strong that  the Maxwell field strength $\mathbf F = \dd \mathbf A$ vanishes identically and the gauge field   is globally pure gauge.} both horizon and asymptotic potentials vanish, so that the first law contains no electric   potential-charge term of the form $\bar \Phi \delta Q$, with $\bar \Phi \equiv \Phi_{\mathcal H}- \Phi_\infty$. 
   
There are three ways to relax these assumptions and thereby obtain a non-vanishing electric potential-charge term in the first law:
\begin{itemize}
    \item[a)] \emph{Non-smoothness at the horizon}: one may allow the pullback of $\mathbf A$ to the horizon to diverge at the bifurcation surface, while keeping $\xi \cdot \mathbf A$  finite and nonzero at the bifurcation surface. In this case  the horizon potential does not vanish, $\Phi_{\mathcal H} \neq 0 $. Moreover, the field strength can remain regular at the entire Killing horizon (see section \ref{sec:nonsmoothness}).
    \item[b)] \emph{Stationary up to gauge}: Instead of imposing $\mathcal L_\xi \mathbf A =0$, one can allow the gauge field to be stationary up to a pure gauge, $\mathcal L_\xi \mathbf A = \dd \lambda_\xi$. This is allowed since the   physical field strength  is still stationary, $\mathcal L_\xi \mathbf F = \dd (\mathcal L_\xi \mathbf A)=0$. In this case,     the gauge-invariant horizon potential $\Phi_{\mathcal H}\equiv (- \xi \cdot \mathbf  A + \lambda_\xi)|_{\mathcal H}$ can be nonzero, because $\lambda_\xi$ might not vanish at $\mathcal H$, even if the pullback of $\mathbf A$ to the horizon is smooth (so that $\xi \cdot \mathbf  A |_{\mathcal H}=0$).
    \item[c)] \emph{Non-vanishing asymptotic potential}: If the gauge field does not vanish at spatial infinity, then the asymptotic electric potential is nonzero, $\Phi_\infty \neq 0 $.
\end{itemize}

In earlier work \cite{Gao:2003ys} on the stationary comparison version of the  first law, Gao considered a non-smooth $\mathbf A$ at the horizon, while keeping the standard stationarity and asymptotic fall-off conditions. This leads to an extension of the Noether charge derivation \cite{Wald:1993nt,Iyer:1994ys} of the first law for stationary perturbations  to both abelian and non-abelian gauge fields, thereby recovering an electric potential-charge term   (for the abelian case). In the present paper we adopt a more general perspective: we allow for all three types of relaxed gauge conditions simultaneously, which together lead to a black hole first law  that includes a gauge-invariant electric potential-charge term. 

The main goal of this work is therefore to establish a general framework for including electric potential-charge contributions in the non-stationary first law of black holes. Our analysis applies to arbitrary  diffeomorphism-invariant theories of gravity in which the metric is non-minimally coupled to bosonic matter fields, and in particular to abelian gauge fields. As in~\cite{Hollands:2024vbe,Visser:2024pwz}, we consider non-stationary perturbations of a stationary black hole background with a bifurcate Killing horizon. We allow the pullback of the gauge fields to be divergent at the horizon, while the gauge-invariant field strength is regular everywhere and stationary in the background. We derive two complementary versions of the first law --- the non-stationary comparison version and the physical process version --- both using the covariant phase space formalism. A key feature is that we  do not restrict the  behaviour of gauge field perturbations at the horizon,  similar to Gao's \cite{Gao:2003ys} derivation  of the stationary first law. On the other hand, we do impose standard fall-off conditions at spatial infinity, both for asymptotically flat and anti-de Sitter black holes (see section \ref{sec:falloff}).  These conditions as well as other gauge conditions for the metric perturbations (see section \ref{sec:setup} and section 2.2 in \cite{Visser:2024pwz})  are necessary to have a well-defined first law.

In addition to this central aim, we achieve three further generalisations compared to \cite{Hollands:2024vbe,Visser:2024pwz}. First, we extend the non-stationary first law to arbitrary 
 $p$-form abelian gauge fields, thereby capturing higher-form charges and potentials within the same framework. Second, while previous work only considered compact horizon cross-sections, we allow for non-compact cross-sections. This extends the applicability of our first law to a broad class of (higher-dimensional) \emph{black objects}, including planar and hyperbolic black holes, black strings, black branes, blackfolds, and black rings (see~\cite{Emparan:2008eg} for a review).  In such cases, the first law naturally  may incorporate a spectrum of charges associated with the non-trivial topology of the horizon cross-section  (see section \ref{sssec:non-compact}).

Third, apart from deriving the electric potential-charge contributions to the first law, we also incorporate magnetic monopole charges associated to abelian gauge fields in a fully covariant manner. 
Electric charges arise from flux integrals of the Hodge dual of the field strength,
$
    Q \propto \int \!\star \mathbf{F},
$ 
whereas magnetic   charges are obtained from integrals of the field strength itself,
$ 
    P \propto \int \!\mathbf{F}.
$
The latter are nonzero only when the field strength is closed but not globally exact, 
\(\dd \mathbf{F} = 0\) with \(\mathbf{F} \neq \dd \mathbf{A}\). This means there is an associated 
  non-trivial principal fiber bundle  for one-form gauge fields and     a non-trivial ``gerbe'' (higher principal bundle) for higher-form abelian gauge fields. 
In this bundle  formulation, the field strength --- the curvature of the  bundle/gerbe   --- remains smooth everywhere, 
while the gauge potential --- the connection  ---  can be defined only locally on overlapping patches related by gauge transformations. 
In our previous work~\cite{Visser:2024pwz}, we assumed instead that the gauge potential was a single globally defined smooth field on spacetime. 
This assumption restricts the gauge bundle to be topologically trivial, so that the field strength is globally exact, 
\(\mathbf{F} = \dd \mathbf{A}\), and no magnetic flux can exist. 
Consequently, magnetic monopoles --- and their associated potential--charge terms in the first law --- were   absent in that framework. 

In the present work, we relax this globally exactness condition and develop a bundle-covariant and gauge-invariant prescription for magnetic potential-charge terms that extends the covariant phase space formalism to non-trivial gauge bundles  (a similar framework was developed for non-abelian gauge fields by Prabhu \cite{Prabhu:2015vua}).  In our approach, the magnetic contribution to the first law is obtained by fixing the so-called Jacobson-Kang-Myers ambiguity of covariant phase space forms~\cite{Jacobson:1993vj}, which   affects the definition of the improved Noether charge. This ambiguity is an exact term, which, when integrated over a horizon cross-section, localises on a codimension-three boundary of overlapping patches, if the gauge potential is not smooth on the entire   cross-section. We show the ambiguity can be   fixed by the requirement of bundle covariance. The resulting expression for the magnetic term is patch- and gauge-independent, and correctly reduces to zero when the   bundle is trivial (see section \ref{ssec:magnetic}).

The rest of the paper is organised as follows. We first summarise the  non-stationary comparison first law with an electric potential-charge term for the simpler case of compact horizon slices,   illustrating our main results. In section \ref{sec:geometric} we describe the geometric setup in more detail, collect some facts about $p$-form gauge fields and the   electric potential $(p-1)$-form on Killing horizons, and state the asymptotic fall-off conditions. Further, in section \ref{sec:gaugedepen} we work out the covariant phase space formalism for Lagrangians that depend  on abelian gauge field strengths, where we distinguish between gauge-dependent and gauge-invariant parts of covariant phase space quantities. The   comparison first law is derived  in section \ref{sec:cfl}, and the physical process first law     in section \ref{sec:ppfl}. Next, in section \ref{sec:ambi} we describe the differential form ambiguities in the covariant phase space quantities, and we show for non-compact horizons that the dynamical black hole entropy is unambiguous, assuming appropriate  boundary conditions at the asymptotic ends of the extended horizon directions. In section \ref{sec:exam} we illustrate the potential-charge term for three different black objects, and we end with an outlook for future research. We have three appendices where we collect some technical details of our derivations. 

\newpage
\subsection{Summary of results}
For non-stationary first-order perturbations of   stationary black holes in $D$ spacetime dimensions with  compact horizon cross-section~$\tilde{\mathcal C}$,   the  comparison version of the    first law reads
\begin{equation} \label{mainresultfirstlaw}
    T \delta S_{\text{dyn}} = \delta M - \sum_I \Omega_I \delta J_I - \sum_\sigma \bar \Phi_\sigma \delta Q_\sigma\,,
\end{equation}
where $\Omega_I $ is the angular velocity conjugate to the angular momentum $J_I$,  and  $\bar \Phi_\sigma \equiv \Phi_\sigma^{\mathcal H^+} - \Phi_\sigma^\infty$ is the electric potential difference between the value at the horizon and at infinity,  respectively defined as
\begin{equation}
\Phi_\sigma^{\mathcal{H}^+} = \int_{\tilde \sigma} \mathbf{\Phi} |_{\tilde{\mathcal C}}   \qquad \text{and}\qquad \Phi_\sigma^{\infty} = \int_{\tilde \sigma_\infty} \mathbf{\Phi} |_{\mathcal S_\infty} \,,
\end{equation}
with $\mathbf{\Phi} = - \xi\cdot \mathbf{A} + \bm{\lambda}_\xi$  being the electric potential $(p-1)$-form, associated to the $p$-form gauge field~$\mathbf A$.
The conjugate charges $Q_\sigma$ are the higher-form electric charges associated to  $\mathbf A$:
\begin{equation}
    Q_\sigma = - \int_\sigma \mathbf \Upsilon \,.
\end{equation}
Here,   $\mathbf{\Upsilon}$ is an electric flux density ($D-p-1$)-form, obtained from the variation of the action with respect to the field strength (see  \eqref{upsilon} for a definition). Since $\mathbf \Upsilon$ only depends on gauge-invariant fields, the electric charge is gauge invariant. Further, the potential form  $\bm \Phi$ is gauge invariant up to an exact form, which vanishes when integrated over the compact horizon cycle $\tilde \sigma$, hence the potentials are also gauge invariant. 

The definitions of the different cycles are as follows. The cycle $\sigma $ denotes a non-trivial ($D-p-1$)-cycle of   the horizon cross-section $\tilde{\mathcal C}$, that is dual (by Poincar\'e duality) to the harmonic basis form   $\hat{\bm{\eta}}_\sigma$ of   $\mathbf \Phi $. Further, we assume that $\sigma$ is homologous,  inside a spacelike hypersurface with horizon boundary $\tilde{\mathcal C}$ and asymptotic boundary $\mathcal S_\infty$, to a cycle at spatial infinity $\mathcal S_\infty$. This implies the charge variations at the horizon and at infinity are the same (denoted as $\delta Q_\sigma$)  due to Gauss's law. Moreover,  
  $\tilde \sigma$ is a     $(p-1)$-cycle  of $\tilde{\mathcal C}$ that is (Poincar\'e) dual to $\star_{\tilde{\mathcal C}} \hat{\bm \eta}_\sigma$, the Hodge dual of $\hat{\bm \eta}_\sigma$ within~$\tilde{\mathcal C}$,  and $\tilde \sigma_\infty$ is a ($p-1$)-cycle of $\mathcal S_\infty$ that is dual to $\star_{\mathcal S_\infty}\hat{\bm \eta}_{\sigma_\infty}$. The number of potential-charge pairs is equal to the number of non-trivial ($p-1$)-cycles, and is   counted by the $(p-1)$-th Betti number $b_{p-1}$ of the homology group $H_{p-1} (\tilde{\mathcal{C}})$ of the horizon cross-section. By Poincar\'{e} duality this is equal to the Betti number $b_{D-p-1}$ (see section \ref{sssec:compact}). 

As a trivial illustration, for an electrically charged, spherically symmetric black hole in Einstein-Maxwell theory, the potential reduces   to the standard scalar electrostatic potential (since $p=1$). 
The electric charge   is given by the Gauss-law flux of the $(D-2)$-form \( \mathbf \Upsilon \propto -\star \mathbf  F \) through a horizon cross-section \( \sigma = S^{D-2}_{\mathcal{H}} \). 
Because the Maxwell field satisfies the sourceless equation \( \dd(\star \mathbf F) = 0 \), the charge is conserved and can equivalently be evaluated over the \((D-2)\)-sphere at infinity, 
thereby reproducing the standard form of the first law for charged black holes (see section \ref{sec:examplA}).

A more interesting example is   a five-dimensional charged black ring solution to Einstein-Maxwell-Dilaton theory with   a two-form gauge field~\cite{Emparan:2004wy,Emparan:2001wn}. In this case, the electric potential $\mathbf \Phi$ is a one-form and the electric flux density $\mathbf \Upsilon$ is a two-form. The black ring has horizon cross-section $\tilde{\mathcal C} =  S^1\times S^2$, so the non-trivial cycles of $\tilde{ \mathcal C}$ are $\tilde \sigma =S^1$ and $\sigma = S^{2}$. The non-trivial $S^2$ cycle then supports a local dipole charge on the horizon.  Since spatial infinity has topology of $S^3$, which does not contain non-trivial  two-cycles (or one-cycles), the charge has no asymptotic counterpart, yet the horizon charge contributes to the thermodynamics of the black ring (see section \ref{sec:blackrings}). 

Another,  more involved example is a six-dimensional doubly-spinning toroidal blackfold with horizon topology $T^2 \times S^2 = S^1_a \times S^1_b \times S^2$ that is charged under a two-form gauge field \cite{Emparan:2009at,Emparan:2009vd}. The electric flux density $\mathbf \Upsilon$ is now a three-form that can be integrated over two different non-trivial three-cycles. The two non-trivial cycles dual to the harmonic part of $\mathbf \Phi$ are $\sigma_a = S^1_a \times S^2$ and $\sigma_b = S_b^1 \times S^2$, which give rise to two   species of charges, $Q_a$ and $Q_b$, and two conjugate electric potentials that are integrated over the remaining circle, $S_b^1$ and $S_a^1$, respectively. Indeed, the first Betti number of $T^2 \times S^2$ is two, so there are two potential-charge pairs (see section \ref{compacthorizontopologies}).

A noteworthy feature of our derivation of the first law is that terms of the form  $Q \delta \Phi$ --- actually, the integral of $\mathbf{\Upsilon} \wedge \delta (\xi \cdot \mathbf{A})$ --- cancel from the final expression, in agreement with earlier work for stationary perturbations \cite{Bardeen:1973gs,Gao:2003ys,Compere:2006my}. This cancellation closely parallels the cancellation of the 
$\delta \kappa$ term in the non-stationary first law that we identified in our previous work \cite{Visser:2024pwz} and that was observed  for stationary perturbations in the original work \cite{Bardeen:1973gs}. As a consequence, the first law can be written solely in terms of well-defined conjugate pairs, with variations acting on only one member of each pair. This structure highlights the thermodynamic consistency of the first law.

Crucially, one of our main results is that the dynamical black hole entropy in \eqref{mainresultfirstlaw} is defined as the gauge-invariant part of the improved Noether charge 
\begin{equation}   \label{finaldynengaugeinvariant}
    S_{\text{dyn}} = \int_{\tilde{\mathcal C}} \frac{2\pi}{\kappa_3} (\mathbf Q_\xi^{\text{GI}} - \xi \cdot \mathbf B^{\text{GI}}_{\mathcal H^+})\,.
\end{equation}
The Noether charge $ \mathbf Q_\xi$ and the horizon form $\mathbf B_{\mathcal H^+}$ can in general have gauge-dependent terms, so the proposal    is that only the gauge-invariant terms --- which depend, among others, on $\mathbf F$ and not on~$\mathbf A$ ---  contribute to $S_{\text{dyn}}$.  Hence, our definition of dynamical black hole entropy coincides with the HWZ proposal \eqref{improvednoetherdynentr} in the purely gravitational case, while for non-minimally coupled gauge fields it provides a natural gauge-invariant extension, consistent with the physical requirement that thermodynamic entropy itself must be gauge invariant. Moreover, for Killing horizons   formula \eqref{finaldynengaugeinvariant} reduces to   Wald's Noether charge horizon entropy, which is manifestly gauge invariant since it only depends on functional derivatives of the gauge-invariant Lagrangian  \cite{Wald:1993nt,Iyer:1994ys}. Finally, since \eqref{finaldynengaugeinvariant}  depends only on gauge-invariant quantities (which are smooth on the horizon), the results from  our previous paper  \cite{Visser:2024pwz} imply that the dynamical black hole entropy is universally related to the Wall entropy through \eqref{dynamicalentropygr}. 

\subsection{Conventions}

 Our conventions mainly follow those in Wald's textbook \cite{Wald:1984rg}. We assume a mostly positive signature metric, and work in $D$ spacetime dimensions. We use Latin letters $a,b,c,\dots$ to denote abstract spacetime indices, and $i,j,k, \dots$ to denote codimension-2 spatial indices. For non-compact horizon cross-sections, we use Greek letters $\alpha_1, \alpha_2,\cdots, \beta_1,\beta_2,\cdots$ to denote the indices of compact directions, while we employ Hebrew letter $\aleph_1,\aleph_2,\cdots$ to denote the indices of the non-compact directions.  On the horizon,  $k^a$ and  $l^a$ are the tangent vectors to  the outgoing and ingoing affinely parameterised null geodesics orthogonal to the cross-sections. We also introduce Gaussian null coordinates $(v,u,x^i)$ near the future event horizon. Then, on the horizon, $k^a = (\partial_v)^a$, $l^a = (\partial_u)^a$. For differential forms, we use boldface notation (e.g.~$\mathbf X$), but we write them in normal font when their indices are explicitly shown (e.g.~$X_{a_1 \cdots a_p}$). We use $V \cdot \mathbf X$ to denote the contraction of a vector $V^a$ with the first index of $\mathbf X$, and we similarly write $\mathbf X \cdot V$ to denote the contraction of the last index. On the horizon, the orientation of the spacetime volume form is chosen to be $\bm \epsilon = \bm k \wedge \bm l \wedge \bm \epsilon_{\mathcal C}$, where $\bm \epsilon_{\mathcal C}$ is the spatial codimension-2 volume (area) form of a cross-section. Our convention for the Hodge dual of a $p$-form $\mathbf X$ is 
\begin{equation}
    (\star X)_{a_1 \cdots a_{D-p}} = \frac{1}{p!} \epsilon_{a_1 \cdots a_{D-p} b_1 \cdots b_p} X^{b_1 \cdots b_p}.
\end{equation}
Note that the Hodge dual carries the volume factor $\sqrt{-g}$. Under this convention, the contraction $A_{a_1 \cdots a_p} B^{a_1 \cdots a_p}$ can be expressed as $A_{a_1 \cdots a_p} B^{a_1 \cdots a_p} \bm \epsilon = p! {\star}\mathbf A \wedge \mathbf B$. Finally, regarding our choice of   units, we keep Newton's constant $G$ explicitly, while setting $\hbar = c = k_B = 1$ throughout this paper.

\section{Geometric setup}
\label{sec:geometric}

\subsection{Bifurcate Killing horizon, Gaussian null coordinates and non-stationary perturbations}
\label{sec:setup}

We consider an asymptotically flat or anti-de Sitter (AdS),  stationary background geometry $(\mathcal{M}, g)$ in $D$ spacetime dimensions with a bifurcate Killing horizon $\mathcal H.$ The Killing vector field that becomes null on the Killing horizon is denoted by~$\xi^a.$ The Killing horizon is comprised of two null surfaces, $\mathcal H^+$ (the future horizon) and $\mathcal H^-$ (the past horizon), that intersect at the bifurcation surface $\mathcal B$ (see figure \ref{fig:gnc}). We assume the future Killing horizon is    orientable and connected but not necessarily simply connected, and we do not consider  topology-changing processes (such as mergers), since we are studying  first-order  perturbations around a stationary background. This means the future horizon $\mathcal{H}^+$ has topology   $\mathcal{C} \times \mathbb{R}$, where $\mathcal{C}$ is a codimension-2 spatial cross-section of the horizon and $\mathbb{R}$ is the range of  the  null parameter.   Furthermore, we allow the horizon cross-sections to be   non-compact, namely $\mathcal{C} = \tilde{\mathcal{C}} \times \Sigma_k$, where $\tilde{\mathcal{C}}$ is a compact subspace and $\Sigma_k$ is a non-compact subspace with   trivial topology $\mathbb R^k$, such as flat space or hyperbolic space. The compact subspace is assumed to possess no boundary, i.e., there is no boundary at a finite distance. Moreover, we allow the topology of spatial infinity to be different from the horizon topology, as long as   the AdS or flat asymptotic structure is kept intact. Thus, our work  applies to stationary black objects with a   Killing horizon, including asymptotically flat or AdS black holes with spherical, planar, or hyperbolic horizons, as well as extended and higher-dimensional objects such as black branes, black strings, and black rings. This generalises previous studies of dynamical gravitational entropy \cite{Hollands:2024vbe,Visser:2024pwz}, which were restricted to black hole geometries with  compact horizon topology.   

The field content under consideration in this paper is: the metric field $g$, a single $p$-form abelian gauge field~$\mathbf{A}$ --- although our framework also applies  to multiple gauge fields ---  and other gauge-invariant bosonic matter  fields $\{\psi\}$. These fields are (non-minimally) coupled in a gauge-invariant manner, i.e., only the field strength $\mathbf{F}=\dd{\mathbf{A}}$ of the gauge field appears in the Lagrangian. We assume that the background dynamical fields $g, \mathbf{F}, \{\psi\}$ are   smooth and   are preserved by the Killing flow generated by $\xi^a$, i.e., the background fields   obey the stationarity condition:
\begin{equation} \label{statcondition}
    \mathcal{L}_\xi g = 0\,, \qquad \mathcal{L}_\xi \mathbf{F} = 0\,, \qquad \mathcal{L}_\xi \psi = 0\,.
\end{equation}

As a powerful geometric tool, we introduce  \emph{Gaussian null coordinates} (GNC) which can be constructed around any null hypersurface.  We emphasise that GNC hugely simplifies the calculation of the horizon entropy in covariant phase space formalism due to the boost weight analysis (see below), but it is   not necessary to introduce these coordinate to derive the first law. In previous work, HWZ derived the dynamical black hole first law covariantly using a Killing field argument,   but it is not understood, as far as we know, how to generalise this in the presence of matter fields.

Specifically, we choose $v$ to be the affine   parameter of the null geodesic generators of $\mathcal{H}^+$, and $\{x^i\}$ with $ i = 1,\cdots, D-2$ to be the coordinates on the codimension-2 cross-section $\mathcal{C}$. At each point $(v,x^i)$ on $\mathcal{H}^+$, we  shoot an ingoing null geodesic off the horizon to extend it to a point $(v,u,x^i)$ whose affine distance from the horizon is $u$. Thus, near $\mathcal{H}^+$, we have   the GNC $(v,u,x^i)$ in affine parametrisation.  In GNC,  $\mathcal{H}^+$ is located at $u=0$ and $\mathcal H^-$ at $v=0$, and the bifurcation surface $\mathcal{B}$ is labelled by $u=v=0$. Now, the near-horizon line element reads
\begin{equation} 
\label{gnc}
    \dd{s^2} = - 2 \dd{u} \dd{v} - u^2 \alpha(v,u,x^i)\dd{v^2} - 2 u \omega_i(u,v,x^i) \dd{v} \dd{x^i} + \gamma_{ij}(v,u,x^i) \dd{x^i} \dd{x^j}\,,
\end{equation}
where $\alpha, \omega_i, \gamma_{ij}$ are functions of the coordinates $(v,u,x^i)$, which encode the dynamics of the metric.

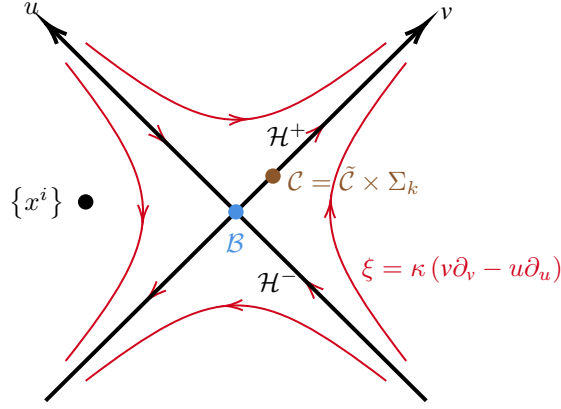
\begin{figure}[t!]
    \centering

\tikzset{every picture/.style={line width=0.75pt}} %set default line width to 0.75pt        

\begin{tikzpicture}[x=0.75pt,y=0.75pt,yscale=-1,xscale=1]
%uncomment if require: \path (0,300); %set diagram left start at 0, and has height of 300

%Curve Lines [id:da1495281382463718] 
\draw [color={rgb, 255:red, 208; green, 2; blue, 27 }  ,draw opacity=1 ]   (285.5,71) .. controls (234.35,136.77) and (235.32,155.23) .. (285.5,221) ;
\draw [shift={(247.78,140.31)}, rotate = 92.42] [color={rgb, 255:red, 208; green, 2; blue, 27 }  ,draw opacity=1 ][line width=0.75]    (8.74,-2.63) .. controls (5.56,-1.12) and (2.65,-0.24) .. (0,0) .. controls (2.65,0.24) and (5.56,1.12) .. (8.74,2.63)   ;
%Straight Lines [id:da8403307211731619] 
\draw [color={rgb, 255:red, 208; green, 2; blue, 27 }  ,draw opacity=1 ]   (156.91,189.59) -- (244.09,102.41) ;
\draw [shift={(245.5,101)}, rotate = 135] [color={rgb, 255:red, 208; green, 2; blue, 27 }  ,draw opacity=1 ][line width=0.75]    (10.93,-3.29) .. controls (6.95,-1.4) and (3.31,-0.3) .. (0,0) .. controls (3.31,0.3) and (6.95,1.4) .. (10.93,3.29)   ;
\draw [shift={(155.5,191)}, rotate = 315] [color={rgb, 255:red, 208; green, 2; blue, 27 }  ,draw opacity=1 ][line width=0.75]    (10.93,-3.29) .. controls (6.95,-1.4) and (3.31,-0.3) .. (0,0) .. controls (3.31,0.3) and (6.95,1.4) .. (10.93,3.29)   ;
%Straight Lines [id:da936064257739442] 
\draw [color={rgb, 255:red, 208; green, 2; blue, 27 }  ,draw opacity=1 ]   (166.91,112.41) -- (234.09,179.59) ;
\draw [shift={(234.09,179.59)}, rotate = 45] [color={rgb, 255:red, 208; green, 2; blue, 27 }  ,draw opacity=1 ][line width=0.75]    (10.93,-3.29) .. controls (6.95,-1.4) and (3.31,-0.3) .. (0,0) .. controls (3.31,0.3) and (6.95,1.4) .. (10.93,3.29)   ;
\draw [shift={(166.91,112.41)}, rotate = 225] [color={rgb, 255:red, 208; green, 2; blue, 27 }  ,draw opacity=1 ][line width=0.75]    (10.93,-3.29) .. controls (6.95,-1.4) and (3.31,-0.3) .. (0,0) .. controls (3.31,0.3) and (6.95,1.4) .. (10.93,3.29)   ;
%Straight Lines [id:da5140583194575836] 
\draw [line width=1.5]    (105.5,241) -- (293.38,53.12) ;
\draw [shift={(295.5,51)}, rotate = 135] [color={rgb, 255:red, 0; green, 0; blue, 0 }  ][line width=1.5]    (11.37,-3.42) .. controls (7.23,-1.45) and (3.44,-0.31) .. (0,0) .. controls (3.44,0.31) and (7.23,1.45) .. (11.37,3.42)   ;
%Straight Lines [id:da16060188652359364] 
\draw [line width=1.5]    (107.62,53.12) -- (295.5,241) ;
\draw [shift={(105.5,51)}, rotate = 45] [color={rgb, 255:red, 0; green, 0; blue, 0 }  ][line width=1.5]    (11.37,-3.42) .. controls (7.23,-1.45) and (3.44,-0.31) .. (0,0) .. controls (3.44,0.31) and (7.23,1.45) .. (11.37,3.42)   ;
%Straight Lines [id:da3470047648615897] 
\draw [color={rgb, 255:red, 74; green, 144; blue, 226 }  ,draw opacity=1 ]   (200.5,146) ;
\draw [shift={(200.5,146)}, rotate = 0] [color={rgb, 255:red, 74; green, 144; blue, 226 }  ,draw opacity=1 ][fill={rgb, 255:red, 74; green, 144; blue, 226 }  ,fill opacity=1 ][line width=0.75]      (0, 0) circle [x radius= 3.35, y radius= 3.35]   ;
%Curve Lines [id:da6588331281109856] 
\draw [color={rgb, 255:red, 208; green, 2; blue, 27 }  ,draw opacity=1 ]   (115.5,71) .. controls (167.02,137.13) and (167.02,155.6) .. (115.5,221) ;
\draw [shift={(153.89,151.33)}, rotate = 272.1] [color={rgb, 255:red, 208; green, 2; blue, 27 }  ,draw opacity=1 ][line width=0.75]    (8.74,-2.63) .. controls (5.56,-1.12) and (2.65,-0.24) .. (0,0) .. controls (2.65,0.24) and (5.56,1.12) .. (8.74,2.63)   ;
%Curve Lines [id:da7328216190913879] 
\draw [color={rgb, 255:red, 208; green, 2; blue, 27 }  ,draw opacity=1 ]   (125.5,231) .. controls (191.63,180.21) and (210.1,180.58) .. (275.5,231) ;
\draw [shift={(195.08,193.34)}, rotate = 357.03] [color={rgb, 255:red, 208; green, 2; blue, 27 }  ,draw opacity=1 ][line width=0.75]    (8.74,-2.63) .. controls (5.56,-1.12) and (2.65,-0.24) .. (0,0) .. controls (2.65,0.24) and (5.56,1.12) .. (8.74,2.63)   ;
%Curve Lines [id:da26777784141884964] 
\draw [color={rgb, 255:red, 208; green, 2; blue, 27 }  ,draw opacity=1 ]   (125.5,61) .. controls (191.63,112.52) and (210.1,112.52) .. (275.5,61) ;
\draw [shift={(205.83,99.39)}, rotate = 177.9] [color={rgb, 255:red, 208; green, 2; blue, 27 }  ,draw opacity=1 ][line width=0.75]    (8.74,-2.63) .. controls (5.56,-1.12) and (2.65,-0.24) .. (0,0) .. controls (2.65,0.24) and (5.56,1.12) .. (8.74,2.63)   ;
%Straight Lines [id:da46651047862974826] 
\draw    (125.5,141) ;
\draw [shift={(125.5,141)}, rotate = 0] [color={rgb, 255:red, 0; green, 0; blue, 0 }  ][fill={rgb, 255:red, 0; green, 0; blue, 0 }  ][line width=0.75]      (0, 0) circle [x radius= 3.35, y radius= 3.35]   ;
%Straight Lines [id:da6448633181680028] 
\draw [color={rgb, 255:red, 139; green, 87; blue, 42 }  ,draw opacity=1 ]   (219,128) ;
\draw [shift={(219,128)}, rotate = 0] [color={rgb, 255:red, 139; green, 87; blue, 42 }  ,draw opacity=1 ][fill={rgb, 255:red, 139; green, 87; blue, 42 }  ,fill opacity=1 ][line width=0.75]      (0, 0) circle [x radius= 3.35, y radius= 3.35]   ;
%Shape: Rectangle [id:dp7822603356258546] 
\draw  [draw opacity=0] (23.5,162.5) -- (93.5,162.5) -- (93.5,202.5) -- (23.5,202.5) -- cycle ;

% Text Node
\draw (305.5,46) node    {$v$};
% Text Node
\draw (98.5,43.5) node    {$u$};
% Text Node
\draw (226,107) node    {$\mathcal{H}^{+}$};
% Text Node
\draw (200,161.5) node  [color={rgb, 255:red, 74; green, 144; blue, 226 }  ,opacity=1 ]  {$\mathcal{B}$};
% Text Node
\draw (313.5,175) node  [color={rgb, 255:red, 208; green, 2; blue, 27 }  ,opacity=1 ]  {$\xi =\kappa \left( v\partial_v -u \partial_u\right)$};
% Text Node
\draw (101,141.5) node    {$\left\{x^{i}\right\}$};
% Text Node
\draw (259.5,130) node  [color={rgb, 255:red, 139; green, 87; blue, 42 }  ,opacity=1 ]  {$\mathcal{C} =\tilde{\mathcal{C}} \times \Sigma _{k}$};
% Text Node
\draw (221.5,180.5) node    {$\mathcal{H}^{-}$};

\end{tikzpicture}

    \caption{Bifurcate Killing horizon $\mathcal H$, consisting of a future horizon $\mathcal H^+$ at $u=0$ and a past horizon $\mathcal H^-$ at $v=0$ that intersect at the bifurcation surface $\mathcal B$. The Gaussian null coordinate system around $\mathcal H^+$  is $(v,u,x^i)$, where $v$ is the affine parameter on the null geodesics of $\mathcal H^+$, with $\mathcal B$ at $v=0$, $u$ is the affine null geodesic distance  away from $\mathcal H^+$, and $x^i $ are the $(D-2)$ spatial coordinates on the cross-section of $\mathcal H^+$. This cross-section $\mathcal C$  is the product of a compact subspace $\tilde{\mathcal C}$ and a non-compact manifold $\Sigma_k$.}
    \label{fig:gnc}
\end{figure}

In GNC the horizon generating  Killing field takes the form  
\begin{equation}
    \xi = \kappa (v \partial_v - u \partial_u)\,,
\end{equation}
where $\kappa$ is the surface gravity of the bifurcate Killing horizon. Hence, $\xi^a$ acts as a local Lorentz boost generator near the horizon.  At spatial infinity, for   stationary, axisymmetric black objects,   the horizon Killing vector field may be normalised  as   $\xi = \partial_t + \Omega_I \partial_{\vartheta^I},$ where $t$ and $\vartheta^I$ are the time and angular coordinates, respectively. In the absence of axisymmetry, the angular velocities $\Omega_I$ vanish.

For any quantity $X$, we   define its \emph{boost weight} (or Killing weight) in affine GNC as follows: under a rigid boost $u \to au, v \to v/a$ with constant $a$, a weight-$w$ quantity $X$ transforms as 
\begin{equation}
    X \to a^w X\,.
\end{equation}
For a weight-$w$ GNC component $T_{(w)}$ of a tensor field $T$, we can see that its weight $w$ is equal to the number of $v$-indices minus the number of $u$-indices, with all indices lowered. For example, $E_{vv}$ has weight 2, $X_{uiuj}$ has weight $-2$, $T_{uvj}$ has weight 0, et cetera. For more  details on the boost weight analysis we refer the reader to \cite{Bhattacharyya:2021jhr,Hollands:2022fkn,Wall:2024lbd}. 

The boost weight analysis of tensor components leads to the following powerful result that underlies many key proofs in   our previous work \cite{Visser:2024pwz} and in  this paper:
\begin{thm}
    For a tensor field $T$ that is preserved by the   flow generated by the horizon Killing field $\xi$, i.e., $\mathcal{L}_\xi T = 0$, and that is regular on the horizon $\mathcal{H}^+$, its GNC components $T_{(w)}$ with positive boost weight $w > 0$ vanish  on $\mathcal{H}^+$. \label{thm:boost-weight}
\end{thm}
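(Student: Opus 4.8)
The plan is to exploit the fact that in GNC the horizon Killing field generates the rigid boost $(v,u,x^i)\mapsto(e^{\kappa s}v,e^{-\kappa s}u,x^i)$, and to convert the stationarity condition $\mathcal{L}_\xi T=0$ into a simple ordinary differential equation along the affine parameter $v$ for each component. Since $\mathcal{L}_\xi g=0$, I may lower all indices with the metric without changing the boost weight, so I would work throughout with fully covariant components $T_{(w)}$, consistent with the convention stated in the excerpt that $w$ counts lowered $v$-indices minus lowered $u$-indices.

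First I would compute the Lie derivative component-wise. Because $\xi^v=\kappa v$, $\xi^u=-\kappa u$ and $\xi^i=0$, the Jacobian $\partial_a\xi^b$ is diagonal, with $\partial_v\xi^v=\kappa$ and $\partial_u\xi^u=-\kappa$, so no mixing occurs between components of different index type. For a component carrying $n_v$ lower $v$-indices and $n_u$ lower $u$-indices, the transport term $\xi^c\partial_c$ gives $\kappa(v\partial_v-u\partial_u)$, while each $v$-index contributes $+\kappa$ and each $u$-index $-\kappa$ through the inhomogeneous piece, so that
\begin{equation}
(\mathcal{L}_\xi T)_{(w)}=\kappa\left(v\,\partial_v-u\,\partial_u+w\right)T_{(w)}\,,\qquad w=n_v-n_u\,.
\end{equation}
Setting this to zero yields $(v\partial_v-u\partial_u+w)T_{(w)}=0$ throughout the near-horizon region.

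Next I would restrict this equation to the future horizon $u=0$. This is where regularity of $T$ on $\mathcal{H}^+$ enters for the first time: smoothness of $T$ means the coordinate components $T_{(w)}$ are smooth functions of $(v,u,x^i)$, so $\partial_u T_{(w)}$ stays bounded as $u\to 0$ and the term $u\,\partial_u T_{(w)}$ drops out on the horizon. What survives is the ordinary differential equation $v\,\partial_v f+w\,f=0$ for $f(v)\equiv T_{(w)}(v,0,x^i)$, whose general solution is $f(v)=C(x^i)\,v^{-w}$. Finally I would invoke regularity at the bifurcation surface $\mathcal{B}$, located at $v=0$ within $\mathcal{H}^+$, to fix the integration constant: for $w>0$ the solution $C(x^i)v^{-w}$ diverges as $v\to0$ unless $C(x^i)=0$, whereas boundedness of the smooth component at $\mathcal{B}$ forbids such growth, so $C\equiv0$ and $T_{(w)}$ vanishes identically on $\mathcal{H}^+$.

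The main obstacle is really the careful use of regularity, which is needed \emph{twice} and whose two uses are both essential. Boundedness of the transverse derivative is what legitimately kills the $u\,\partial_u$ term on the horizon, and boundedness at $\mathcal{B}$ is what discards the homogeneous solution $v^{-w}$. Equivalently, one can run the argument at the level of the finite flow, where $\phi_s^*T=T$ forces $v^{w}T_{(w)}(v,0,x^i)$ to be constant along each boost orbit; the entire content of the theorem is then that these orbits close down onto the smooth bifurcation surface, so a positive-weight component cannot carry the compensating $v^{-w}$ growth. I would also note, for completeness, that components with $w\le 0$ are genuinely unconstrained by this argument (they admit bounded solutions $C\,v^{|w|}$ or $v$-independent data), which is exactly why the statement is restricted to $w>0$.
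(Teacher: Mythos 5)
Your proof is correct and follows essentially the same route as the standard argument the paper defers to (it cites Wall for the proof), namely converting $\mathcal{L}_\xi T=0$ into the component equation $(v\partial_v-u\partial_u+w)T_{(w)}=0$, solving $v\partial_v f+wf=0$ on $u=0$ to get $f\propto v^{-w}$, and using regularity at the bifurcation surface to kill the coefficient for $w>0$ --- this is exactly the computation the paper itself performs for the gauge field in section 2.2, where the non-smooth $v^{-w}$ profile is retained precisely because regularity is \emph{not} assumed there. Your two explicit invocations of regularity (boundedness of $\partial_u T_{(w)}$ to drop the $u\partial_u$ term, and boundedness at $v=0$ to discard the $v^{-w}$ solution) are both correctly placed.
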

\noindent A proof can be found in various works, for instance, in \cite{Wall:2024lbd}. \\

Furthermore, to study the dynamical entropy of black objects and the associated thermodynamic first law   with $p$-form charges, we consider first-order non-stationary perturbations 
\begin{equation}
    g \to g + \delta g\,, \qquad \mathbf{A} \to \mathbf{A} + \delta \mathbf{A}\,, \qquad \psi \to \psi + \delta \psi\,.
\end{equation}
These perturbations break the background Killing symmetry in the sense that   $\delta (\mathcal L_\xi g) \neq 0$, $\delta (\mathcal L_\xi \mathbf{A}) \neq 0$ and $\delta (\mathcal L_\xi \psi) \neq 0$. Additionally, we allow for variations of the Killing field $\xi$, i.e. $\delta \xi \neq 0 $, but it turns out they do not contribute to the first law, as shown in detail in \cite{Visser:2024pwz}. There are no gauge conditions imposed on $\delta \mathbf{A}$ and $\delta \psi$ at the horizon, but we do impose asymptotic fall-off conditions (see section \ref{sec:falloff}). Further,  we require the perturbations of the metric $g$ to preserve the form of the line element in Gaussian null coordinates, that is, we fix the GNC system $ (v,u, x^i)$ and set 
\begin{equation}
    \delta g_{ua} = 0, \qquad \delta g_{vv} \sim \mathcal{O}(u^2), \qquad \delta g_{vi} \sim \mathcal{O}(u).
\end{equation}
Note this implies $\delta g_{vv} = \delta g_{vi} \fheq 0$. In other words,  all the perturbations of the metric only vary the functions 
\begin{equation}
    \alpha \to \alpha + \delta \alpha\,, \qquad \omega_i \to \omega_i + \delta \omega_i\,, \qquad \gamma_{ij} \to \gamma_{ij} + \delta \gamma_{ij}\,.
\end{equation}

\subsection{Non-smooth $p$-form gauge field on a Killing horizon}
\label{sec:nonsmoothness}
    
Below we explain what type of divergence  the pullback of  a $p$-form abelian gauge field $\mathbf{A}$ can have on a Killing horizon, and why such a  singular gauge field has   an associated non-vanishing, regular electrostatic  potential form  on the horizon. 

As stated above, we assume the   curvature   $(p+1)$-form $\mathbf{F} = \dd{\mathbf{A}}$ obeys the  stationarity condition
    \begin{equation}
        \mathcal L_\xi \mathbf F = 0\,.
    \end{equation}
  In terms of the gauge field this reads 
    \begin{equation}
        \mathcal L_\xi \mathbf F = \mathcal L_\xi (\dd{\mathbf A}) = \dd{(\mathcal L_\xi \mathbf A)} = 0\,,
    \end{equation}
    where we   used that the Lie derivative  and exterior derivative commute. This means the gauge field is preserved by the Killing flow generated by $\xi$ up to a pure gauge factor
    \begin{equation} 
        \mathcal L_\xi \mathbf A = \dd{\bm \lambda_\xi} \,,\label{eq:A-Killing}
    \end{equation}
    where $\bm \lambda_\xi$ is a $(p-1)$-form gauge parameter \cite{Biswas:2022grc}. Notice that this gauge parameter is only defined up to an exact term, so any $\tilde{\bm \lambda}_\xi = \bm \lambda_\xi + \dd{\bm \alpha}$ for some $\bm \alpha$ satisfies the same condition \eqref{eq:A-Killing}. 

    To study the potentially divergent profile of the gauge potential, we simplify  equation \eqref{eq:A-Killing} by performing the following gauge transformation:
    \begin{equation}
        \mathbf{A} \to \tilde{\mathbf{A}} = \mathbf{A} + \dd{\bm \Lambda} \qquad \text{with} \qquad \dd{(\bm \lambda_\xi + \mathcal{L}_\xi \bm \Lambda)} = 0\,.
    \end{equation}
    In terms of $\tilde{\mathbf{A}}$, the stationarity condition \eqref{eq:A-Killing} reads
    \begin{equation} \label{killingauge}
        \mathcal L_\xi \tilde{\mathbf A }= 0\,.
    \end{equation}
    We call this the \emph{Killing gauge} for the gauge field. We will employ this gauge to show that the gauge field can be non-smooth on the Killing horizon, however in the rest of the paper we do not work with this particular gauge (or any other gauge choice for that matter). 

    In Gaussian null coordinates we can write the Killing gauge \eqref{killingauge} for a weight-$w$ component of $\tilde{\mathbf{A}}$ as \cite{Bhattacharyya:2021jhr} 
   \begin{equation}
        (v \partial_v - u \partial_u + w) \tilde A_{(w)} = 0\,.
    \end{equation}
    On the future horizon ($u=0$) the solution takes the form \cite{Wall:2024lbd}
    \begin{equation}
        \tilde A_{(w)}(v,0,x^i) \fheq f_{w}(x^i) v^{-w}\,,
    \end{equation}
    for some function $f_w(x^i)$ of the spatial directions only. Note that the gauge field components with $w > 0$ blow up at the bifurcation surface ($u=v=0$) for $f_w(x^i) \neq 0$. Due to the antisymmetry of the differential form~$\tilde{\mathbf{A}}$, there can be at most one $v$-index on any non-vanishing component of~$\tilde{\mathbf{A}}$. Therefore, the highest-weight components  have $w=1$ and they exhibit a $1/v$ divergence
   at the bifurcation surface (where $v=0$): \begin{equation} \label{divgauge}
        \tilde A_{v i_1 \cdots i_{p-1}} \fheq f_{i_1 \cdots i_{p-1}}(x^i) \frac{1}{v}\,,
    \end{equation}
    with $i_1, \cdots, i_{p-1}$ labeling different spatial directions. This is the only possible type of divergence of the pullback of the gauge field in the Killing gauge to a future Killing horizon.   Similarly, on the past Killing horizon in the Killing gauge, the pullback of the gauge field scales as $1/u$ and thus exhibits a divergence at the bifurcation surface, where $u=0$. 

    In previous works \cite{Hollands:2024vbe,Visser:2024pwz}  on dynamical black hole entropy, all the dynamical fields,  including the gauge fields, were assumed to be smooth on the horizon. Such a requirement would impose $f_{w=1}(x^i)= 0$ on the horizon. In this paper  we weaken this assumption by requiring  only the gauge-invariant fields  to be regular on the horizon. Hence, we relax the regularity condition for gauge fields, as long as the physical field strength    derived from it is smooth on the Killing horizon. 
    Requiring $\mathbf{F} = \dd\mathbf{A}$ to be smooth on the horizon implies, by theorem \ref{thm:boost-weight}, that all components of $\mathbf{F}$ with positive boost weight must vanish at $v = 0$. In particular, the weight-1 components of~$\mathbf{F}$, which could receive divergent contributions from $1/v$ terms in $\mathbf{A}$, must vanish on the horizon. This imposes a curl-free condition on the spatial functions $f_{i_1 \cdots i_{p-1}}(x^i)$ appearing in the divergent part of the gauge potential:
    \begin{equation}
        \partial_{[i_1} f_{i_2 \cdots i_{p}]} \fheq 0\,.
    \end{equation}
    This condition does not eliminate $f_{i_1 \cdots i_{p-1}}$ altogether, hence the $1/v$ divergence in the gauge field components could still be present at the bifurcation surface, while the physical field strength is smooth on the horizon. 
    
    In general, the weight-1 components of a gauge field that satisfy the stationarity  condition \eqref{eq:A-Killing} have the following irregular profile:
    \begin{equation}
        A_{v i_1 \cdots i_{p-1}} \fheq f_{i_1 \cdots i_{p-1}}(x^i) \frac{1}{v} - (\dd{\Lambda})_{v i_1 \cdots i_{p-1}}.
    \end{equation}
    In principle, one could choose the gauge parameter $\bm \Lambda$ such that $\mathbf{A}$ is regular at $v=0$, however, the price to pay is that we need to work in a particular gauge with irregular $\bm \Lambda$.  We can also choose a gauge parameter $\bm \Lambda$ such that $\mathbf A$ has a stronger $1/v^n$ divergence at $v=0$, while the field strength is still smooth.

    Physically, the divergence  of the pullback of the gauge field at the bifurcation surface in the Killing gauge is important because it leads to a non-vanishing, finite electric potential on the   horizon. For instance, for charged black hole solutions to Einstein-Maxwell theory, the electric potential  associated to the gauge field $\tilde{\mathbf{A}}$  is  
    \begin{equation} \label{potential}
        \Phi \equiv - \xi \cdot \tilde{\mathbf{A}} \Big |_{\mathcal H^+} = - \kappa v \tilde A_v\,.
    \end{equation}
    A regular background gauge potential (satisfying $\tilde A_v \fheq 0$) results in a vanishing electrostatic potential. To investigate the more interesting cases where horizons have non-zero electrostatic potentials (and their higher-form analogues), we work with a singular gauge potential profile   on the horizon. Interestingly, the electrostatic potential and its higher-form generalisation, $- \xi \cdot \tilde{\mathbf{A}}$, are always non-singular on the horizon. This is because at the future horizon the Killing vector field is linear in $v$, which cancels the $1/v$ divergence in $\tilde{\mathbf{A}}$. This   guarantees that   the charge term  $\Phi \delta Q$ appearing in the  first law of black hole thermodynamics is regular. 

 We should mention that the  divergence of the pullback of the gauge field at the bifurcation surface is not an artefact of working with affine GNC or the Killing gauge. This is true independent of the choice of coordinates and gauge for the following reason. We require that $\xi \cdot \mathbf A$ is nonzero on the horizon --- otherwise the  electric potential would vanish at the horizon --- which   implies that $\mathbf A$ diverges at the bifurcation surface, because $\xi$ vanishes there. 

 To summarise, in the Killing gauge,   the boost-weight-1 component of the gauge field either vanishes on a Killing horizon, in which case the horizon electric potential   also vanishes, or it diverges at the bifurcation surface, in which case the electric potential form is non-zero and regular on a Killing horizon.   In both cases, the field strength can remain regular at the Killing horizon. 

    \subsection{Electrostatic potential  $(p-1)$-form is closed on a   Killing horizon} \label{ssec:zeroth-law}

    In the case of electromagnetism, the electrostatic potential $\Phi = - \xi \cdot \tilde{\mathbf{A}}$ is constant on a    Killing horizon \cite{Carter:2009nex,Gauntlett:1998fz,Gao:2001ut,Smolic:2014swa}, which is sometimes called the ``zeroth law'' for the electrostatic potential.  Below, we derive an analogous    statement for $p$-form abelian gauge fields  without choosing a particular   gauge (such as the Killing gauge). We consider a $p$-form gauge field $\mathbf{A}$ that is stationary up to a pure gauge term, $\mathcal{L}_\xi \mathbf{A} = \dd{\bm \lambda_\xi}$. We define the     electrostatic potential $(p-1)$-form associated to $\mathbf{A} $  as 
    \begin{equation} \label{potentialform}
        \bm \Phi \equiv - \xi \cdot \mathbf{A} + \bm \lambda_\xi\,,
    \end{equation} 
    which is exactly the ``momentum map''  associated with the background Killing vector $\xi$ that appears in the gauge-covariant construction of the Lie derivative of gauge (Maxwell) fields, discussed in \cite{Elgood:2020svt}. This electric potential form satisfies the following theorem (see also \cite{Copsey:2005se,Compere:2007vx}): 
    \begin{thm} 
    \label{thm:zeroth-law}
        The pullback of the electrostatic potential form $\bm \Phi$  to the future Killing horizon $\mathcal H^+$  is closed. For a 1-form gauge field $\mathbf A$, this means $\Phi$ is constant on $\mathcal H^+$.
    \end{thm}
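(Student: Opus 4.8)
The plan is to reduce the statement to the vanishing of the pullback of $\xi\cdot\mathbf F$ to $\mathcal H^+$, and then to kill that pullback with the boost-weight theorem. First I would rewrite the exterior derivative of $\bm\Phi$ in a manifestly gauge-invariant form. Applying Cartan's magic formula $\mathcal L_\xi = \dd\circ(\xi\cdot) + (\xi\cdot)\circ\dd$ to $\mathbf A$, and using the stationarity-up-to-gauge condition \eqref{eq:A-Killing}, $\mathcal L_\xi\mathbf A = \dd\bm\lambda_\xi$, together with $\mathbf F = \dd\mathbf A$, gives
\[
    \dd\bm\lambda_\xi \;=\; \mathcal L_\xi\mathbf A \;=\; \xi\cdot\mathbf F + \dd(\xi\cdot\mathbf A)\,.
\]
Rearranging and inserting the definition \eqref{potentialform}, $\bm\Phi = -\xi\cdot\mathbf A + \bm\lambda_\xi$, yields the key identity
\[
    \dd\bm\Phi \;=\; \xi\cdot\mathbf F\,,
\]
valid on a neighbourhood of the horizon. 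The right-hand side is manifestly regular, since $\mathbf F$ and $\xi$ are smooth, even though $\mathbf A$ and $\bm\lambda_\xi$ may diverge individually; this is consistent with the regularity of $\bm\Phi$ established in the previous subsection. Because the exterior derivative commutes with pullback, writing $\iota:\mathcal H^+\hookrightarrow\mathcal M$ for the inclusion gives $\dd(\iota^*\bm\Phi) = \iota^*(\xi\cdot\mathbf F)$, so it suffices to show that $\iota^*(\xi\cdot\mathbf F)\fheq 0$.

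Second, I would evaluate this pullback in Gaussian null coordinates. On the future horizon $\xi\fheq\kappa v\,\partial_v$, so the components of the $p$-form $\xi\cdot\mathbf F$ read $(\xi\cdot F)_{a_1\cdots a_p}\fheq \kappa v\,F_{v a_1\cdots a_p}$. Upon pullback the remaining legs must be tangent to $\mathcal H^+$, i.e.\ along $\partial_v$ or $\partial_i$; antisymmetry of $\mathbf F$ then forces the only surviving components to be the purely spatial ones, $(\xi\cdot F)_{i_1\cdots i_p}\fheq\kappa v\,F_{v i_1\cdots i_p}$. The component $F_{v i_1\cdots i_p}$ carries boost weight $w = +1$. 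Since $\mathbf F$ is stationary, $\mathcal L_\xi\mathbf F = 0$, and regular on $\mathcal H^+$, Theorem~\ref{thm:boost-weight} implies that all positive-boost-weight components of $\mathbf F$ vanish on $\mathcal H^+$; in particular $F_{v i_1\cdots i_p}\fheq 0$. Hence $\iota^*(\xi\cdot\mathbf F)\fheq 0$, and therefore $\dd(\iota^*\bm\Phi)\fheq 0$, i.e.\ the pullback of $\bm\Phi$ to $\mathcal H^+$ is closed. For the $p=1$ case $\bm\Phi$ is a function and closedness of its pullback means $\dd(\Phi|_{\mathcal H^+})\fheq 0$; since $\mathcal H^+$ is connected (as assumed in the setup of section~\ref{sec:setup}), $\Phi$ is constant on $\mathcal H^+$, recovering the usual zeroth law for the electrostatic potential.

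The step I expect to be the crux is the vanishing of $\iota^*(\xi\cdot\mathbf F)$. The identity $\dd\bm\Phi = \xi\cdot\mathbf F$ only shows that $\dd\bm\Phi$ is closed, and a purely coordinate-invariant Cartan-formula argument on $\mathcal H^+$ (using $\mathcal L_\xi\iota^*\mathbf F = 0$ and $\dd\iota^*\mathbf F = 0$) reproduces closedness of $\iota^*(\xi\cdot\mathbf F)$ but not its vanishing. The essential extra input is the interplay between the degeneration of $\xi$ at the bifurcation surface and the stationarity of $\mathbf F$, which is exactly what Theorem~\ref{thm:boost-weight} packages. One must be careful that the factor $v$ multiplying $F_{v i_1\cdots i_p}$ is harmless precisely because $\mathbf F$, unlike $\mathbf A$, is regular, so no compensating $1/v$ divergence can survive in the limit to the horizon.
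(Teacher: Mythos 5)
Your proof is correct and follows essentially the same route as the paper: the identity $\dd\bm\Phi=\xi\cdot\mathbf F$ via Cartan's formula and the stationarity-up-to-gauge condition, followed by the observation that the pullback reduces to the weight-one components $\kappa v\,F_{vi_1\cdots i_p}$, which vanish on $\mathcal H^+$ by Theorem~\ref{thm:boost-weight}. Your explicit appeal to connectedness of $\mathcal H^+$ for the $p=1$ constancy statement is a small but welcome addition of rigour over the paper's version.
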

    \begin{proof}
        The exterior derivative of the  electrostatic potential form $\bm \Phi$ is
        \begin{equation} \label{extpotentialform}
            \dd{\bm \Phi} = - \dd{(\xi \cdot \mathbf{A})} + \dd{\bm \lambda_\xi} = - \mathcal{L}_\xi \mathbf{A} + \xi \cdot \dd{\mathbf{A}} + \dd{\bm \lambda_\xi} = \xi \cdot \mathbf{F} = \frac{1}{p!}\xi^a F_{a a_1 \cdots a_p} \dd{x^{a_1}} \wedge \cdots \wedge \dd{x^{a_p}}\,.
        \end{equation}
        When pulled back to the future Killing horizon $\mathcal H^+$, the $p$-form basis consists of wedge products of the null covector $\dd{v}$ and the spatial covectors $\dd{x^i}$. However, components involving $\dd{v}$ are identically zero, because $F_{v \cdots v \cdots} = 0$ by the antisymmetry of the tensor field $F_{a a_1 \cdots a_p}$. Hence, in affine GNC, we have  
        \begin{equation}
            \dd{\bm \Phi} \fheq \frac{1}{p!} \kappa v F_{v i_1 \cdots i_p} \dd{x^{i_1}} \wedge \cdots \wedge \dd{x^{i_p}}.
        \end{equation}
        But by \cref{thm:boost-weight}, the  $F_{v i_1 \cdots i_p}$ components are identically zero on $\mathcal H^+$, since they have weight one and $\mathbf{F}$ is assumed to be regular everywhere on the horizon. Therefore,
        \begin{equation}
            \dd{\bm \Phi} \fheq 0\,.
        \end{equation}
    \end{proof}
A similar theorem holds for the past Killing horizon $\mathcal H^-$ (where $v=0$), since tensor components with negative boost weight vanish on $\mathcal H^-$, in particular, $F_{ui_1\cdots i_p}$ vanishes. Hence, the electrostatic potential form is also closed on the past Killing horizon, $\dd{\bm \Phi} \pheq 0$, thus it is closed on the entire bifurcate Killing horizon $\mathcal H$. 

An important property of $\bm \Phi$ is that it is only defined up to  two types of ambiguities. Under a gauge transformation $\mathbf A \to \mathbf A + \dd \mathbf \Lambda$, the gauge parameter $\bm{\lambda}_\xi$ transforms as $\bm{\lambda}_\xi \to \bm{\lambda}_\xi + \mathcal L_\xi \mathbf \Lambda$, since $\mathcal L_\xi \mathbf A \to \mathcal L_\xi (\mathbf A + \dd \mathbf \Lambda) = \dd (\bm \lambda_\xi + \mathcal L_\xi \mathbf \Lambda)$, so the  electrostatic  potential form transforms as
\begin{equation}\label{gaugetranselectricpot}
    \mathbf \Phi \to - \xi \cdot (\mathbf A + \dd \mathbf \Lambda) + \bm \lambda_\xi + \mathcal L_\xi \mathbf \Lambda = \mathbf \Phi + \dd (\xi\cdot \mathbf \Lambda)\,.
\end{equation}
Hence, $\mathbf \Phi $ is gauge invariant modulo   an exact term. A second ambiguity   arises because   $\bm \lambda_\xi$ itself is defined by   \eqref{eq:A-Killing} only up to an exact term, $\bm \lambda_\xi \to \bm \lambda_\xi + \dd \bm \alpha$, which induces the shift $\bm \Phi \to \bm \Phi + \dd \bm \alpha$. These two ambiguities can be arranged to cancel by choosing $\bm \alpha = - \xi \cdot \bm \Lambda$, in which case $\bm \Phi$ is  strictly gauge invariant.

If $\bm \Phi$ is strictly gauge invariant, then it is smooth on the horizon, since it is already smooth in the Killing gauge~\eqref{killingauge}.  Smoothness of $\bm \Phi$ is possible even if  the gauge field has a strong $1/v^n$ divergence, since this  can be cancelled by a divergence in $\bm \lambda_\xi$. 
   
   In order to obtain the higher-form analogue of a constant electrostatic potential on the horizon, we   use these ambiguities to set\footnote{Sufficient conditions for the constant electric potential gauge are: $\dd (\xi \cdot \bm \lambda_\xi )=0$ and $ \mathcal{L}_{m_\aleph} (\xi \cdot \mathbf A) = \mathcal{L}_{m_\aleph}\bm \lambda_\xi$. }
    \begin{equation} \label{constpotgaugeoriginal}
        \mathcal{L}_\xi \bm \Phi \fheq 0\,, \qquad \mathcal{L}_{m_\aleph} \bm \Phi \fheq 0\,,
    \end{equation}
    for any extended dimensions with coordinates $y^\aleph$, $\aleph = 1,...,k$ and generators $m_\aleph = \pdv*{y^\aleph}$ that commute with the horizon Killing field: $[\xi, m_{\aleph}]=0$. Note that $k$ denotes the dimension of the non-compact part $\Sigma_k$ of a horizon cross-section with the same topology as $\mathbb R^k$.  These conditions   guarantee that the potential form is constant along the null and the extended directions of the future horizon. We call this the \emph{constant electric potential gauge}. We do not have to restrict $\bm \Phi$ to be constant along the compact subspace $\tilde{\mathcal{C}}$ of the horizon, because $\bm \Phi$ can be resolved  into a number of constants $\Phi_{\sigma}^{\mathcal{H^+}}$ on   non-trivial cycles  of $\tilde{\mathcal{C}}$ using algebraic topology techniques, as we will see   in \cref{sec:cfl}.

\subsection{Asymptotic fall-off conditions}
\label{sec:falloff}

In order to have a well-defined $\Phi \delta Q$ term in the black hole first law,  we require the following four asymptotic conditions:
\begin{itemize}
\item[1.]  The  interior product of the Killing field $\xi$ with the field strength vanishes at spatial infinity: \begin{equation}\label{conditiononFatinfinity}\xi \cdot \mathbf F \to 0 \qquad \text{ as} \qquad  r \to \infty\,.\end{equation} This ensures that the electric potential form is closed at spatial infinity, due to \eqref{extpotentialform}. If we assume in addition that $\bm \lambda_\xi \to 0$ at spatial infinity, so that the diffeomorphism generated by 
 $\xi$ does not induce a non-trivial (large) gauge transformation at infinity,   this means that $\xi \cdot \mathbf A $ is a closed form at spatial infinity (and it is constant if $\mathbf A$ is a 1-form). 

\item[2.]  For electric charges, the gauge field satisfies a Dirichlet asymptotic boundary condition: 
\begin{equation} \label{dirichletcond}
    \delta \mathbf A \to 0  \qquad \text{as}\qquad r \to \infty \,.
\end{equation}
  This condition implies that a term involving $\delta \mathbf A$     in the fundamental variational identity   in the covariant phase space formalism vanishes when integrated at spatial infinity, see  below  equation \eqref{gdimpnoether}. 
  
  For magnetic charges, $\delta \mathbf A$ does not vanish at the asymptotic boundary, which is crucial for obtaining a non-zero magnetic potential-charge contribution from spatial infinity in the first law, see the discussion around   \eqref{magneticatinfinity}.   In that case, we impose the asymptotic boundary condition
  \begin{equation}
      \delta \mathbf F \to \mathcal O(1) \qquad \text{as } \qquad r \to \infty\,,
  \end{equation}
  ensuring that the magnetic flux does not blow up at infinity.
  
  \item[3.]  The perturbation of the electric flux density $\mathbf{\Upsilon}$ ---  which for electromagnetism is proportional to $\star \mathbf F$ and more generally it is defined below  in \eqref{upsilon} ---   satisfies
\begin{equation}
\delta \mathbf{\Upsilon} \to \mathcal{O}(1) \quad \text{as} \quad r \to \infty \,,
\end{equation}
so that the flux integral at spatial infinity is finite or it vanishes. By the linearised gauge field equation
$
\dd(\delta\mathbf{\Upsilon}) = 0 ,
$ 
 and Stokes' theorem, this flux equals the integral of $\delta\mathbf{\Upsilon}$ over  an appropriate cycle within any horizon cross-section  and   defines the variation of the total electric charge, see \eqref{pcharge}. 

\item[4.] For non-compact horizon cross-sections, we impose an asymptotic boundary condition 
on  all  dynamical fields (and their perturbations) along each flat  extended 
direction $y^\aleph$:
\begin{equation}  
\varphi \to \varphi_\infty \,, \qquad \delta\varphi \;\to\; \delta\varphi_\infty \qquad \text{and}
\qquad 
\mathcal{L}_{m_\aleph}\delta\varphi \to 0 \,, \qquad \text{as} \qquad y^\aleph \to \pm\infty\,,
\label{fourthfalloff}
\end{equation}
where $m_\aleph = \pdv*{y^\aleph}$ with 
 $\aleph = 1,...,k$ and $\varphi_\infty, \delta \varphi_\infty$ are constants (independent of   $y^\aleph$). 
This ensures that the opposite boundary contributions in the JKM ambiguity 
cancel pairwise, and hence the dynamical entropy is JKM-invariant also for 
non-compact horizons (see section~\ref{sec:jkm}). Moreover, we used this condition to prove the closedness of  the downgraded electric flux density on the compact subspace of the horizon in appendix \ref{appc}.
 \end{itemize}

Specifically, focusing on electric charges,  for asymptotically flat and asymptotically AdS black holes the first two conditions are met in the following way. On the one hand, in the asymptotically flat case, components with one leg along the time direction  of the gauge field fall off as  $\mathcal O(1/r^{D-3})$, and   the field strength falls off as $\mathcal O(1/r^{D-2})$. This means both asymptotic conditions are automatically satisfied, if the radial coordinate is fixed under the perturbation. Moreover, this fall-off behaviour implies that the electric potential form vanishes at spatial infinity, and hence only its integral at the horizon features into the black hole first law.  

On the other hand, for asymptotically AdS black holes, components of the 
gauge potential with one index along the asymptotic time direction can 
approach a nonzero constant at spatial infinity, so that the electric 
potential form does not vanish there. Concretely,  
$ 
\mathbf{A} = \mathbf{A}_\infty + \mathcal{O}\!\left(1/r^{D-3}\right) $ with 
 $ \mathbf{A}_\infty \neq 0 ,
 $
which implies that the electric potential form has a finite, nonzero 
boundary value at infinity. In order to satisfy the second condition \eqref{dirichletcond}, we now need $\delta \mathbf A_\infty =0$. Moreover, as we will see below, in this case the  difference between the integral of the potential form at spatial infinity and at the horizon   will enter into the black hole first law. 
  
Regarding the third asymptotic condition, this is met in the  case of electromagnetism, for which $\delta \mathbf \Upsilon \propto \star \delta  \mathbf F,$ for the following reason. For asymptotically flat black holes, the perturbed radial electric component $\delta F_{tr}$ decays as $\mathcal O(1/r^{D-2})$ at large $r$. Since the area element of the codimension-2 surface at large radius grows as $r^{D-2}$ (for spherical topology; for planar topology it grows with the box size), the components of $\delta\mathbf{\Upsilon}$ tangent to the codimension-2 surface are $\mathcal O(1)$, ensuring that the flux integral at infinity is finite or zero. In the asymptotically AdS case we additionally fix $\delta \mathbf A_\infty=0$, which leads to the same decay for $\delta F_{tr}$ and thus for $\delta\mathbf{\Upsilon}$ as in the asymptotically flat case.

The fourth asymptotic condition is also automatically 
satisfied in several physically relevant situations. For instance, perturbations of uniform black branes that change only the mass or charge density to a new constant value obey~\eqref{fourthfalloff}, since the background fields are translationally invariant along the brane directions and both $\phi$ and $\delta\phi$ therefore approach the same constant values at $y^\aleph\to\pm\infty$. In this case the boundary contributions from opposite faces cancel pairwise. More generally, any translationally invariant perturbation along the brane directions --- such as adding a constant momentum density (e.g., an infinitesimal boost of the brane) --- or working in a compactified geometry with periodic boundary conditions along the flat directions also satisfies the equal-asymptotics condition.

The additional asymptotic condition 4 plays an important role in ensuring that the bifurcate Killing horizon structure assumed in the derivation of the first law is not spoiled by the presence of non-compact horizon directions. Without such a condition, fluxes of energy or charge along the flat directions could invalidate the assumption that the null geodesic generators of the horizon coincide with the orbits of a Killing vector field.\footnote{We thank Harvey Reall for this observation.} Indeed, explicit stationary AdS solutions are   known with non-compact horizons that evade the rigidity theorem and do not admit a bifurcate Killing horizon. Canonical examples are provided by the flowing black funnels in \cite{Fischetti:2012vt} and the stationary   plasma flows in \cite{Figueras:2012rb}. In both cases the background metrics approach different asymptotics at the two ends of the horizon: for flowing funnels, they interpolate between boundary black holes of unequal temperature, while for plasma flows they interpolate between distinct boosted black branes of different surface gravity and horizon area density. As a result, condition 4 is violated and the horizon is non-Killing with no bifurcation surface.

By contrast, condition~4 rules out precisely the 
pathological behaviour realised in the black funnel example. Physically, it 
ensures that the state of the brane is uniform at both ends, so that no net 
flux of conserved charges can flow along the horizon. Consequently the 
contributions from opposite boundary faces cancel, as in \eqref{VI.6}, and 
the horizon of the background geometry remains bifurcate and Killing. In this way, the equal-asymptotics condition provides a minimal restriction under which the first law extends consistently to non-compact horizons.

    \section{Gauge dependence in the covariant phase space formalism}
\label{sec:gaugedepen}
In this section we analyse the covariant phase space formalism  \cite{Wald:1993nt,Iyer:1994ys} for diffeomorphism-invariant gravity theories for which the metric is non-minimally coupled to a $p$-form gauge field and to other bosonic gauge-invariant tensor fields (see also, e.g.,  \cite{Gao:2003ys,Papadimitriou:2005ii,Compere:2007vx,Rogatko:2009th,Keir:2013jga,Harlow:2019yfa,Elgood:2020svt,Hajian:2022lgy}).    
To generalise   previous work \cite{Hollands:2024vbe,Visser:2024pwz} on dynamical horizon entropy to charged black holes, we need to single out the gauge-dependent terms in any covariant phase space quantity and treat them separately. In this paper, we will only highlight the contributions from gauge-dependent terms --- for the gauge-invariant terms  we use   results obtained previously in     \cite{Hollands:2024vbe,Visser:2024pwz}.  

    \subsection{Diffeomorphism-invariant Lagrangian}

    We consider a general diffeomorphism-invariant Lagrangian  $D$-form  that is locally constructed out of   bosonic fields and their symmetrised derivatives:
    \begin{equation}
       \mathbf L = \mathbf L_\text{grav-EM}(g^{ab}, R_{abcd}, F_{a_1 \cdots a_{p+1}}, \{\psi\}, \nabla_a) + \mathbf L_\text{source}(g^{ab}, A_{a_1 \cdots a_p}, \{\psi\}, \left\{ \Psi \right\},\nabla_a)\,.
    \end{equation}
    The first term, $\mathbf L_\text{grav-EM}$, is  the gravity-electromagnetism part of the Lagrangian. It is constructed from    contractions of the (inverse) metric $g^{ab}$,   the Riemann tensor $R_{abcd}$, the field strength tensor $\mathbf{F} = \dd{\mathbf{A}}$ of a $p$-form gauge field $\mathbf{A}$, a collection of gauge-independent bosonic matter tensor fields  $\{\psi_{a_1 \cdots a_s}\}$ with spin $s$, and their symmetric covariant derivatives.  Our framework readily applies  to multiple gauge fields, however, to simplify the   discussion, we   only consider one $p$-form field throughout the paper. In $\mathbf L_\text{grav-EM}$, the couplings between the metric and matter fields are in general \emph{non-minimal}, and all terms are   gauge-invariant, so there is no explicit $\mathbf{A}$ dependence. 
    
    The second term in the Lagrangian, $ \mathbf L_\text{source}$,  depends on a collection of external matter fields $\{\Psi_{a_1 \cdots a_r}\}$ that  source the dynamical fields in the gravity-electromagnetism sector via a stress-energy tensor $T_{ab}$, an electric current $j^{a_1 \cdots a_p}$, and other source tensors $\{\zeta^{a_1 \cdots a_s}\}$. These external source tensors are  determined by interactions in the   action $I_\text{source} = \int \mathbf L_\text{source}$:
    \begin{equation}
        T_{ab} \equiv - \frac{2}{\sqrt{-g}} \frac{\delta I_\text{source}}{\delta g^{ab}}, \qquad j^{a_1 \cdots a_p} \equiv (-1)^{D-p}\frac{p!}{\sqrt{-g}} \frac{\delta I_\text{source}}{\delta A_{a_1 \cdots a_p}}, \qquad \zeta^{a_1 \cdots a_s} \equiv \frac{1}{\sqrt{-g}} \frac{\delta I_\text{source}}{\delta \psi_{a_1 \cdots a_s}}.
    \end{equation}
    We assume that the $\mathbf{A}$-dependent term in $L_\text{source}$ is linear in $\mathbf{A}$,\footnote{The Chern-Simons term for abelian gauge fields is allowed in our framework as part of the source Lagrangian.} so it takes the form of an electric source $\sqrt{-g}\, j^{a_1 \cdots a_p} A_{a_1 \cdots a_p}$, and that the external matter fields $\{\Psi\}$ are minimally coupled to gravity. We see that the electric source term is the only gauge-dependent part of the total Lagrangian. However, the action can be made gauge invariant (up to a boundary term) by demanding $\{\Psi\}$ to have certain symmetries, such as a $\text{U}(1)$ gauge symmetry for a complex scalar field coupled to electromagnetism. 

    For now, for the purpose of deriving the comparison version of the first law, we only consider source-free Lagrangians with $\mathbf L_\text{source} = 0$, which implies the Lagrangians are gauge invariant; hence they depend only on  $\mathbf A$ implicitly through $\mathbf{F}$. We will turn on source terms in \cref{sec:ppfl}, where we study the physical process version of the first law.

    A final comment before we proceed: in the analysis that follows, we will often treat $\mathbf{F}$ as a fundamental variable, temporarily disregarding the relation $\mathbf{F} = \dd{\mathbf{A}}$. This approach is technically useful as an intermediate step in deriving certain covariant phase space quantities. Moreover, we   keep explicit $\mathbf{F}$-dependence wherever possible in order to preserve gauge invariance in the formalism.

    \subsection{Presymplectic potential}

    The presymplectic potential arises from the boundary terms in the variation of the Lagrangian. Here, we show that the boundary terms can be separated into a gauge-invariant and gauge-dependent part.
    
    For convenience, we denote $\tilde{\phi} \equiv (g, \{\psi\})$ as the collection of the metric field $g$ and all other gauge-invariant (non-gauge) matter fields $\{\psi\}$, $\phi \equiv (\tilde{\phi}, \mathbf F)$ as the  collection of all gauge-invariant fields, and $\varphi \equiv (\phi, \mathbf{A})$ as the collection of gauge-invariant and gauge-dependent fields.

  The variation of the Lagrangian form with respect to the dynamical fields $\tilde \phi$ and  $ \mathbf{F}$ is given by the following gauge-invariant equation 
    \begin{equation}
        \delta \mathbf L = \mathbf E(\phi) \cdot \delta \tilde{\phi} + \mathbf{\Upsilon} (\phi)\wedge \delta \mathbf F + \dd{\mathbf \Theta^{\text{GI}}(\phi, \delta \phi)}\,, \label{eq:dL-gi}
    \end{equation}
    where  
    \begin{equation}
        \mathbf{E}(\phi) \cdot \delta \tilde \phi = \frac{1}{2} \mathbf{E}_{(g)ab}(\phi) \delta g^{ab} + \mathbf{E}_{(\psi)}(\phi) \cdot \delta \psi
    \end{equation}
    contains the equations of motions for the metric and other gauge-invariant matter fields, and $ \mathbf{\Upsilon}$ is a $(D-p-1)$-form that is defined as 
    \begin{equation} \label{upsilon}
        \mathbf{\Upsilon} \equiv (p+1)! \star \left( \frac{1}{\sqrt{-g}}\frac{\delta I}{\delta \mathbf F} \right)\,,
    \end{equation}
    with $I = \int \mathbf L$ the action, $\sqrt{-g}$ the volume factor, $\delta I/\delta \mathbf F$ the functional derivative with respect to the form $\mathbf F$, and $\star$ the Hodge dual (see \cref{lem:contraction} for a derivation of this term). Here, $\bm \Upsilon$ can be thought of as an effective ``equation of motion'' as if we were treating $\mathbf{F}$ as fundamental,   forgetting that $\mathbf{F}$ is derived from $\mathbf{A}$. Further, in the last term of \eqref{eq:dL-gi}, $\mathbf \Theta^{\text{GI}}$ is the gauge invariant part of the presymplectic potential, which is locally constructed out of $\phi$, $\delta \phi$ and their derivatives, and it is linear in $\delta \phi$.
    
    Gauge dependence hides in the $\bm \Upsilon \wedge \delta \mathbf{F}$ term  in \eqref{eq:dL-gi}, which can be unpacked by using the definition of the field strength $\mathbf{F} = \dd{\mathbf{A}}$:
    \begin{equation}
        \mathbf{\Upsilon} \wedge \delta \mathbf F = \mathbf{\Upsilon} \wedge \dd{(\delta \mathbf A)} = (-1)^{D-p-1} \left[\dd{\left(\mathbf{\Upsilon} \wedge \delta \mathbf A\right)} - \dd{\mathbf{\Upsilon}} \wedge \delta \mathbf A\right].
    \end{equation}
    Then, we can rewrite $\delta \mathbf L$ as 
    \begin{equation}
        \delta \mathbf L = \mathbf E(\phi) \cdot \delta \tilde{\phi} + \bm{\mathcal E} (\phi) \wedge \delta \mathbf A + \dd{\left(\mathbf \Theta^{\text{GI}}(\phi, \delta \phi) + \mathbf \Theta^{\text{GD}}(\phi, \delta \mathbf A)\right)} \,,\label{eq:dL-gd}
    \end{equation}
    where
    \begin{equation}
        \bm{\mathcal E} = (-1)^{D-p} \dd{\mathbf{\Upsilon}} \label{eq:E-dUpsilon}
    \end{equation}
    is the true equation of motion for the gauge field, and 
    \begin{equation}
        \mathbf \Theta^{\text{GD}}(\phi, \delta \mathbf A) = (-1)^{D-p-1}\mathbf{\Upsilon} (\phi) \wedge \delta \mathbf A \label{eq:Theta-GD}
    \end{equation}
    is the gauge dependent part of the presymplectic potential.

    In total, the   presymplectic potential is given by 
    \begin{equation}
        \mathbf{\Theta}(\phi, \delta \varphi) = \mathbf{\Theta}^{\text{GI}}(\phi, \delta \phi) + \mathbf{\Theta}^{\text{GD}}(\phi, \delta \mathbf{A})\,.
    \end{equation}

    \subsection{Noether charge}
    
    Next, we want to find the Noether charge associated to a diffeomorphism generated by an arbitrary  smooth vector field $\chi.$ We consider a variation of the dynamical fields $\delta \phi = \mathcal L_\chi \phi$ induced by the flow of $\chi$. Diffeomorphism covariance of $\mathbf L$ implies that the variation $\delta_\chi 
    \mathbf L$ produced by $\delta \phi = \mathcal L_\chi \phi$  equals  $ \mathcal L_\chi \mathbf L$, i.e., 
    \begin{equation}
        \mathcal{L}_\chi \mathbf{L} = \mathbf E(\phi) \cdot \mathcal{L}_\chi \tilde{\phi} + \bm{\mathcal E} (\phi)  \wedge \mathcal{L}_\chi \mathbf A + \dd{\left(\mathbf \Theta^{\text{GI}}(\phi, \mathcal{L}_\chi \phi) + \mathbf \Theta^{\text{GD}}(\phi, \mathcal{L}_\chi \mathbf A)\right)}. \label{eq:L-diffeo}
    \end{equation}
Furthermore, it follows from the       Cartan magic formula and the fact that $\mathbf L$ is a top form, $\dd \mathbf L =0$, that $\mathcal{L}_\chi \mathbf{L} = \dd{(\chi \cdot \mathbf{L})}$. By \cite{Seifert:2006kv}, we can write the terms involving the equations of motion as the exterior derivative of the constraint form   
    \begin{equation}
        \mathbf{E}(\phi) \cdot \mathcal{L}_\chi \tilde \phi + \bm{\mathcal{E}} (\phi)  \wedge \mathcal{L}_\chi \mathbf{A} = \dd{\mathbf{C}_\chi} \,,\label{eq:phi-A-constraint}
    \end{equation}
    where the constraint $(D-1)$-form is defined as
    \begin{equation}
        \mathbf{C}_\chi = \left(-E_{(g)}^{ab} + C_{(\psi)}^{ab} + C_{(A)}^{ab}\right) \chi_b \bm \epsilon_a\,. \label{eq:constraint-form}
    \end{equation}
    Here,  
    \begin{equation}
        E_{(g)ab} \equiv \frac{2}{\sqrt{-g}} \frac{\delta I}{\delta g^{ab}}
    \end{equation}
    is the gravitational equation of motion tensor, and
    \begin{align}
        C^{ab}_{(\psi)} & \equiv E_{(\psi)}^{a a_2 \cdots a_s} \psi^{b}{}_{a_2 \cdots a_s} + E_{(\psi)}^{a_1 a a_3\cdots a_s} \psi_{a_1}{}^{b}{}_{a_3 \cdots a_s} + \cdots + E_{(\psi)}^{a_1 \cdots a_{s-1}a} \psi_{a_1 \cdots a_{s-1}}{}^{b},\\
        C^{ab}_{(A)} & \equiv E_{(A)}^{a a_2 \cdots a_p} A^{b}{}_{a_2 \cdots a_p} + E_{(A)}^{a_1 a a_3\cdots a_p} A_{a_1}{}^{b}{}_{a_3 \cdots a_p} + \cdots + E_{(A)}^{a_1 \cdots a_{p-1}a} A_{a_1 \cdots a_{p-1}}{}^{b}
    \end{align}
    are the constraints for a $\psi$ tensor field with rank $s$ and a gauge field $A$ with rank $p$, and
    \begin{align}
        E_{(\psi)}^{a_1 \cdots a_s} & \equiv \frac{1}{\sqrt{-g}} \frac{\delta I}{\delta \psi_{a_1 \cdots a_s}}\,,\\
        E_{(A)}^{a_1 \cdots a_p} & \equiv \frac{1}{\sqrt{-g}} \frac{\delta I}{\delta A_{a_1 \cdots a_p}}
    \end{align}
    are the equations of motion for $\psi$ and $A$, respectively. We review the detailed derivation of the constraint form in appendix \ref{app:constraint-form}. We also note that the equation of motion form $\mathbf{E}_{(A)}$ is related to the form $\bm{\mathcal{E}}$ in \eqref{eq:E-dUpsilon} by a Hodge dual:
    \begin{equation}
        \bm{\mathcal{E}} = p!\,{\star} \mathbf{E}_{(A)} \qquad \Longleftrightarrow \qquad \mathbf{E}_{(A)} = \frac{(-1)^{1 + p(D-p)}}{p!} {\star}\bm{\mathcal{E}}\,.
    \end{equation}
     
We can now rewrite \cref{eq:L-diffeo} as a conservation equation
    \begin{equation}
        \dd{\left( \mathbf{J}_\chi + \mathbf{C}_\chi \right)} = 0\,, \label{eq:os-conservation}
    \end{equation}
    where 
    \begin{equation}
        \mathbf{J}_\chi \equiv \mathbf \Theta^{\text{GI}}(\phi, \mathcal{L}_\chi \phi) + \mathbf \Theta^{\text{GD}}(\phi, \mathcal{L}_\chi \mathbf A) - \chi \cdot \mathbf{L}
    \end{equation}
    is the Noether current $(D-1)$-form associated to    $\chi$.

    Further, from    Poincar\'e's lemma  it follows that~\cite{Wald:1990mme}
    \begin{equation}
        \mathbf{J}_\chi + \mathbf{C}_\chi = \dd{\mathbf{Q}_\chi}\,,
    \end{equation}
    where $\mathbf{Q}_\chi$ is the Noether charge $(D-2)$-form associated to $\chi$. Below we split the Noether charge into a gauge-independent and gauge-dependent part. 

    \subsubsection{Gauge-independent part}
    Notice the off-shell conservation equation \eqref{eq:os-conservation} is a mixture of gauge-independent and gauge-dependent terms. To separate out the gauge-independent part of the Noether charge, we   once again treat $\mathbf{F}$ as fundamental and derive a gauge-invariant conservation equation. Replacing $\delta \phi $ by $\mathcal L_\chi \phi$ in \eqref{eq:dL-gi} and using $\delta_\chi \mathbf L = \mathcal L_\chi \mathbf L$ yields  
    \begin{equation}
        \mathcal{L}_\chi \mathbf{L} = \mathbf E(\phi) \cdot \mathcal{L}_\chi \tilde{\phi} + \bm \Upsilon(\phi) \wedge \mathcal{L}_\chi \mathbf F + \dd{\mathbf \Theta^{\text{GI}}(\phi, \mathcal{L}_\chi \phi)}\,. \label{GILagrdiffeo}
    \end{equation}
    Here, $\mathbf{\Upsilon}$ is the ``equation of motion'' for $\mathbf{F}$, although the correct on-shell condition is not $\mathbf{\Upsilon} = 0$. Using the proof that yields \cref{eq:phi-A-constraint}, where   $\mathbf{A}$ is treated as fundamental, we can find the off-shell relation   
    \begin{equation}
        \mathbf{E}(\phi) \cdot \mathcal{L}_\chi \tilde \phi + \mathbf{\Upsilon}(\phi) \wedge \mathcal{L}_\chi \mathbf{F} = \dd{\mathbf{c}^{(F)}_\chi}\,,\label{eq:phi-F-constraint}
    \end{equation}
    where 
    \begin{equation}
        \mathbf{c}^{(F)}_\chi = \left(-E_{(g)}^{ab} + C_{(\psi)}^{ab} + c_{(F)}^{ab}\right) \chi_b \bm \epsilon_a\,, \label{eq:F-constraint-form}
    \end{equation}
    with 
    \begin{equation}
        c_{(F)}^{ab} \equiv \frac{1}{\sqrt{-g}}\left(\frac{\delta I}{\delta F_{a a_2 \cdots a_{p+1}}} F\indices{^b_{a_2 \cdots a_{p+1}}} + \frac{\delta I}{\delta F_{a_1 a a_3 \cdots a_{p+1}}} F\indices{_{a_1}^b_{a_3 \cdots a_{p+1}}}  + \cdots + \frac{\delta I}{\delta F_{a_1  \cdots a_{p} a}} F\indices{_{a_1 \cdots a_{p}}^b}\right)
    \end{equation}
      the ``would-be constraint form''   if $\mathbf{F}$ were a fundamental field. Importantly, this relation holds off shell and it is manifestly gauge-invariant. See appendix \ref{app:constraint-form} for a detailed derivation.

   Combining \eqref{GILagrdiffeo} and \eqref{eq:phi-F-constraint} and inserting $\mathcal L_\chi \mathbf L = \dd{(\chi \cdot \mathbf L)}$ gives a gauge-invariant conservation equation:
    \begin{equation}
        \dd{\left( \mathbf \Theta^{\text{GI}}(\phi, \mathcal{L}_\chi \phi) - \chi \cdot \mathbf{L} + \mathbf{c}^{(F)}_\chi \right)} = 0\,.
    \end{equation}
 Thus, we  define the gauge-invariant part of the Noether charge via
    \begin{equation}
        \dd{\mathbf{Q}^\text{GI}_\chi} = \mathbf \Theta^{\text{GI}}(\phi, \mathcal{L}_\chi \phi) - \chi \cdot \mathbf{L} + \mathbf{c}^{(F)}_\chi. \label{GINoether}
    \end{equation}
    
    \subsubsection{Gauge-dependent part}
    Next, we calculate the gauge dependent part of   $\mathbf{Q}_\chi$ by studying the gauge dependencies in the presymplectic potential $\mathbf{\Theta}^{\text{GD}}$ and in the constraint form $\mathbf{C}_\chi$. Specifically, we want to rewrite the off-shell conservation equation \eqref{eq:os-conservation} so that the gauge dependence become apparent.
    
    To proceed, we first compute
    \begin{align} \label{eq:E-Lie-A}
            \bm{\mathcal E} \wedge \mathcal{L}_\chi \mathbf A & = \bm{\mathcal E} \wedge \left( \dd{(\chi \cdot \mathbf{A})} + \chi \cdot \mathbf{F}\right) \\
            & = (-1)^{D-p} \dd{(\bm{\mathcal E} \wedge (\chi \cdot \mathbf{A}))} +(-1)^{D-p}\dd{\mathbf{\Upsilon}} \wedge (\chi \cdot \mathbf{F}) \nonumber \\
            & = (-1)^{D-p} \dd{(\bm{\mathcal E} \wedge (\chi \cdot \mathbf{A}))} +(-1)^{D-p} \left[\dd{(\bm \Upsilon \wedge (\chi \cdot \mathbf F))} - (-1)^{D-p-1} \bm \Upsilon \wedge \dd{(\chi \cdot \mathbf F)}\right]  \nonumber\\
            & = (-1)^{D-p}\dd{\left( \bm{\mathcal E} \wedge (\chi \cdot \mathbf{A}) + \mathbf{\Upsilon} \wedge (\chi \cdot \mathbf{F})\right)} + \mathbf{\Upsilon} \wedge \mathcal{L}_\chi \mathbf{F}  \,, \nonumber 
        \end{align}
       where we  used $\bm{\mathcal{E}} = (-1)^{D-p} \dd{\bm \Upsilon}$ and $\dd{\bm{\mathcal{E}}} = 0$ in the second line, and $\dd{(\chi \cdot \mathbf{F})} = \mathcal{L}_\chi \mathbf{F}$ in the final line, which follows from  the Bianchi identity $\dd\mathbf F=0$.
    
    Combining equations \cref{eq:phi-A-constraint}, \cref{eq:phi-F-constraint} and \cref{eq:E-Lie-A} yields    
    \begin{equation}
        \begin{split}
            \dd{\mathbf{C}_\chi} & = \dd{\left[\mathbf{c}^{(F)}_\chi + (-1)^{D-p} \left(  \mathbf{\Upsilon} \wedge (\chi \cdot \mathbf{F}) + \bm{\mathcal E} \wedge (\chi \cdot \mathbf{A}) \right)  \right]}\,,
        \end{split}
    \end{equation}
    so the off-shell conservation equation becomes
    \begin{equation}
        \dd{\left(\mathbf{J}_\chi + \mathbf{c}^{(F)}_\chi + (-1)^{D-p} \left(  \mathbf{\Upsilon} \wedge (\chi \cdot \mathbf{F}) + \bm{\mathcal E} \wedge (\chi \cdot \mathbf{A}) \right)\right)} = 0\,.
    \end{equation}
   We now have all the necessary ingredients to determine the gauge dependence of the Noether charge. The full Noether charge is defined by
    \begin{align}
            \dd{\mathbf{Q}_\chi} & = \mathbf{J}_\chi + \mathbf{c}^{(F)}_\chi + (-1)^{D-p} \left(  \mathbf{\Upsilon} \wedge (\chi \cdot \mathbf{F}) + \bm{\mathcal E} \wedge (\chi \cdot \mathbf{A}) \right)\\
            & = \mathbf \Theta^{\text{GI}}(\phi, \mathcal{L}_\chi \phi) + \mathbf \Theta^{\text{GD}}(\phi, \mathcal{L}_\chi \mathbf A) - \chi \cdot \mathbf{L} + \mathbf{c}^{(F)}_\chi + (-1)^{D-p} \left(  \mathbf{\Upsilon} \wedge (\chi \cdot \mathbf{F}) + \bm{\mathcal E} \wedge (\chi \cdot \mathbf{A}) \right)\nonumber \\
            & = \underbrace{\left[\mathbf \Theta^{\text{GI}}(\phi, \mathcal{L}_\chi \phi) - \chi \cdot \mathbf{L} + \mathbf{c}^{(F)}_\chi \right]}_{\dd{ \mathbf{Q}_\chi^{\text{GI}}}} + (-1)^{D-p-1} \mathbf{\Upsilon} \wedge \dd{(\chi \cdot \mathbf{A})} + (-1)^{D-p} \bm{\mathcal{E}} \wedge (\chi \cdot \mathbf{A})\,,\nonumber
    \end{align} 
    where, in the third line, we identified the gauge-independent part, and inserted the gauge-dependent presymplectic potential
    \begin{equation}
        \mathbf \Theta^{\text{GD}}(\phi, \mathcal{L}_\chi \mathbf A) = (-1)^{D-p-1}\mathbf{\Upsilon} \wedge \left( \dd{(\chi \cdot \mathbf{A})} + \chi \cdot \mathbf{F}\right).
    \end{equation}
    Using \cref{eq:E-dUpsilon}, the gauge dependent terms evaluate to
    \begin{equation}
        \begin{split}
            & (-1)^{D-p-1} \mathbf{\Upsilon} \wedge \dd{(\chi \cdot \mathbf{A})} + (-1)^{D-p} \bm{\mathcal{E}} \wedge (\chi \cdot \mathbf{A})\\
            =~& (-1)^{D-p-1} \mathbf{\Upsilon} \wedge \dd{(\chi \cdot \mathbf{A})} + \dd{\mathbf{\Upsilon}} \wedge (\chi \cdot \mathbf{A})\\
            =~& \dd{(\mathbf{\Upsilon} \wedge (\chi \cdot \mathbf{A}))}.
        \end{split}
    \end{equation}
        Therefore, the Noether charge can be split as
    \begin{equation}
        \mathbf{Q}_\chi = \mathbf{Q}^\text{GI}_\chi + \mathbf{Q}^\text{GD}_\chi \,, 
    \end{equation}
    where the gauge-invariant part satisfies \eqref{GINoether} and the gauge-dependent part reads
    \begin{equation}
        \mathbf{Q}^\text{GD}_\chi = \mathbf{\Upsilon} \wedge (\chi \cdot \mathbf{A})\,.
    \end{equation}

      \subsection{Presymplectic current}
    The presymplectic current is derived from the presymplectic potential via
    \begin{equation}
        \bm \omega(\phi, \delta_1 \varphi, \delta_2 \varphi) \equiv \delta_1 \mathbf{\Theta}(\phi, \delta_2 \varphi) - \delta_2 \mathbf{\Theta}(\phi, \delta_1 \varphi)\,.
    \end{equation}
    The gauge-dependent part follows immediately from the expression for $\mathbf{\Theta}^{\text{GD}}$ in \cref{eq:Theta-GD}:
    \begin{equation}
        \begin{split}
            \bm \omega^{\text{GD}}(\phi, \delta_1 \varphi, \delta_2 \varphi) & = \delta_1 \mathbf{\Theta}^{\text{GD}}(\phi, \delta_2 \varphi) - \delta_2 \mathbf{\Theta}^{\text{GD}}(\phi, \delta_1 \varphi)\\
            & = (-1)^{D-p-1} \left( \delta_1 \mathbf{\Upsilon} \wedge \delta_2 \mathbf{A} - \delta_2 \mathbf{\Upsilon} \wedge \delta_1 \mathbf{A} \right).
        \end{split}
    \end{equation}
    If the second variation is induced  by a diffeomorphism generated by an arbitrary vector field $\chi$, the gauge-dependent part reads
    \begin{equation}
        \bm \omega^{\text{GD}}(\phi, \delta \varphi, \mathcal{L}_\chi \varphi) = (-1)^{D-p-1} \left( \delta \mathbf{\Upsilon} \wedge \mathcal{L}_\chi \mathbf{A} - \mathcal{L}_\chi \mathbf{\Upsilon} \wedge \delta \mathbf{A} \right).
    \end{equation}
    For a diffeomorphism generated by a Killing vector field $\xi$, the gauge-invariant part vanishes 
    \begin{equation} \label{vanishingcurrent}
 \bm \omega^\text{GI}(\phi, \delta \varphi, \mathcal{L}_\xi \varphi) =0\,,
    \end{equation}
   as it depends linearly on $\mathcal L_\xi \varphi$, which is zero due to the stationarity conditions \eqref{statcondition}. Further, the gauge-dependent part becomes
    \begin{equation}
        \bm \omega^\text{GD}(\phi, \delta \phi, \mathcal{L}_\xi \varphi) = (-1)^{D-p-1} \left( \delta \bm \Upsilon \wedge \dd{\bm \lambda_\xi} \right) = \dd{\left( \delta \bm \Upsilon \wedge \bm \lambda_\xi \right)} - (-1)^{D-p} \delta \bm{\mathcal{E}} \wedge \bm \lambda_\xi \,.\label{eq:omega-xi-GD}
    \end{equation}
    In the first equality, we   used $\mathcal{L}_\xi \mathbf A = \dd{\bm \lambda_\xi}$ and $\mathcal L_\xi \mathbf \Upsilon =0$, which follows from the fact that $\mathbf \Upsilon $ is locally constructed from the gauge-invariant fields $(g, \psi, \mathbf F)$ and their derivatives. Moreover, in the second equality we inserted    the perturbed equation of motion $\delta \bm{\mathcal{E}} = (-1)^{D-p}\dd{(\delta \bm \Upsilon)}$.

    \subsection{Fundamental variational identity}

    Finally, by varying the Noether current $\mathbf{J}_\chi$ with respect to both the dynamical fields $\varphi$ and an arbitrary vector field $\chi$, one can derive the fundamental variational identity of the covariant phase space formalism  \cite{Hollands:2012sf}
    \begin{equation}
        \bm \omega(\phi, \delta \varphi, \mathcal{L}_\chi \varphi) = \dd{(\delta_\varphi \mathbf{Q}_\chi - \chi \cdot \mathbf{\Theta}(\phi, \delta \varphi))} - \delta_\varphi \mathbf{C}_\chi + \chi \cdot \left( \mathbf{E}(\phi) \cdot \delta \tilde \phi + \bm{\mathcal{E}}(\phi) \wedge \delta \mathbf{A} \right)\,, \label{fundidentityorigin}
    \end{equation}
    where  we used the definitions \cite{Compere:2006my,Visser:2024pwz}
    \begin{equation}
        \delta_\varphi \mathbf{Q}_\chi \equiv \delta \mathbf{Q}_\chi - \mathbf{Q}_{\delta \chi} \qquad \text{and} \qquad \delta_\varphi \mathbf{C}_\chi \equiv \delta \mathbf{C}_\chi - \mathbf{C}_{\delta \chi}\,.
    \end{equation}
    The fundamental identity  is the key equation which gives rise to both the comparison version and the physical process version of the non-stationary first law of black hole thermodynamics. In the presence of a $p$-form gauge field, it can be separated into a gauge-independent and a gauge-dependent part 
    \begin{equation}
        \begin{split} \label{fundamentalidentity}
            & \quad~\bm \omega^{\text{GI}}(\phi, \delta \varphi, \mathcal{L}_\chi \varphi) +  \bm \omega^{\text{GD}}(\phi, \delta \varphi, \mathcal{L}_\chi \varphi) + \delta_\varphi \mathbf{C}_\chi - \chi \cdot \left( \mathbf{E}(\phi) \cdot \delta \tilde \phi + \bm{\mathcal{E}} (\phi) \wedge \delta \mathbf{A}\right)\\
            & = \dd{\left(\delta_\phi \mathbf{Q}^{\text{GI}}_\chi - \chi \cdot \mathbf{\Theta}^{\text{GI}}(\phi, \delta \phi)+ \delta_\varphi \mathbf{Q}^{\text{GD}}_\chi - \chi \cdot \mathbf{\Theta}^{\text{GD}}(\phi, \delta \mathbf{A})\right)}\,.
        \end{split}
    \end{equation}
    As we shall see shortly, the gauge-dependent contribution precisely yields the $\Phi \delta Q$ term in the first law associated with the electric charge.  
    
    \section{Comparison version of the first law with $p$-form charge} \label{sec:cfl}
 In this section we derive a non-stationary comparison version of the   first law for arbitrary cross-sections of the black hole horizon and for  a general diffeomorphism-invariant Lagrangian that depends on a $p$-form gauge field (together with the metric and other matter fields). We consider dynamical perturbations of a stationary black object geometry  and compare the macroscopic physical quantities before and after at first order in perturbation theory. Here, we do not consider external sources, i.e., $\mathbf L_\text{source} = 0$.  The derivation below extends earlier work \cite{Copsey:2005se,Compere:2007vx,Rogatko:2009th,Keir:2013jga} on the black hole first law for $p$-form   charges to non-stationary perturbations.   Moreover, it extends an earlier proof \cite{Hollands:2024vbe} (see also~\cite{Visser:2024pwz}) of the comparison first law for dynamical black hole entropy to include  electromagnetic potential-charge   terms. As we will see, the formal definition of the dynamical entropy of a black object is  almost the same as  the one proposed by Hollands-Wald-Zhang~\cite{Hollands:2024vbe}, but there is one subtle difference, namely    we define the dynamical entropy solely in terms of   the \emph{gauge-invariant} part of the improved Noether charge, see equation~\eqref{defdynentropyfinal}. 
 
 We specialise to a stationary black hole background, where the vector field $\chi$ is now replaced by the horizon generating Killing field $\xi.$ In that case, the gauge-invariant part of the presymplectic current (evaluated
on the Lie derivative of the fields along $
\xi$) vanishes due to the Killing symmetry, as we saw in \eqref{vanishingcurrent}. Further, we assume the constraint form, the equation-of-motion forms, and their variations are zero, because the setup is source-free and the perturbations are on shell. Then the gauge-dependent part of the presymplectic current, in equation \eqref{eq:omega-xi-GD},     simplifies to
    \begin{equation}
        \bm \omega^{\text{GD}}(\phi, \delta \phi, \mathcal{L}_\xi \varphi) = \dd{(\delta \bm \Upsilon \wedge \bm \lambda_\xi)}\,. \label{eq:omega-xi-GD-horizon}
    \end{equation}
    Therefore, in this setup,  the fundamental   identity  \eqref{fundamentalidentity} becomes 
    \begin{equation} \label{varid2}
        \dd{\left(\delta_\phi \mathbf{Q}^{\text{GI}}_\xi - \xi \cdot \mathbf{\Theta}^{\text{GI}}(\phi, \delta \phi) + \delta_\varphi \mathbf{Q}^{\text{GD}}_\xi - \xi \cdot \mathbf{\Theta}^{\text{GD}}(\phi, \delta \mathbf{A}) - \delta \bm \Upsilon \wedge \bm \lambda_\xi\right)} = 0\,.
    \end{equation}
    The non-stationary black hole first law follows by integrating this identity over a codimension-1 spacelike hypersurface $\Sigma$ with boundary $\partial \Sigma = \mathcal C \cup \mathcal S_\infty$, where  $\mathcal{C}$ is an arbitrary spatial    cross-section of the event horizon, and $\mathcal S_\infty$ is a codimension-2 surface at spatial infinity. In general,  $\mathcal{C}$ and $\mathcal S_\infty$ need not share the same topology --- for example,   in five dimensions a black ring has $\mathcal C \cong S^1 \times S^2$ while $\mathcal S_{\infty} \cong S^3$. Moreover,  $\mathcal{C}$ and $\mathcal S_\infty$ may be non-compact,   as  for black  $(p-1)$-branes which extend along  $p-1$ flat spatial directions. 

Now, it follows from Stokes' theorem that the boundary integral at $\mathcal C$ of the expression between brackets in \eqref{varid2} is equal to the boundary integral at $\mathcal S_\infty$:
  \begin{equation}
        \begin{split}
            &\int_{\mathcal{C}} \left( \delta_\phi \mathbf{Q}^{\text{GI}}_\xi - \xi \cdot \mathbf{\Theta}^{\text{GI}}(\phi, \delta \phi) \right) + \int_{\mathcal{C}} \left( \delta_\varphi \mathbf{Q}^{\text{GD}}_\xi - \xi \cdot \mathbf{\Theta}^{\text{GD}}(\phi, \delta \mathbf{A}) - \delta \bm \Upsilon \wedge \bm \lambda_\xi \right)\\
            &= \int_{\mathcal S_\infty} \left( \delta_\phi \mathbf{Q}^{\text{GI}}_\xi - \xi \cdot \mathbf{\Theta}^{\text{GI}}(\phi, \delta \phi) \right)+\int_{\mathcal S_\infty} \left( \delta_\varphi \mathbf{Q}^{\text{GD}}_\xi - \xi \cdot \mathbf{\Theta}^{\text{GD}}(\phi, \delta \mathbf A)  - \delta \bm \Upsilon \wedge \bm \lambda_\xi  \right)\,.
     \label{fundintid}   \end{split}
    \end{equation}
 We could in principle get rid of the integral of $\delta \bm \Upsilon \wedge \bm \lambda_\xi $ at infinity, by assuming   $\bm \lambda_\xi \to 0$ at $\mathcal S_\infty$, so that the Killing symmetry does not induce a large gauge transformations at spatial infinity. However, below we do keep this term, as $\bm \lambda_\xi$ is formally part of the definition of the electric potential form.

Next, we   evaluate each integral separately, starting with the two integrals of the gauge-dependent terms, and then we compute the integrals of the gauge-invariant terms.

\subsection{Gauge-dependent part: electric charge variation}
\label{subsec:gd}

   We first focus on the gauge-dependent terms, and show   they constitute the $p$-form  electric charge term in the first law for black objects.
    The variation of the gauge-dependent part of the Noether charge is:
    \begin{equation} \label{gdnoether}
        \delta_\varphi \mathbf{Q}^{\text{GD}}_\xi = \delta \mathbf{Q}^\text{GD}_\xi - \mathbf{Q}^\text{GD}_{\delta \xi} = \delta \left( \mathbf{\Upsilon} \wedge (\xi \cdot \mathbf{A}) \right) - \mathbf{\Upsilon} \wedge (\delta \xi \cdot \mathbf{A}) = \delta \mathbf{\Upsilon} \wedge (\xi \cdot \mathbf{A}) + \mathbf{\Upsilon} \wedge (\xi \cdot \delta \mathbf{A})\,.
    \end{equation}
    Then, we calculate the interior product of the presymplectic potential with $\xi$:
    \begin{equation}
        \begin{split} \label{gdsympt}
            \xi \cdot \mathbf \Theta^{\text{GD}}(\phi, \delta \mathbf A) & = (-1)^{D-p-1}\xi \cdot (\mathbf{\Upsilon} \wedge \delta \mathbf A)\\
            & = \mathbf{\Upsilon} \wedge (\xi \cdot \delta \mathbf{A}) + (-1)^{D-p-1}(\xi \cdot \mathbf{\Upsilon}) \wedge \delta \mathbf{A}\,.
        \end{split}
    \end{equation}
    Combining these results yields   
    \begin{equation}
        \delta_\varphi \mathbf{Q}^\text{GD}_\xi - \xi \cdot \mathbf \Theta^{\text{GD}}  = \delta \mathbf{\Upsilon} \wedge (\xi \cdot \mathbf{A}) + (-1)^{D-p} (\xi \cdot \bm \Upsilon) \wedge \delta \mathbf{A}\,. \label{gdimpnoether}
    \end{equation}
Note that $ \mathbf{\Upsilon} \wedge (\xi \cdot \delta \mathbf{A})$ cancels in this combination of differential forms. The second term on the right side vanishes both at spatial infinity and at the future Killing horizon. This is because, for electric charges,  we impose a Dirichlet boundary condition on the gauge field at spatial infinity: $\delta \mathbf A \to 0 $ as $r \to \infty$ (see fall-off condition 2 in section \ref{sec:falloff}). Further,   when pulled back to the future Killing horizon, the second term can be written in GNC as
    \begin{equation} \label{secondtermvanishesupsilon}
        (\xi \cdot \bm \Upsilon) \wedge \delta \mathbf A \fheq \frac{1}{(D-p-2)! p!} \kappa v \Upsilon_{v i_1 \cdots i_{D-p-2}} \delta A_{i_{D-p-1} \cdots i_{D-2}} \dd{x^{i_1}} \wedge \cdots \wedge \dd{x^{i_{D-2}}} \fheq 0\,.
    \end{equation}
    It vanishes on the horizon as a result of \cref{thm:boost-weight}, because $\Upsilon_{v i_1 \cdots i_{D-p-2}}$ is a background weight-1 tensor component that is gauge invariant (and hence regular).

    Thus, the pullback of this combination of gauge-dependent forms to   the future horizon and spatial infinity is simply
    \begin{equation}
        \delta_\varphi \mathbf{Q}^\text{GD}_\xi - \xi \cdot \mathbf \Theta^{\text{GD}}(\phi, \delta \mathbf A)  - \delta \bm \Upsilon \wedge \bm \lambda_\xi  \overset{\mathcal H^+,\, \mathcal S_\infty}{=}  \delta \mathbf{\Upsilon} \wedge (\xi \cdot \mathbf{A} - \bm \lambda_\xi) = -  \delta \mathbf{\Upsilon} \wedge \bm \Phi \,, \label{eq:improv-Q-GD}
    \end{equation}
    where we identified the $p$-form electric potential form $\bm \Phi \equiv - \xi \cdot \mathbf{A} + \bm \lambda_\xi$. 
Note that $\delta \mathbf{\Upsilon} $ does not vanish at spatial infinity, but rather asymptotes to a finite value as $r 
\to \infty$, due to our fall-off condition 3 in section \ref{sec:falloff}.

    We now argue the difference in integrals of the differential form in \eqref{eq:improv-Q-GD} over $\mathcal C$ and $\mathcal S_\infty$ gives rise to the $p$-form analogue of the electric charge term $ \bar \Phi \delta Q$   in the comparison version of the first law, where $\bar \Phi$ is the difference between the electric potential at the horizon and infinity. For convenience, we only describe how to evaluate the   integral at a horizon cross-section $\mathcal C$ in detail, since the integral at spatial infinity $\mathcal S_\infty$ gives a similar result. This follows from the fact that   the electric   potential form $\bm \Phi$ is both closed on a Killing horizon and at spatial infinity (see \cref{thm:zeroth-law} and equation~\eqref{conditiononFatinfinity}). 
    
    For a 1-form gauge field $A_a$, the electric potential is just a  scalar $\Phi$ that is constant on a Killing horizon, so 
    \begin{equation} \label{maxwellchargepotential}
        - \int_\mathcal{C} \delta \bm \Upsilon \wedge \bm \Phi = \Phi \delta Q\,,
    \end{equation}
    where we defined the electric charge as
    \begin{equation} \label{chargevariation}
        Q = - \int_\mathcal{C}   \bm \Upsilon\,.
    \end{equation}
However, for higher-form gauge fields of degree   $p > 1$, the electric charge term has a more complicated structure, as $\bm \Phi$ is no longer a scalar. We follow the methods in \cite{Copsey:2005se} (see also \cite{Compere:2007vx}) 
  to uncover this in detail. We first explain  how to generalise \eqref{maxwellchargepotential} to   charges associated to $p$-form gauge fields for compact horizon slices $\mathcal{C} =\tilde{\mathcal{C}} $, and then we treat the more general case of   non-compact horizon cross-sections $\mathcal C=\tilde{\mathcal C}\times \Sigma_k$.
  
  \subsubsection{Compact horizon cross-sections}\label{sssec:compact}
  We follow   standard   procedures for differential forms in algebraic topology, discussed in \cite{Bott:1982xhp} and \cite{PetersenManifolds}. By Hodge decomposition, the closed $\bm \Phi$ form can be decomposed as a sum of a harmonic form and an exact form on a compact horizon  slice $\tilde{\mathcal{C}}$, i.e., 
    \begin{equation}
        \bm \Phi \overset{\tilde{\mathcal{C}}}{=} \bm \eta + \dd{\bm \alpha}\,,
    \end{equation}
    where $\bm \eta$ is a harmonic  $(p-1)$-form, which satisfies $\Delta \bm \eta = (\dd \dd^{\dagger} + \dd^{\dagger}\dd) \bm \eta = 0$, with $\dd^{\dagger}$ the adjoint of the exterior derivative with respect to the inner product $\expval{\mathbf{X}, \mathbf{Y}} = \int {\star} \mathbf{X} \wedge \mathbf{Y}$. The exact term $\dd{\bm \alpha}$ is an ambiguity, as $\bm \lambda_\xi$ is defined up to an exact term (see \cref{ssec:zeroth-law}). For compact horizons,  $\delta \bm \Upsilon \wedge \dd{\bm \alpha}$ will integrate out as $\bm \Upsilon$ is closed on-shell, so only the harmonic part of $\bm \Phi$ matters. The harmonic form    $\bm{\eta
   }$ is an   element of the $(p-1)$-th de Rham cohomology $H^{p-1}_{\text{dR}}(\tilde{\mathcal{C}})$. We write it in a harmonic, orthonormal basis $\{\hat{\bm \eta}_\sigma\}$
    \begin{equation} \label{harmonicbasis}
        \bm \eta = \sum_\sigma \Phi_\sigma^{\mathcal H^+} \hat{\bm \eta}_\sigma\,.
    \end{equation}
 By the Poincar\'e duality between cohomology and homology, each $\hat{\bm \eta}_\sigma$ is dual to a non-trivial $(D-p-1)$-cycle $\sigma$ of $\tilde{\mathcal{C}}$. Furthermore, the coefficients $\Phi_\sigma^{\mathcal H^+}$ are constant on $\tilde{\mathcal C}$, since they are integrals of the potential form over the non-trivial $(p-1)$-cycles $\tilde{\sigma}\subset \tilde{\mathcal C}$ dual to $\star_{\tilde{\mathcal C}} \hat{\bm \eta}_\sigma$ (the Hodge dual of $\hat{\bm \eta}_\sigma$ within $\tilde{\mathcal C}$),
    \begin{equation} \label{electricpotentialfundeq}
      \Phi_\sigma^{\mathcal H^+} = \int_{\tilde \sigma} \bm \Phi   \big |_{\tilde{\mathcal{C}}} = \int_{\tilde \sigma} \bm \eta\,.
  \end{equation}
Here, we normalised the harmonic basis such that $\int_{\tilde \sigma_1}\hat{\bm \eta}_{\sigma_2}=\delta_{\sigma_{1} \sigma_{2}}. $
      
    Next, we use   Poincar\'e duality to further convert the integrals over $\tilde{\mathcal{C}}$ into integrals along $\sigma$,
 \begin{equation}
        \int_{\tilde{\mathcal{C}}} \mathbf{X} \wedge \hat{\bm \eta}_\sigma = \int_{\sigma} \mathbf{X}
    \end{equation}
    for any closed $(D-p-1)$-form $\mathbf{X}$. 
    Applying this to our setup (recall that $\delta \bm \Upsilon$ is closed by the perturbed equations of motion), we find 
    \begin{equation}
        - \int_{\tilde{\mathcal{C}}} \delta \bm \Upsilon \wedge \bm \Phi = - \int_{\tilde{\mathcal{C}}} \delta \bm \Upsilon \wedge \bm \eta = - \sum_\sigma \Phi_\sigma^{\mathcal H^+} \int_{\sigma} \delta \bm \Upsilon = \sum_\sigma \Phi_\sigma^{\mathcal H^+} \delta Q_\sigma \,,\label{eq:Phi-sigma-Q-sigma}
    \end{equation}
    where 
    \begin{equation} \label{pcharge}
        Q_\sigma = -\int_{\sigma} \bm \Upsilon
    \end{equation}
    is the electric charge on the cycle $\sigma$. We note that $\Phi_\sigma^{\mathcal H^+}$ and $Q_\sigma$ are defined up to a proportionality factor, since the final expression in   \eqref{eq:Phi-sigma-Q-sigma} is invariant under the rescalings $\Phi_\sigma^{\mathcal H^+} \to \Phi_\sigma^{\mathcal H^+} \lambda$ and $Q_\sigma \to Q_\sigma / \lambda$, if the variation of $\lambda$ vanishes. This $\lambda$ parameter relates to the different normalisation schemes in the Poincar\'e duality.
    
    The number of different species of electric charges equals the number of  non-trivial $(D-p-1)$-cycles, given by the rank of the homology group $H_{D-p-1} (\tilde{\mathcal{C}})$ of the horizon cross-section, i.e., by the $(D-p-1)$-th Betti number $b_{D-p-1}$. By Poincar\'{e} duality for the compact, oriented manifold $\tilde{\mathcal{C}}$, the homology group $H_{D-p-1} (\tilde{\mathcal{C}})$ is dual to the  de Rham cohomology group $H^{p-1}_{\text{dR}}(\tilde{\mathcal{C}})$, and their dimensions coincide,  $b_{D-p-1}=b_{p-1}$. Therefore, the number of independent potential-charge pairs is $b_{p-1}$, corresponding to the number of distinct topological sectors supporting electric flux on $\tilde{\mathcal C}$. Below we illustrate this with some examples of compact horizon topologies. 

   Further, assuming the horizon cycles  $\sigma$'s  are homologous inside the spacelike hypersurface~$\Sigma$ to   corresponding  non-trivial cycles of spatial infinity $\mathcal S_\infty$, the boundary integral  at $\mathcal S_\infty$ becomes
      \begin{equation} \label{infintegralcharge}
        - \int_\mathcal{\mathcal S_\infty} \delta \bm \Upsilon \wedge \bm \Phi    = \sum_\sigma \Phi_\sigma^{\infty} \delta Q_\sigma \,,  
    \end{equation}
   where  the coefficients $\Phi_\sigma^{\infty}$ are constants on $\mathcal S_\infty$. Here, we used that  the electric potential form is closed at infinity because of   fall-off condition 1 in section \ref{sec:falloff}. This implies that $\bm \Phi$ admits a decomposition into a harmonic and exact form at infinity, hence   the same arguments as above lead to \eqref{infintegralcharge}, with the coefficients 
$\Phi_\sigma^{\infty}$
 fixed by cycle integrals on $\mathcal S_\infty$, i.e., $\Phi_\sigma^{\infty} = \int_{\tilde \sigma_\infty} \mathbf{\Phi} |_{\mathcal S_\infty}$, where $\tilde \sigma_\infty \subset \mathcal S_\infty$ is a ($p-1$)-cycle that is Poincar\'{e} dual to $\star_{\mathcal S_\infty}\hat{\bm \eta}_{\sigma_\infty}$. Note that the charge variation associated to a non-trivial cycle of a horizon cross-section is the same as the charge variation at a homologous cycle of   spatial infinity --- both denoted as $ \delta Q_\sigma$ --- as follows from the linearised gauge field equation $\dd{(\delta \bm \Upsilon)}=0$. Further, we want to emphasise that the electric charge $Q_\sigma$ is gauge invariant, since $\mathbf \Upsilon$ only depends on gauge-invariant fields $\phi.$

We should point out, however, some black objects have   no non-trivial  $(D-p-1)$-cycles at spatial infinity. An example   is a 5-dimensional black ring charged under a 2-form gauge field, with the topology of $\mathcal S_\infty$ being   $S^3$ and   the horizon topology being  $S^1 \times S^2$. In this case, the horizon cycles are $\sigma = S^2$ and $\tilde \sigma = S^1$. However, the three-sphere at infinity  has no non-trivial  $S^2$ cycles (nor $S^1$ cycles). Therefore, the boundary integral \eqref{infintegralcharge} at $\mathcal S_\infty$ should vanish for these configurations. Indeed, if $\mathcal S_\infty$ has no non-trivial  $(D-p-1)$-cycles, then by Poincar\'{e} duality we have: $\text{dim}\,H_{D-p-1}(\mathcal S_\infty)=\text{dim}\, H^{p-1}_{\text{dR}}(\mathcal S_\infty)=0$,  hence every closed $(p-1)$-form on $\mathcal S_\infty$ is exact. In particular, $\dd{\mathbf \Phi}=0$ implies $ \mathbf \Phi = \dd \bm{\alpha}$. Since  $\dd {(\delta \mathbf \Upsilon)} =0$, the boundary integral \eqref{infintegralcharge} becomes the integral of an exact form $\dd (\delta   \mathbf{ \Upsilon } \wedge \bm{\alpha}  )$, which vanishes on the compact surface $\mathcal S_\infty$ due to Stokes' theorem. In the non-compact case this boundary integral can be set to zero by  fall-off condition~4 in section \ref{sec:falloff}. Thus, in this case, there are no electric potential-charge terms at infinity that contribute  to the first law.  
   
Now, assuming all $(D-p-1)$ horizon cycles $\sigma$ have a homologous counterpart at
$\mathcal S_\infty$,\footnote{It would be interesting to identify black objects for which $\mathcal S_\infty$ contains non-trivial cycles  (unlike the black ring) that are not homologous inside $\Sigma$ to $\sigma$, thus   violating the homology condition. In such cases, the electric charges defined at the horizon and at infinity differ and contribute separately to the first law.}   the difference between the boundary integrals of the gauge-dependent terms at $\mathcal C$ and $\mathcal S_\infty$ is  
 \begin{equation} \label{finalchargetermdifference}
   \left (  \int_{\mathcal{C}}- \int_{\mathcal{S}_\infty} \right) \left[ \delta_\varphi \mathbf{Q}^{\text{GD}}_\xi - \xi \cdot \mathbf{\Theta}^{\text{GD}}(\phi, \delta \mathbf{A}) - \delta \bm \Upsilon \wedge \bm \lambda_\xi \right] =   - \left ( \int_\mathcal{C} - \int_\mathcal{\mathcal S_\infty} \right) \delta \bm \Upsilon \wedge \bm \Phi    = \sum_\sigma \bar \Phi_\sigma \delta Q_\sigma \,,  
    \end{equation}
where $\bar \Phi_\sigma \equiv \Phi_\sigma^{\mathcal H^+} -  \Phi_\sigma^{\infty}$\,.  On the one hand, for asymptotically flat black holes the electric potential vanishes at spatial infinity, $\Phi_\sigma^{\infty} =0$, since the gauge field falls off as $\mathcal O (1/r^{D-3})$, so that  $\Phi_\sigma^{\mathcal H^+} \delta Q_\sigma$ appears in the first law. On the other hand, for asymptotically AdS black object, the electric potential at spatial infinity may be non-zero, if we impose the fall-off condition $\mathbf A = \mathbf A_\infty + \mathcal  O (1/r^{D-3})$.  In that case the charge term in the first law involves the difference in electric potential between the horizon and infinity. 

    We end with an important observation that the term $\mathbf{\Upsilon} \wedge \delta (\xi \cdot \mathbf{A})$ cancels in the combination $\delta_\varphi \mathbf{Q}^{\text{GD}}_\xi - \xi \cdot \mathbf{\Theta}^{\text{GD}}(\phi, \delta \mathbf{A}),$ as follows from \eqref{gdnoether} and \eqref{gdsympt}. In fact, $\mathbf{\Upsilon} \wedge (\delta  \xi \cdot \mathbf{A})$ already cancels within $\delta_\varphi \mathbf{Q}^{\text{GD}}_\xi$, since $\delta_\varphi$ only acts on the dynamical fields. This implies that a term like $Q_\sigma \delta \Phi_\sigma$ does not appear in the first law, as is required for a (contact) one-form on     thermodynamic phase space, since the electric potential and charge are conjugate variables, and only the variation of one of them  should feature in the thermodynamic first law. 

\subsubsection{Examples of compact horizon topologies}
\label{compacthorizontopologies}

  We give three examples of compact horizon cross-sections with various topologies to demonstrate the non-trivial cycles dual to the potential form and the different species of associated electric charges. 
\begin{itemize}
\item[A)] We first consider the case of a 5-dimensional  black ring with horizon topology $S^1 \times S^2$ charged under a 2-form gauge field $\mathbf B$ \cite{Emparan:2004wy,Emparan:2001wn}. In this case, the electric potential $\bm \Phi$ is a  1-form and the electric flux density $\bm \Upsilon$ is a  2-form. The electric dipole charge is associated with non-trivial 2-cycles $\sigma$ of the horizon cross-section, where, in this case, there is only one: the sphere $\sigma = S^2$. On the other hand, the cycle $\tilde \sigma$, dual to $\star_{\tilde{\mathcal C}} \hat{\bm \eta}_\sigma$, is the horizon cycle over which   the potential form is integrated, in this case:  the circle $\tilde \sigma = S^1$. The horizon charge term   that contributes to the first law is 
\begin{equation}
    - \int_{S^1 \times S^2} \delta \bm \Upsilon \wedge \bm \Phi = -\Phi_{S^2}\int_{S^2} \delta \bm \Upsilon =  \Phi_{S^2} \delta Q_{S^2}\,,
\end{equation}
where $\Phi_{S^2} = \int_{S^1} \bm \Phi$ is the electrostatic potential. We refer to section \ref{sec:blackrings} for more details.

\begin{figure}[h!]
    \centering

\tikzset{every picture/.style={line width=0.75pt}} %set default line width to 0.75pt        

\begin{tikzpicture}[x=0.75pt,y=0.75pt,yscale=-1,xscale=1]
%uncomment if require: \path (0,300); %set diagram left start at 0, and has height of 300

%Shape: Circle [id:dp29877954111684546] 
\draw  [color={rgb, 255:red, 208; green, 2; blue, 27 }  ,draw opacity=1 ] (60,110) .. controls (60,93.43) and (73.43,80) .. (90,80) .. controls (106.57,80) and (120,93.43) .. (120,110) .. controls (120,126.57) and (106.57,140) .. (90,140) .. controls (73.43,140) and (60,126.57) .. (60,110) -- cycle ;
%Shape: Circle [id:dp561883539814739] 
\draw  [color={rgb, 255:red, 74; green, 144; blue, 226 }  ,draw opacity=1 ] (156.67,108.33) .. controls (156.67,85.32) and (175.32,66.67) .. (198.33,66.67) .. controls (221.35,66.67) and (240,85.32) .. (240,108.33) .. controls (240,131.35) and (221.35,150) .. (198.33,150) .. controls (175.32,150) and (156.67,131.35) .. (156.67,108.33) -- cycle ;
%Shape: Arc [id:dp14253440803045225] 
\draw  [draw opacity=0][dash pattern={on 4.5pt off 4.5pt}] (240,108.33) .. controls (240,108.33) and (240,108.33) .. (240,108.33) .. controls (240,108.33) and (240,108.33) .. (240,108.33) .. controls (240,117.08) and (221.35,124.17) .. (198.33,124.17) .. controls (175.32,124.17) and (156.67,117.08) .. (156.67,108.33) -- (198.33,108.33) -- cycle ; \draw  [color={rgb, 255:red, 74; green, 144; blue, 226 }  ,draw opacity=1 ][dash pattern={on 4.5pt off 4.5pt}] (240,108.33) .. controls (240,108.33) and (240,108.33) .. (240,108.33) .. controls (240,108.33) and (240,108.33) .. (240,108.33) .. controls (240,117.08) and (221.35,124.17) .. (198.33,124.17) .. controls (175.32,124.17) and (156.67,117.08) .. (156.67,108.33) ;  
%Shape: Arc [id:dp04425670698593409] 
\draw  [draw opacity=0][dash pattern={on 4.5pt off 4.5pt}] (156.67,108.33) .. controls (156.67,99.59) and (175.32,92.5) .. (198.33,92.5) .. controls (221.35,92.5) and (240,99.59) .. (240,108.33) -- (198.33,108.33) -- cycle ; \draw  [color={rgb, 255:red, 74; green, 144; blue, 226 }  ,draw opacity=0.5 ][dash pattern={on 4.5pt off 4.5pt}] (156.67,108.33) .. controls (156.67,99.59) and (175.32,92.5) .. (198.33,92.5) .. controls (221.35,92.5) and (240,99.59) .. (240,108.33) ;  

% Text Node
\draw (91.17,68.17) node  [color={rgb, 255:red, 208; green, 2; blue, 27 }  ,opacity=1 ]  {$S^{1}$};
% Text Node
\draw (200.83,54.17) node  [color={rgb, 255:red, 74; green, 144; blue, 226 }  ,opacity=1 ]  {$S^{2}$};
% Text Node
\draw (85.67,168.5) node  [color={rgb, 255:red, 208; green, 2; blue, 27 }  ,opacity=1 ]  {$\displaystyle \int _{S^{1}}\mathbf{\Phi } =\Phi _{S^{2}}$};
% Text Node
\draw (205.67,168.83) node  [color={rgb, 255:red, 74; green, 144; blue, 226 }  ,opacity=1 ]  {$-\displaystyle \int _{S^{2}} \delta \mathbf{\Upsilon } =\delta Q_{S^{2}}$};
% Text Node
\draw (139.5,110) node    {$\times $};

\end{tikzpicture}

    \caption{A five-dimensional black ring charged under a 2-form gauge field has horizon topology $S^1 \times S^2$. The   potential form $\bm \Phi$ integrates on the $S^1$, while the electric flux density form $\delta \bm \Upsilon$ integrates on the~$S^2$.}
    \label{fig:black-ring}
\end{figure}
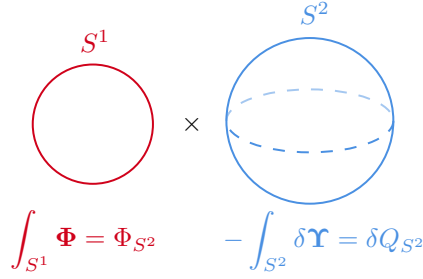

\item[B)] For a more involved example, consider a 6-dimensional toroidal blackfold with horizon topology $T^2 \times S^2= S^1_a \times S^1_b \times S^2$ \cite{Emparan:2009at,Emparan:2009vd}. To distinguish the two circles, we have labelled them by $a$ and $b$, respectively. 

Case (i): when a 6-dimensional blackfold is charged under a 2-form gauge field $\mathbf B_{(2)}$, with a 1-form electric potential $\bm \Phi_{(1)}$ and a 3-form electric flux density $\bm \Upsilon_{(3)}$, the non-trivial 3-cycles dual to the harmonic part of the potential form are: $\sigma_a = S^1_a \times S^2$ and $\sigma_b = S^1_b \times S^2$. Therefore, there exist two different species of electric charge that contribute to the first law:
\begin{equation}
    - \int_{S^1_a \times S^1_b \times S^2} \delta \bm \Upsilon_{(3)} \wedge \bm \Phi_{(1)} = - \Phi_{a} \int_{S^1_a \times S^2} \delta \bm \Upsilon_{(3)} - \Phi_{b} \int_{S^1_b \times S^2} \delta \bm \Upsilon_{(3)} = \Phi_{a} \delta Q_{a} + \Phi_{b} \delta Q_{b} 
\end{equation}
with
\begin{equation}
    \Phi_a = \int_{S^1_b} \bm \Phi_{(1)} \quad \text{and} \quad \Phi_b = \int_{S^1_a} \bm \Phi_{(1)}.
\end{equation}

\begin{figure}[t!]
    \centering

\tikzset{every picture/.style={line width=0.75pt}} %set default line width to 0.75pt        

\begin{tikzpicture}[x=0.75pt,y=0.75pt,yscale=-1,xscale=1]
%uncomment if require: \path (0,300); %set diagram left start at 0, and has height of 300

%Shape: Circle [id:dp821435904040114] 
\draw  [color={rgb, 255:red, 74; green, 144; blue, 226 }  ,draw opacity=1 ] (151,179) .. controls (151,162.43) and (164.43,149) .. (181,149) .. controls (197.57,149) and (211,162.43) .. (211,179) .. controls (211,195.57) and (197.57,209) .. (181,209) .. controls (164.43,209) and (151,195.57) .. (151,179) -- cycle ;
%Shape: Circle [id:dp8862335049490422] 
\draw  [color={rgb, 255:red, 65; green, 117; blue, 5 }  ,draw opacity=1 ] (247.67,177.33) .. controls (247.67,154.32) and (266.32,135.67) .. (289.33,135.67) .. controls (312.35,135.67) and (331,154.32) .. (331,177.33) .. controls (331,200.35) and (312.35,219) .. (289.33,219) .. controls (266.32,219) and (247.67,200.35) .. (247.67,177.33) -- cycle ;
%Shape: Arc [id:dp7146877676067476] 
\draw  [draw opacity=0][dash pattern={on 4.5pt off 4.5pt}] (331,177.33) .. controls (331,177.33) and (331,177.33) .. (331,177.33) .. controls (331,186.08) and (312.35,193.17) .. (289.33,193.17) .. controls (266.32,193.17) and (247.67,186.08) .. (247.67,177.33) -- (289.33,177.33) -- cycle ; \draw  [color={rgb, 255:red, 65; green, 117; blue, 5 }  ,draw opacity=1 ][dash pattern={on 4.5pt off 4.5pt}] (331,177.33) .. controls (331,177.33) and (331,177.33) .. (331,177.33) .. controls (331,186.08) and (312.35,193.17) .. (289.33,193.17) .. controls (266.32,193.17) and (247.67,186.08) .. (247.67,177.33) ;  
%Shape: Arc [id:dp08076384697266792] 
\draw  [draw opacity=0][dash pattern={on 4.5pt off 4.5pt}] (247.67,177.33) .. controls (247.67,177.33) and (247.67,177.33) .. (247.67,177.33) .. controls (247.67,168.59) and (266.32,161.5) .. (289.33,161.5) .. controls (312.35,161.5) and (331,168.59) .. (331,177.33) -- (289.33,177.33) -- cycle ; \draw  [color={rgb, 255:red, 65; green, 117; blue, 5 }  ,draw opacity=0.5 ][dash pattern={on 4.5pt off 4.5pt}] (247.67,177.33) .. controls (247.67,177.33) and (247.67,177.33) .. (247.67,177.33) .. controls (247.67,168.59) and (266.32,161.5) .. (289.33,161.5) .. controls (312.35,161.5) and (331,168.59) .. (331,177.33) ;  
%Shape: Circle [id:dp7131701906651595] 
\draw  [color={rgb, 255:red, 208; green, 2; blue, 27 }  ,draw opacity=1 ] (59,178.5) .. controls (59,161.93) and (72.43,148.5) .. (89,148.5) .. controls (105.57,148.5) and (119,161.93) .. (119,178.5) .. controls (119,195.07) and (105.57,208.5) .. (89,208.5) .. controls (72.43,208.5) and (59,195.07) .. (59,178.5) -- cycle ;
%Straight Lines [id:da6341768008921436] 
\draw [color={rgb, 255:red, 155; green, 155; blue, 155 }  ,draw opacity=1 ]   (90,250) -- (90,215.5) ;
\draw [shift={(90,213.5)}, rotate = 90] [color={rgb, 255:red, 155; green, 155; blue, 155 }  ,draw opacity=1 ][line width=0.75]    (10.93,-3.29) .. controls (6.95,-1.4) and (3.31,-0.3) .. (0,0) .. controls (3.31,0.3) and (6.95,1.4) .. (10.93,3.29)   ;
%Straight Lines [id:da8480273333407409] 
\draw [color={rgb, 255:red, 155; green, 155; blue, 155 }  ,draw opacity=1 ]   (230,252.5) -- (181.1,215.7) ;
\draw [shift={(179.5,214.5)}, rotate = 36.96] [color={rgb, 255:red, 155; green, 155; blue, 155 }  ,draw opacity=1 ][line width=0.75]    (10.93,-3.29) .. controls (6.95,-1.4) and (3.31,-0.3) .. (0,0) .. controls (3.31,0.3) and (6.95,1.4) .. (10.93,3.29)   ;
%Straight Lines [id:da17048129509429377] 
\draw [color={rgb, 255:red, 155; green, 155; blue, 155 }  ,draw opacity=1 ]   (230,252.5) -- (288.7,223.88) ;
\draw [shift={(290.5,223)}, rotate = 154.01] [color={rgb, 255:red, 155; green, 155; blue, 155 }  ,draw opacity=1 ][line width=0.75]    (10.93,-3.29) .. controls (6.95,-1.4) and (3.31,-0.3) .. (0,0) .. controls (3.31,0.3) and (6.95,1.4) .. (10.93,3.29)   ;
%Straight Lines [id:da7393206219259567] 
\draw [color={rgb, 255:red, 155; green, 155; blue, 155 }  ,draw opacity=1 ]   (179.5,142.5) -- (179.5,126.5) ;
\draw [shift={(179.5,144.5)}, rotate = 270] [color={rgb, 255:red, 155; green, 155; blue, 155 }  ,draw opacity=1 ][line width=0.75]    (10.93,-3.29) .. controls (6.95,-1.4) and (3.31,-0.3) .. (0,0) .. controls (3.31,0.3) and (6.95,1.4) .. (10.93,3.29)   ;
%Straight Lines [id:da6880357813466549] 
\draw [color={rgb, 255:red, 155; green, 155; blue, 155 }  ,draw opacity=1 ]   (133,96) -- (90.88,140.05) ;
\draw [shift={(89.5,141.5)}, rotate = 313.71] [color={rgb, 255:red, 155; green, 155; blue, 155 }  ,draw opacity=1 ][line width=0.75]    (10.93,-3.29) .. controls (6.95,-1.4) and (3.31,-0.3) .. (0,0) .. controls (3.31,0.3) and (6.95,1.4) .. (10.93,3.29)   ;
%Straight Lines [id:da01062558654643353] 
\draw [color={rgb, 255:red, 155; green, 155; blue, 155 }  ,draw opacity=1 ]   (217.5,90) -- (288.74,128.06) ;
\draw [shift={(290.5,129)}, rotate = 208.11] [color={rgb, 255:red, 155; green, 155; blue, 155 }  ,draw opacity=1 ][line width=0.75]    (10.93,-3.29) .. controls (6.95,-1.4) and (3.31,-0.3) .. (0,0) .. controls (3.31,0.3) and (6.95,1.4) .. (10.93,3.29)   ;

% Text Node
\draw (182.17,178.17) node  [color={rgb, 255:red, 74; green, 144; blue, 226 }  ,opacity=1 ]  {$S_{b}^{1}$};
% Text Node
\draw (291.83,177.17) node  [color={rgb, 255:red, 65; green, 117; blue, 5 }  ,opacity=1 ]  {$S^{2}$};
% Text Node
\draw (89.67,263) node  [color={rgb, 255:red, 0; green, 0; blue, 0 }  ,opacity=1 ]  {$\displaystyle \int _{\textcolor[rgb]{0.82,0.01,0.11}{S_{a}^{1}}}\mathbf{\Phi }_{( 1)} =\Phi _{b}$};
% Text Node
\draw (231.67,263.83) node  [color={rgb, 255:red, 0; green, 0; blue, 0 }  ,opacity=1 ]  {$-\displaystyle \int _{\textcolor[rgb]{0.29,0.56,0.89}{S_{b}^{1}} \times \textcolor[rgb]{0.25,0.46,0.02}{S^{2}}} \delta \mathbf{\Upsilon }_{( 3)} =\delta Q_{b}$};
% Text Node
\draw (229.5,178) node    {$\times $};
% Text Node
\draw (90.17,177.67) node  [color={rgb, 255:red, 208; green, 2; blue, 27 }  ,opacity=1 ]  {$S_{a}^{1}$};
% Text Node
\draw (134.5,178) node    {$\times $};
% Text Node
\draw (179.67,117) node  [color={rgb, 255:red, 0; green, 0; blue, 0 }  ,opacity=1 ]  {$\displaystyle \int _{\textcolor[rgb]{0.29,0.56,0.89}{S_{b}^{1}}}\mathbf{\Phi }_{( 1)} =\Phi _{a}$};
% Text Node
\draw (186.67,76.5) node  [color={rgb, 255:red, 0; green, 0; blue, 0 }  ,opacity=1 ]  {$-\displaystyle \int _{\textcolor[rgb]{0.82,0.01,0.11}{S_{a}^{1}} \times \textcolor[rgb]{0.25,0.46,0.02}{S}\textcolor[rgb]{0.25,0.46,0.02}{^{2}}} \delta \mathbf{\Upsilon }_{( 3)} =\delta Q_{a}$};

\end{tikzpicture}

    \caption{Blackfold with horizon topology $T^2 \times S^2$ charged under a 2-form gauge field. There are two species of electric charge: $Q_a$ associated to $\sigma_a = S^1_a \times S^2$ and $Q_b$ associated to $\sigma_b = S^1_b \times S^2$.}
    \label{fig:blackfold1}
\end{figure}

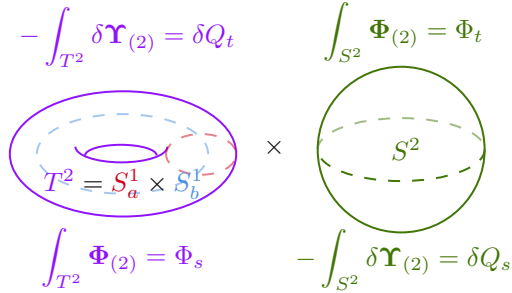
\begin{figure}[h!]
    \centering

\tikzset{every picture/.style={line width=0.75pt}} %set default line width to 0.75pt        

\begin{tikzpicture}[x=0.75pt,y=0.75pt,yscale=-1,xscale=1]
%uncomment if require: \path (0,300); %set diagram left start at 0, and has height of 300

%Shape: Ellipse [id:dp47545874781338937] 
\draw  [color={rgb, 255:red, 144; green, 19; blue, 254 }  ,draw opacity=1 ] (13.5,113.5) .. controls (13.5,96.1) and (38.8,82) .. (70,82) .. controls (101.2,82) and (126.5,96.1) .. (126.5,113.5) .. controls (126.5,130.9) and (101.2,145) .. (70,145) .. controls (38.8,145) and (13.5,130.9) .. (13.5,113.5) -- cycle ;
%Shape: Arc [id:dp06486649300652125] 
\draw  [draw opacity=0] (92.11,109.89) .. controls (92.11,114.24) and (81.84,117.76) .. (69.18,117.76) .. controls (56.52,117.76) and (46.25,114.24) .. (46.25,109.89) -- (69.18,109.89) -- cycle ; \draw  [color={rgb, 255:red, 144; green, 19; blue, 254 }  ,draw opacity=1 ] (92.11,109.89) .. controls (92.11,114.24) and (81.84,117.76) .. (69.18,117.76) .. controls (56.52,117.76) and (46.25,114.24) .. (46.25,109.89) ;  
%Shape: Arc [id:dp8205098702384079] 
\draw  [draw opacity=0] (51.07,114.31) .. controls (51.07,111.33) and (59.07,108.92) .. (68.94,108.92) .. controls (78.8,108.92) and (86.8,111.33) .. (86.8,114.31) -- (68.94,114.31) -- cycle ; \draw  [color={rgb, 255:red, 144; green, 19; blue, 254 }  ,draw opacity=1 ] (51.07,114.31) .. controls (51.07,111.33) and (59.07,108.92) .. (68.94,108.92) .. controls (78.8,108.92) and (86.8,111.33) .. (86.8,114.31) ;  
%Shape: Circle [id:dp8556890531213932] 
\draw  [color={rgb, 255:red, 65; green, 117; blue, 5 }  ,draw opacity=1 ] (167.17,111.33) .. controls (167.17,88.32) and (185.82,69.67) .. (208.83,69.67) .. controls (231.85,69.67) and (250.5,88.32) .. (250.5,111.33) .. controls (250.5,134.35) and (231.85,153) .. (208.83,153) .. controls (185.82,153) and (167.17,134.35) .. (167.17,111.33) -- cycle ;
%Shape: Arc [id:dp08457916301590251] 
\draw  [draw opacity=0][dash pattern={on 4.5pt off 4.5pt}] (250.5,111.33) .. controls (250.5,111.33) and (250.5,111.33) .. (250.5,111.33) .. controls (250.5,111.33) and (250.5,111.33) .. (250.5,111.33) .. controls (250.5,120.08) and (231.85,127.17) .. (208.83,127.17) .. controls (185.82,127.17) and (167.17,120.08) .. (167.17,111.33) -- (208.83,111.33) -- cycle ; \draw  [color={rgb, 255:red, 65; green, 117; blue, 5 }  ,draw opacity=1 ][dash pattern={on 4.5pt off 4.5pt}] (250.5,111.33) .. controls (250.5,111.33) and (250.5,111.33) .. (250.5,111.33) .. controls (250.5,111.33) and (250.5,111.33) .. (250.5,111.33) .. controls (250.5,120.08) and (231.85,127.17) .. (208.83,127.17) .. controls (185.82,127.17) and (167.17,120.08) .. (167.17,111.33) ;  
%Shape: Arc [id:dp5243960956981879] 
\draw  [draw opacity=0][dash pattern={on 4.5pt off 4.5pt}] (167.17,111.33) .. controls (167.17,111.33) and (167.17,111.33) .. (167.17,111.33) .. controls (167.17,102.59) and (185.82,95.5) .. (208.83,95.5) .. controls (231.85,95.5) and (250.5,102.59) .. (250.5,111.33) -- (208.83,111.33) -- cycle ; \draw  [color={rgb, 255:red, 65; green, 117; blue, 5 }  ,draw opacity=0.5 ][dash pattern={on 4.5pt off 4.5pt}] (167.17,111.33) .. controls (167.17,111.33) and (167.17,111.33) .. (167.17,111.33) .. controls (167.17,102.59) and (185.82,95.5) .. (208.83,95.5) .. controls (231.85,95.5) and (250.5,102.59) .. (250.5,111.33) ;  
%Shape: Ellipse [id:dp3412364973019355] 
\draw  [color={rgb, 255:red, 74; green, 144; blue, 226 }  ,draw opacity=0.5 ][dash pattern={on 4.5pt off 4.5pt}] (27,113.5) .. controls (27,102.45) and (46.14,93.5) .. (69.75,93.5) .. controls (93.36,93.5) and (112.5,102.45) .. (112.5,113.5) .. controls (112.5,124.55) and (93.36,133.5) .. (69.75,133.5) .. controls (46.14,133.5) and (27,124.55) .. (27,113.5) -- cycle ;
%Shape: Ellipse [id:dp7290244174169034] 
\draw  [color={rgb, 255:red, 208; green, 2; blue, 27 }  ,draw opacity=0.5 ][dash pattern={on 4.5pt off 4.5pt}] (91.5,113.75) .. controls (91.5,108.09) and (99.34,103.5) .. (109,103.5) .. controls (118.66,103.5) and (126.5,108.09) .. (126.5,113.75) .. controls (126.5,119.41) and (118.66,124) .. (109,124) .. controls (99.34,124) and (91.5,119.41) .. (91.5,113.75) -- cycle ;

% Text Node
\draw (70,165.83) node  [color={rgb, 255:red, 144; green, 19; blue, 254 }  ,opacity=1 ]  {$\displaystyle \int _{T^{2}}\mathbf{\Phi }_{( 2)} =\Phi _{s}$};
% Text Node
\draw (210.17,166.67) node  [color={rgb, 255:red, 65; green, 117; blue, 5 }  ,opacity=1 ]  {$-\displaystyle \int _{S^{2}} \delta \mathbf{\Upsilon }_{( 2)} =\delta Q_{s}$};
% Text Node
\draw (145.5,110) node    {$\times $};
% Text Node
\draw (211.33,111.17) node  [color={rgb, 255:red, 65; green, 117; blue, 5 }  ,opacity=1 ]  {$S^{2}$};
% Text Node
\draw (70.33,128.67) node  [color={rgb, 255:red, 0; green, 0; blue, 0 }  ,opacity=1 ]  {$\textcolor[rgb]{0.56,0.07,1}{T^{2}} =\textcolor[rgb]{0.82,0.01,0.11}{S_{a}^{1}} \times \textcolor[rgb]{0.29,0.56,0.89}{S_{b}^{1}}$};
% Text Node
\draw (210,52.33) node  [color={rgb, 255:red, 65; green, 117; blue, 5 }  ,opacity=1 ]  {$\displaystyle \int _{S^{2}}\mathbf{\Phi }_{( 2)} =\Phi _{t}$};
% Text Node
\draw (72.17,54.67) node  [color={rgb, 255:red, 144; green, 19; blue, 254 }  ,opacity=1 ]  {$-\displaystyle \int _{T^{2}} \delta \mathbf{\Upsilon }_{( 2)} =\delta Q_{t}$};

\end{tikzpicture}

    \caption{Blackfold with horizon topology $T^2 \times S^2$ charged under 3-form gauge field $\mathbf B_{(3)}$. There are two species of electric charge: $Q_t$ associated to $\sigma_t = S^1_a\times S^1_b$ and $Q_s$ associated to $\sigma_s = S^2$. }
    \label{fig:blackfold2}
\end{figure}

Case (ii): when the blackfold is charged under a 3-form gauge field $\mathbf B_{(3)}$ with electric potential $\bm \Phi_{(2)}$ and electric flux density $\bm \Upsilon_{(2)}$, the non-trivial horizon 2-cycles are: $\sigma_t = S^1_a \times S^1_b$ and $\sigma_s = S^2$. Accordingly, in the first law we have the potential-charge terms
\begin{equation}
    - \int_{S^1_a \times S^1_b \times S^2} \delta \bm \Upsilon_{(2)} \wedge \bm \Phi_{(2)} = - \Phi_t \int_{S^1_a \times S^1_b} \delta \bm \Upsilon_{(2)} - \Phi_s \int_{S^2} \delta \bm \Upsilon_{(2)} = \Phi_t \delta Q_t + \Phi_s \delta Q_s 
\end{equation}
with 
\begin{equation}
    \Phi_t = \int_{S^2} \bm \Phi_{(2)} \quad \text{and} \quad \Phi_s = \int_{S^1_a \times S^1_b} \bm \Phi_{(2)}.
\end{equation}

\item[C)] Finally, we generalise the first two examples to a blackfold with horizon topology $T^m \times S^n$ where $T^m = S^1 \times \cdots \times S^1$ ($m$ copies of $S^1$) is an $m$-torus, and $S^n$ is a topological $n$-sphere. When it is charged under a $p$-form gauge field $\mathbf A_{(p)}$, the electric potential form is a $(p-1)$-form~$\bm \Phi_{(p-1)}$. Then, the non-trivial cycles $\{\sigma\}$ dual to the harmonic part of $\bm \Phi_{(p-1)}$ are 
\begin{equation}
    \sigma = \begin{cases}
        T^{m-p+1} \times S^n, & p-1 < n\\
        T^{m-p+1} \times S^n \quad \text{or} \quad  T^{m+n-p+1}, & p - 1 \geq n.
    \end{cases}
\end{equation}
where $T^{m-p+1}$ are $(m-p+1)$-subtori of $T^m$, and there are $\binom{m}{p-1}$ species of them. Similarly for $T^{m+n-p+1}$. The out-of-range cases are zero automatically. In the end, each of these non-trivial cycles would have a different type of electric charge associated to it. The number of potential-charge pairs is given by K\"{u}nneth's formula for the ($p-1$)-th Betti number of~$T^m \times S^n:$
\begin{equation}
    b_{p-1} (T^m \times S^n) =  \begin{dcases}
        \binom{m}{p-1}, & p - 1 <n\\
        \binom{m}{p-1} +\binom{m}{p-n-1}, & p-1 \geq n\,.
    \end{dcases}
\end{equation}
For instance, $b_1(S^1 \times S^2)=1$, showing the black ring example  only has one type of electric charge. Further, $b_1 (T^2 \times S^2 ) = 2$ and $b_2 (T^2 \times S^2)=2$, in agreement with the two potential-charge pairs of case (i) and (ii) in the previous example.
\end{itemize}

  \subsubsection{Non-compact horizon cross-sections} \label{sssec:non-compact}

For non-compact horizon slices $\mathcal C$ of the form $ \tilde{\mathcal{C}} \times \Sigma_k$, where $\tilde{\mathcal{C}}$ is a $(D-2-k)$-dimensional compact submanifold and $\Sigma_k$ is a non-compact submanifold with    the topology of $\mathbb R^k$,  the generalisation of \eqref{finalchargetermdifference} is more involved, since the Hodge decomposition theorem only holds on compact manifolds.  We want to decompose    $\int_{\mathcal C} \delta \bm \Upsilon \wedge \bm \Phi$ into integrations along $\tilde{\mathcal{C}}$ and $\Sigma_k$, respectively, and use the ``fibre integration'' technique to ``downgrade'' $\delta \bm \Upsilon$ and $\bm \Phi$ to lower differential forms living on $\tilde{\mathcal{C}}$. 

Recall that $\delta \bm \Upsilon$ is a $(D-p-1)$-form, and $\bm \Phi$ is a $(p-1)$-form. We impose the constant electric potential gauge \eqref{constpotgaugeoriginal}, such that   $\bm \Phi$ is constant along the null and the extended $k $ directions of the future horizon.  Further, we assign   $x^{i_1}, x^{i_2}, \cdots $ as coordinates on~$\mathcal{C}$;
        $\theta^{\alpha_1}, \theta^{\beta_2}, \cdots$ as coordinates of the compact subspace $\tilde{\mathcal{C}}$; and 
         $y^{\aleph_1}, y^{\aleph_2}, \cdots$ as coordinates along the extended (topologically flat) directions of $\Sigma_k$. 
   
    We first calculate the pullback of $ \delta \bm \Upsilon \wedge \bm \Phi$ to $\mathcal C$ 
    \begin{align}
            & \delta \bm \Upsilon \wedge \bm \Phi \ceq \mathcal N \delta \Upsilon_{[i_1 \cdots i_{D-p-1}} \Phi_{i_{D-p} \cdots i_{D-2}]} \dd{x^{i_1}} \wedge \cdots \wedge \dd{x^{i_{D-2}}} \nonumber\\
            & \ceq \mathcal N \Biggl[C^{D-p-1}_{0} C^{p-1}_k \delta \Upsilon_{[\alpha_1 \cdots \alpha_{D-p-1}} \Phi_{\aleph_1 \cdots \aleph_k \beta_1 \cdots \beta_{p-k-1}]} \dd[D-p-1]{\theta^{\alpha}} \wedge \dd[k]{y^\aleph} \wedge \dd[p-k-1]{\theta^\beta}\nonumber\\
            & \qquad \qquad + C^{D-p-1}_{1} C^{p-1}_{k-1} \delta \Upsilon_{[\alpha_1 \cdots \alpha_{D-p-2}\aleph_1} \Phi_{\aleph_2 \cdots \aleph_k \beta_1 \cdots \beta_{p-k}]} \dd[D-p-2]{\theta^{\alpha}} \wedge \dd[k]{y^\aleph} \wedge \dd[p-k]{\theta^\beta}\nonumber \\
            & \qquad \qquad + \cdots\\
            & \qquad \qquad + C^{D-p-1}_{k} C^{p-1}_{0} \delta \Upsilon_{[\alpha_1 \cdots \alpha_{D-p-k-1} \aleph_1 \cdots \aleph_k } \Phi_{\beta_1 \cdots \beta_{p-1}]} \dd[D-p-k-1]{\theta^{\alpha}} \wedge \dd[k]{y^\aleph} \wedge \dd[p-1]{\theta^\beta} \Biggr]\nonumber\\
            & \ceq \mathcal N \sum_{r=0}^{k} C^{D-p-1}_{r} C^{p-1}_{k-r} \delta \Upsilon_{[\alpha_1 \cdots \alpha_{D-p-r-1}\aleph_1 \cdots \aleph_r} \Phi_{\aleph_{r+1} \cdots \aleph_k \beta_{1} \cdots \beta_{p+r-k-1}]} \dd[D-p-r-1]{\theta^\alpha} \wedge \dd[k]{y^\aleph} \wedge \dd[p+r-k-1]{\theta^\beta}\,, \nonumber
        \end{align}
    where, for convenience, we have denoted
    \begin{equation}
        \mathcal N = \frac{1}{(p-1)!(D-p-1)!}
    \end{equation}
    and $C^m_n$ is the binomial factor 
    \begin{equation}
        C^m_{n} = {m \choose n} = \frac{m!}{n! (m-n)!}
    \end{equation}
    We also introduced shorthand notations
    \begin{equation}
        \dd[n]{\theta^{\alpha}} = \dd{\theta^{\alpha_1}}\wedge \cdots \wedge \dd{\theta^{\alpha_n}}
    \end{equation}
    and, similarly, for $\dd[k]{y^\aleph}$.

    Then, the integral over $\mathcal{C}$ becomes 
    \begin{equation} \label{noncomapctintermediate}
        \int_{\mathcal{C}} \delta \bm \Upsilon \wedge \bm \Phi = \sum_{r=0}^{k} \int_{\tilde{\mathcal{C}}} \delta \tilde{\bm \Upsilon}_{(D-p-r-1)}^{\aleph_{r+1} \cdots \aleph_{k}} \wedge \tilde{\bm \Phi}^{(p+r-k-1)}_{\aleph_{r+1} \cdots \aleph_{k}}\,,
    \end{equation}
    where summation over $\aleph_{r+1} \cdots \aleph_{k}$ is implied. The ($p+r-k-1$)-form is defined as
    \begin{equation} \label{downgradedphiform}
        \tilde{\bm \Phi}^{(p+r-k-1)}_{\aleph_{r+1} \cdots \aleph_{k}} \equiv \frac{1}{(k-r)!}\,\iota^*[m_{\aleph_k}  \cdot (m_{\aleph_{k-1}} \cdot ({\dots} \cdot (m_{\aleph_{r+1}} \cdot \bm \Phi)))]\,,
    \end{equation}
    with $\iota: \tilde{\mathcal{C}} \to \mathcal{C}$ denoting the inclusion map and $\iota^*$ the pullback. Moreover, the other $(D-p-r-1)$-form is defined as
    \begin{equation}
        \delta \tilde{\bm \Upsilon}^{\aleph_{r+1} \cdots \aleph_k}_{(D-p-r-1)} \equiv \frac{1}{r!}\, \iota^* \int_{\Sigma_k} [((\delta \bm \Upsilon \cdot m_{\aleph_r}) \cdot \dots) \cdot m_{\aleph_1}] \dd{y^{\aleph_1}} \cdots \dd{y^{\aleph_k}}\,,
    \end{equation}
 where $(\cdots)\cdot m_\aleph$ is the contraction with the last index. Such procedure is known as ``fibre integration'', that is,  we integrated out the legs in the extended directions to obtain a lower-degree differential form on $\tilde{\mathcal{C}}$. Note that the integration in the extended directions is included in the definition of $\delta \tilde{\bm{\Upsilon}}$ as the perturbation could depend on the $y^\aleph$ coordinates, but it is not included in $\tilde{\bm{\Phi}}$ because the potential form is independent of the $y^\aleph$ coordinates.

Now that we have ``downgraded''  $\delta \bm \Upsilon$ and $\bm \Phi$ to lower-degree  forms   on (submanifolds of) the compact manifold $\tilde{\mathcal{C}}$, we can use similar algebraic topological arguments as we did before in the case of compact horizons. 
In order to do so, we need one more fact, namely that   both of these downgraded forms are closed on $\tilde{\mathcal{C}}$
 \begin{equation} \label{downgradedoncompactsubspace}
        \dd{(\delta \tilde{\bm \Upsilon}^{\aleph_{r+1} \cdots \aleph_k}_{(D-p-r-1)})} \tceq 0\,, \qquad \dd{\tilde{\bm \Phi}^{(p+r-k-1)}_{\aleph_{r+1}\cdots \aleph_k}} \tceq 0\,.
    \end{equation}
    We leave the proofs of these equations to    appendix \ref{appc}.
    Since $\tilde{\bm \Phi}^{(p+r-k-1)}_{\aleph_{r+1}\cdots \aleph_k}$ is a closed form on a compact manifold, we can  apply Hodge decomposition   
    \begin{equation}
        \tilde{\bm \Phi}^{(p+r-k-1)}_{\aleph_{r+1}\cdots \aleph_k} \tceq \bm \eta^{(p+r-k-1)}_{\aleph_{r+1}\cdots \aleph_k} + \dd{\bm \alpha^{(p+r-k-2)}_{\aleph_{r+1}\cdots \aleph_k}}\,,
    \end{equation}
    where $\bm \eta^{(p+r-k-1)}_{\aleph_{r+1}\cdots \aleph_k}$ is a harmonic   form, which hence satisfies the Laplace equation $\Delta \bm \eta^{(p+r-k-1)}_{\aleph_{r+1}\cdots \aleph_k} = 0$.
    We can ignore the exact terms, because they would integrate to zero in \eqref{noncomapctintermediate}, since $\delta \tilde{\bm \Upsilon}^{\aleph_{r+1} \cdots \aleph_k}_{(D-p-r-1)}$ is closed and $\tilde{\mathcal{C}}$ is compact. Hence, only the   harmonic parts of the downgraded potential forms contribute to the integral in \eqref{noncomapctintermediate}. As before, they can be resolved in an orthonormal harmonic basis $\{\hat {\bm \eta}^{(p+r-k-1)}_{\sigma_r}\}$ 
    \begin{equation}
        \bm \eta^{(p+r-k-1)}_{\aleph_{r+1}\cdots \aleph_k} = \sum_{\sigma_r} \Phi^{\sigma_r, \mathcal{H}^+}_{\aleph_{r+1}\cdots \aleph_k} \hat{\bm \eta}^{(p+r-k-1)}_{\sigma_r}\,.
    \end{equation}
Here, $\sigma_r$'s are the non-trivial $(D-p-r-1)$-cycles dual to $\hat {\bm \eta}^{(p+r-k-1)}_{\sigma_r}$'s, and $\Phi^{\sigma_r, \mathcal{H}^+}_{\aleph_{r+1}\cdots \aleph_k}$ are constants on $\tilde{\mathcal{C}}$ that are fixed by integrals of the downgraded potential form~\eqref{downgradedphiform} over non-trivial $(p+r-k-1)$-dimensional cycles $\tilde\sigma_r \subset \tilde{\mathcal{C}}$ dual to $\star_{\tilde{\mathcal C}}\hat{\bm \eta}^{(p+r-k-1)}_{\sigma_r}$.
    Further, we proceed as before using the cohomology/homology duality:
    \begin{equation}
        \int_{\mathcal{C}} \mathbf{X} \wedge \hat{\bm \eta}^{(p+r-k-1)}_{\sigma_r} = \int_{\sigma_r} \mathbf{X} \,,
    \end{equation}
    for any closed $(D-p-r-1)$-form $\mathbf{X}$. This allows us to compute the boundary integral under consideration 
    \begin{equation}
        \begin{split}
            - \int_\mathcal{C} \delta \bm \Upsilon \wedge \bm \Phi & = -\sum_{r=0}^{k} \sum_{\sigma_r} \Phi^{\sigma_r, \mathcal{H}^+}_{\aleph_{r+1}\cdots \aleph_k} \int_{\tilde{\mathcal{C}}} \delta \tilde{\bm \Upsilon}_{(D-p-r-1)}^{\aleph_{r+1} \cdots \aleph_{k}} \wedge  \hat{\bm \eta}^{(p+r-k-1)}_{\sigma_r}\\
            & = - \sum_{r=0}^{k} \sum_{\sigma_r} \Phi^{\sigma_r, \mathcal{H}^+}_{\aleph_{r+1}\cdots \aleph_k} \int_{\sigma_r} \delta \tilde{\bm \Upsilon}_{(D-p-r-1)}^{\aleph_{r+1} \cdots \aleph_{k}}\\
            & = \sum_{r=0}^{k} \sum_{\sigma_r} \Phi^{\sigma_r, \mathcal{H}^+}_{\aleph_{r+1}\cdots \aleph_k} \delta Q^{\aleph_{r+1} \cdots \aleph_{k}}_{\sigma_r}\,, \label{finalnoncompactchargeterm}
        \end{split}
    \end{equation}
    where 
    \begin{equation}
        Q^{\aleph_{r+1} \cdots \aleph_{k}}_{\sigma_r} = - \int_{\sigma_r} \tilde{\bm \Upsilon}_{(D-p-r-1)}^{\aleph_{r+1} \cdots \aleph_{k}}
    \end{equation}
    is the electric charge on cycle $\sigma_r$ whose species is labelled by $\aleph_{r+1} \cdots \aleph_{k}$.

We can compute the boundary integral over $\mathcal S_\infty$ in a similar fashion, so that the generalisation of \eqref{finalchargetermdifference} to non-compact horizons becomes 
 \begin{equation} \label{finalchargetermdifferencenoncompact}
   \left (  \int_{\mathcal{C}}- \int_{\mathcal{S}_\infty} \right) \left[ \delta_\varphi \mathbf{Q}^{\text{GD}}_\xi - \xi \cdot \mathbf{\Theta}^{\text{GD}}(\phi, \delta \mathbf{A}) - \delta \bm \Upsilon \wedge \bm \lambda_\xi \right] =    \sum_{r=0}^{k} \sum_{\sigma_r} \bar \Phi^{\sigma_r}_{\aleph_{r+1}\cdots \aleph_k} \delta Q^{\aleph_{r+1} \cdots \aleph_{k}}_{\sigma_r} \,,  
    \end{equation}
where the bar on the electric  potential constants denotes the difference between the values at the horizon and at infinity. Here, we assumed again that $\mathcal S_\infty$ has non-trivial cycles that are homologous inside the spatial hypersurface $\Sigma$ to the horizon cycles $\sigma_r$. Compared to \eqref{finalchargetermdifference} we now have an additional sum over labels associated to the   extended directions.

For non-compact horizon cross-sections, the integrals of the electric flux over the extended dimensions diverge, because the integration along these extended directions produces an infinite volume. To obtain finite and physically meaningful quantities, one therefore considers densities rather than total charges. Concretely,  one could quotient out the divergent factor  
$
\omega_k \;=\; \int_{\Sigma_k}\bm{\epsilon}_k ,
$
where $\bm{\epsilon}_k$ is the volume form on $\Sigma_k$, and define the finite charge density as  
$
\rho_{\sigma_r} \;=\; Q_{\sigma_r}/\omega_k 
$
for each cycle. The coefficients $\bar \Phi_{\sigma_r}$   are already   finite and well defined. This prescription matches the usual treatment of planar black holes and black branes.  We will see an explicit realisation of this procedure in section \ref{sec:exam}, but in the following  we will formulate the first law in terms of  the total charge.
      
\subsection{Gauge-invariant part:  dynamical entropy variation}
Next, we move to the gauge-invariant part of the fundamental identity~\eqref{fundintid}. That is, we evaluate the  two integrals of the gauge-independent terms  over an arbitrary  horizon cross-section and  over spatial infinity.  For stationary, axisymmetric black holes and branes, one can express the horizon Killing   vector field at infinity generically in terms   of asymptotic time translations and rotations
    \begin{equation}
        \xi = \pdv{t} + \sum_I \Omega_{I} \pdv{\vartheta^I}\,,
    \end{equation} 
    where $\Omega_I$ is the angular velocity of the horizon associated with rotation in the $\vartheta^I$ direction. If the geometry has no compact angular isometries, like   planar black holes, then the angular velocities vanish. 
   The boundary integral of the gauge-invariant forms at infinity    gives rise to the variation of the canonical mass $ M$ and angular momenta $ J_I$:
    \begin{equation}
        \int_{\mathcal S_\infty} \left( \delta_\phi \mathbf{Q}_\xi^{\text{GI}} - \xi \cdot \mathbf{\Theta}^{\text{GI}}(\phi, \delta \phi) \right) = \delta M - \sum_I \Omega_I \delta J_I\,. \label{eq:int-S-infty}
    \end{equation}
    Plugging \eqref{eq:int-S-infty} and \eqref{finalchargetermdifferencenoncompact} into  \eqref{fundintid} yields
    \begin{equation} \label{prefirstlaws}
        \int_{\mathcal{C}} \left( \delta_\phi \mathbf{Q}^{\text{GI}}_\xi - \xi \cdot \mathbf{\Theta}^{\text{GI}}(\phi, \delta \phi) \right) = \delta M - \sum_I \Omega_I \delta J_I - \sum_{r=0}^{k} \sum_{\sigma_r} \bar \Phi^{\sigma_r, \mathcal{H}^+}_{\aleph_{r+1}\cdots \aleph_k} \delta Q^{\aleph_{r+1} \cdots \aleph_{k}}_{\sigma_r}.
    \end{equation}
    We immediately see that the left-hand side~should be identified with the product of the Hawking temperature $T=\kappa/2 \pi$ and the variation of the \emph{dynamical entropy} of the black object:
    \begin{equation} \label{dynvarent}
        \int_{\mathcal{C}}\left(\delta_\phi \mathbf{Q}^\text{GI}_\xi - \xi \cdot \mathbf{\Theta}^\text{GI}(\phi, \delta \phi)\right) = T \delta S_\text{dyn}\,.
    \end{equation}
In previous work  \cite{Hollands:2024vbe,Visser:2024pwz} it was shown that the dynamical entropy is   well defined if the following two consistency conditions hold:
\begin{itemize}
    \item[a)]  There exists a codimension-1 form $\mathbf B_{\mathcal H^+} (\phi)$ such that
    \begin{equation} \label{entropyconditiona}
    \mathbf \Theta^\text{GI} (\phi, \delta \phi) \fheq\delta \mathbf B^\text{GI}_{\mathcal H^+} (\phi)\,,
\,\qquad \mathbf B_{\mathcal H^+}  (\phi)\fheq 0.
\end{equation}
\item[b)] The field-only variation  of the Noether charge $\delta_\phi \mathbf{Q}_\xi^{\text{GI}}\equiv \delta \mathbf{Q}_\xi^{\text{GI}} -  \mathbf{Q}_{\delta \xi}^{\text{GI}}$ must satisfy
\begin{equation} \label{entropyconditionb}
    \delta_\phi \mathbf{Q}_\xi^{\text{GI}} = \kappa \delta \!\left ( \mathbf{Q}^{\text{GI}}_\xi / \kappa_3 \right)\,.
\end{equation}
Here, the surface gravity $\kappa_3$ is defined on the background Killing horizon  as 
\begin{equation} \label{kappa3}
    \kappa_3^2 \overset{\mathcal H^+}{\equiv} -\frac{1}{2} (\nabla^a \xi^b) (\nabla_{[a} \xi_{b]})\,, 
\end{equation}
and its variation on the perturbed event horizon can be expressed in terms of the variation of the background horizon Killing field, see equation (2.31) in \cite{Visser:2024pwz}.
\end{itemize}
Condition a) was proposed by Hollands, Wald and Zhang \cite{Hollands:2024vbe}, and they   proved   that such a form $\mathbf{B}_{\mathcal H^+}$ exists  for general diffeomorphism-invariant Lagrangians for which the metric is the only dynamical field. In our follow-up work \cite{Visser:2024pwz} we showed   that \eqref{entropyconditiona} holds for any diffeomorphism-invariant Lagrangians whose dynamical fields are the metric and arbitrary non-minimally coupled bosonic matter
fields that are smooth on the horizon.   Further, condition b) appeared for the first time in our work \cite{Visser:2024pwz}, and was proven for the same  general class of diffeomorphism-invariant Lagrangian. This condition is necessary  if one allows for variations of the Killing field and the surface gravity --- which were both kept fixed in \cite{Hollands:2024vbe} --- since the Noether charge depends on the Killing field. Crucially, it was shown that the $\delta \kappa_3$ term (i.e, the temperature variation) drops out in the dynamical first law of black objects, as it should, so that only $T \delta S_{\text{dyn}}$ appears in the first law and not $S_{\text{dyn}} \delta T$. This is similar to our earlier observation that $Q_\sigma \delta \Phi_\sigma$ does not appear in the first law (see the end of section~\ref{subsec:gd}).

Now, what is relevant for the present paper, is that the two conditions continue to hold for   diffeomorphism-invariant Lagrangians that depend on non-minimally coupled abelian $p$-form gauge fields that are non-smooth on the Killing horizon. That is because the differential forms in \eqref{entropyconditiona} and \eqref{entropyconditionb} are gauge invariant, hence they do not depend explicitly on the gauge field $\mathbf A$, and thus they are smooth on the horizon. Therefore, the proofs of conditions a) and b) from our previous paper  \cite{Visser:2024pwz} remain valid here, since  they apply to non-minimally coupled bosonic matter fields (such as the field strength $\mathbf{F}$)  that are smooth on the horizon. 

Given that these two conditions are satisfied in our setup,  the dynamical gravitational entropy  is defined up to linear order in perturbation theory away from a stationary background  as
\begin{equation} \label{defdynentropyfinal}
    S_{\text{dyn}} = \int_{\mathcal C} \frac{2\pi}{\kappa_3}  \tilde{\mathbf{Q}}_\xi^{\text{GI}} = \int_{\mathcal C} \frac{2\pi }{\kappa_3} (\mathbf{Q}_\xi^{\text{GI}} - \xi \cdot \mathbf{B}^{\text{GI}}_{\mathcal H^+})\,,
\end{equation}  
where $\tilde{\mathbf{Q}}_\xi$ is called the improved Noether charge. 
This definition  of the dynamical gravitational entropy is the same as the one proposed by Hollands, Wald and Zhang \cite{Hollands:2024vbe}, except that here we define the entropy in terms of the \emph{gauge-invariant} part of the improved Noether charge. In their original paper \cite{Hollands:2024vbe},   HWZ did not consider gauge fields, hence this extra subtlety about gauge dependence did not play a role in their work. It is satisfying that the gravitational entropy is a gauge-invariant quantity (see, e.g.,   \cite{Hajian:2022lgy}), since otherwise the entropy would depend on a gauge choice. 
    
    Finally, inserting \eqref{dynvarent} into \eqref{prefirstlaws}, we obtain the first law for non-stationary perturbations of a  black object charged under $p$-form gauge  fields
    \begin{equation}
       T \delta S_\text{dyn} = \delta M - \sum_I \Omega_I \delta J_I - \sum_{r=0}^{k} \sum_{\sigma_r}\bar  \Phi^{\sigma_r, \mathcal{H}^+}_{\aleph_{r+1}\cdots \aleph_k} \delta Q^{\aleph_{r+1} \cdots \aleph_{k}}_{\sigma_r}\,.
    \end{equation}

    \subsection{Comments on magnetic charges}
    \label{ssec:magnetic}

In \cref{subsec:gd} we only addressed the inclusion of electric charges in the black hole first law. Here, we   comment on the more subtle case of magnetically charged black objects. Again, for simplicity, we assume the horizon cross-section to be compact $\mathcal C = \tilde{\mathcal C}$. 

The  charge of a magnetic monopole is the integral of the field strength over a compact surface, i.e., $P \propto \oint \mathbf F$. This is only nonzero if the field strength is not globally exact  $\mathbf F \neq \dd \mathbf A.$ However, locally, within a particular coordinate chart, we can still have $\mathbf F = \dd \mathbf A.$
Magnetic monopole charges thus arise from non-trivial gauge field configurations that cannot be globally described within a single coordinate chart. For ordinary 1-form abelian gauge fields, this obstruction reflects the non-trivial topology of the underlying principal $U(1)$ bundle. For higher-form abelian gauge fields, the analogous obstruction is the non-triviality of the corresponding ``gerbe'' (higher principal bundle). Because the gauge potential $\mathbf{A}$ cannot be   globally defined as a connection on a non-trivial principal bundle, gauge-dependent integrals must be evaluated patchwise. One covers the integration domain by local charts $\mathcal{U}_\mathcal{I}$ with potentials $\mathbf{A}_{\mathcal I}$ and patches them on overlaps (or ``seams'')  $\mathcal{U}_\mathcal{IJ}=\mathcal{U}_\mathcal{I} \cap \mathcal U_\mathcal{J} $ via gauge transformations $\mathbf{A}_{\mathcal I} - \mathbf{A}_{\mathcal J} = \dd{\mathbf{\Lambda}_{\mathcal{IJ}}}$. When assembling the global result, these overlap regions contribute additional boundary terms along the seams $\mathcal{U}_{\mathcal{IJ}}$, which must be included to obtain the correct, globally consistent value.

    \begin{figure}[t]
        \centering

\tikzset{every picture/.style={line width=0.75pt}} %set default line width to 0.75pt        

\begin{tikzpicture}[x=0.75pt,y=0.75pt,yscale=-1,xscale=1]
%uncomment if require: \path (0,300); %set diagram left start at 0, and has height of 300

%Shape: Polygon Curved [id:ds38541703374357605] 
\draw  [color={rgb, 255:red, 155; green, 155; blue, 155 }  ,draw opacity=0.5 ][fill={rgb, 255:red, 208; green, 2; blue, 27 }  ,fill opacity=0.25 ] (103.4,44.44) .. controls (133.5,30.25) and (147.76,33.31) .. (169.2,33.31) .. controls (190.64,33.31) and (223.9,43.79) .. (208.38,64.75) .. controls (192.85,85.7) and (105.62,78.5) .. (90.1,86.36) .. controls (74.57,94.22) and (43.52,145.3) .. (47.22,119.1) .. controls (50.91,92.91) and (74.57,60.82) .. (103.4,44.44) -- cycle ;
%Shape: Polygon Curved [id:ds13574582670838886] 
\draw  [color={rgb, 255:red, 155; green, 155; blue, 155 }  ,draw opacity=1 ][fill={rgb, 255:red, 208; green, 2; blue, 27 }  ,fill opacity=0.5 ] (104,44.75) .. controls (134.25,30.75) and (151.25,31.5) .. (169.5,32.5) .. controls (187.75,33.5) and (161.91,51.07) .. (172,74.75) .. controls (182.09,98.43) and (215.78,123.54) .. (210.5,142.25) .. controls (205.22,160.96) and (142.26,175.19) .. (102,169.25) .. controls (61.74,163.31) and (43.69,145.96) .. (47.22,119.35) .. controls (50.75,92.75) and (73.75,58.75) .. (104,44.75) -- cycle ;
%Shape: Ellipse [id:dp13412353108684227] 
\draw  [color={rgb, 255:red, 0; green, 0; blue, 0 }  ,draw opacity=1 ] (45,138.75) .. controls (45,79.79) and (95.14,32) .. (157,32) .. controls (218.86,32) and (269,79.79) .. (269,138.75) .. controls (269,197.71) and (218.86,245.5) .. (157,245.5) .. controls (95.14,245.5) and (45,197.71) .. (45,138.75) -- cycle ;
%Shape: Arc [id:dp4014329711043134] 
\draw  [draw opacity=0][dash pattern={on 4.5pt off 4.5pt}] (269,138.75) .. controls (269,161.15) and (218.86,179.32) .. (157,179.32) .. controls (95.14,179.32) and (45,161.15) .. (45,138.75) -- (157,138.75) -- cycle ; \draw  [color={rgb, 255:red, 0; green, 0; blue, 0 }  ,draw opacity=1 ][dash pattern={on 4.5pt off 4.5pt}] (269,138.75) .. controls (269,161.15) and (218.86,179.32) .. (157,179.32) .. controls (95.14,179.32) and (45,161.15) .. (45,138.75) ;  
%Shape: Arc [id:dp07680026778081217] 
\draw  [draw opacity=0][dash pattern={on 4.5pt off 4.5pt}] (45,138.75) .. controls (45,138.75) and (45,138.75) .. (45,138.75) .. controls (45,116.35) and (95.14,98.18) .. (157,98.18) .. controls (218.86,98.18) and (269,116.35) .. (269,138.75) -- (157,138.75) -- cycle ; \draw  [color={rgb, 255:red, 0; green, 0; blue, 0 }  ,draw opacity=0.5 ][dash pattern={on 4.5pt off 4.5pt}] (45,138.75) .. controls (45,138.75) and (45,138.75) .. (45,138.75) .. controls (45,116.35) and (95.14,98.18) .. (157,98.18) .. controls (218.86,98.18) and (269,116.35) .. (269,138.75) ;  
%Shape: Polygon Curved [id:ds33044112019002525] 
\draw  [color={rgb, 255:red, 155; green, 155; blue, 155 }  ,draw opacity=1 ][fill={rgb, 255:red, 74; green, 144; blue, 226 }  ,fill opacity=0.5 ] (185,93) .. controls (240.45,88.42) and (205.3,117.14) .. (232,134.5) .. controls (258.7,151.86) and (245,171.56) .. (243.5,189.5) .. controls (242,207.44) and (221.29,184.35) .. (189.5,204) .. controls (157.71,223.65) and (141.37,263.95) .. (115.5,224) .. controls (89.63,184.05) and (67.65,166.52) .. (80.5,139) .. controls (93.35,111.48) and (129.55,97.58) .. (185,93) -- cycle ;

% Text Node
\draw (197.92,217.7) node  [color={rgb, 255:red, 74; green, 144; blue, 226 }  ,opacity=1 ]  {$\mathcal{U}_{\mathcal{I}}$};
% Text Node
\draw (79.92,39.7) node  [color={rgb, 255:red, 208; green, 2; blue, 27 }  ,opacity=1 ]  {$\mathcal{U}_{\mathcal{J}}$};
% Text Node
\draw (181.49,187.69) node    {$\mathbf{A}_{\mathcal{\textcolor[rgb]{0.29,0.56,0.89}{I}}}$};
% Text Node
\draw (124.06,85.19) node    {$\mathbf{A}_{\mathcal{\textcolor[rgb]{0.82,0.01,0.11}{J}}}$};
% Text Node
\draw (149.04,136.19) node  [font=\normalsize]  {$\mathbf{A}_{\mathcal{\textcolor[rgb]{0.29,0.56,0.89}{I}}} -\mathbf{A}_{\mathcal{\textcolor[rgb]{0.82,0.01,0.11}{J}}} =\mathrm{d}\mathbf{\Lambda }_{\mathcal{\textcolor[rgb]{0.29,0.56,0.89}{I}\textcolor[rgb]{0.82,0.01,0.11}{J}}}$};
% Text Node
\draw (167.42,108.2) node  [color={rgb, 255:red, 74; green, 144; blue, 226 }  ,opacity=1 ]  {$\mathcal{\textcolor[rgb]{0.56,0.07,1}{U}}_{\mathcal{I\textcolor[rgb]{0.82,0.01,0.11}{J}}}$};

\end{tikzpicture}

        \caption{Gauge field as a connection on a non-trivial principal bundle. $\mathbf A$ can only be expressed patch by patch, and on overlapping patches the connections are related by a gauge transformation.}
        \label{fig:magnetic}
    \end{figure}
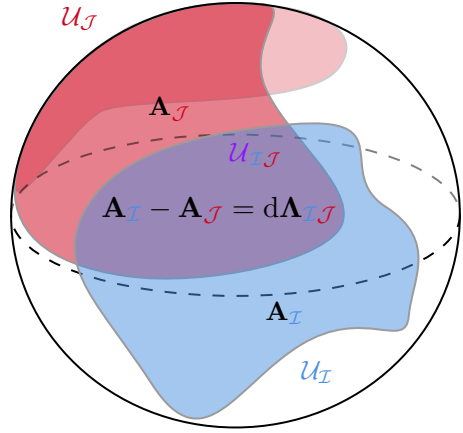

    Let us reconsider the gauge-dependent part of the variation of the improved Noether charge, see  \cref{gdimpnoether},
    \begin{equation}
        \delta_\varphi \tilde{\mathbf Q}^{\text{GD}}_\xi \equiv \delta_\varphi \mathbf{Q}^\text{GD}_\xi - \xi \cdot \mathbf \Theta^{\text{GD}}  = \delta \mathbf{\Upsilon} \wedge (\xi \cdot \mathbf{A}) + (-1)^{D-p} (\xi \cdot \bm \Upsilon) \wedge \delta \mathbf{A}\,.
    \end{equation}
    Previously, in the electrically charged case, we   discarded the term $(\xi \cdot \bm \Upsilon) \wedge \delta \mathbf A$, as it vanishes when pulled back to the horizon  by the boost-weight accounting \cref{thm:boost-weight}. Although this term vanishes on the horizon, in the presence of a non-trivial bundle its patchwise evaluation yields overlap contributions that encode the magnetic charge. To make these contributions explicit, we decompose the horizon into overlapping patches $\mathcal U_\mathcal{I} \subset \tilde{\mathcal C}$ and express the horizon integral
     \begin{equation}
        \int_{\tilde{\mathcal C}} (\xi \cdot \bm \Upsilon) \wedge \delta \mathbf A
    \end{equation}
    in terms of local potentials $\mathbf A_\mathcal{I}$. 
    
   Before doing so, we   first introduce the \emph{magnetic potential form}. By Poincar\'e's lemma and the equation of motion  $\dd{\bm \Upsilon} = 0$, one can locally write $\bm \Upsilon = \dd{\check{\mathbf A}_\mathcal{I}}$  on each patch $\mathcal U_\mathcal{I}$. The $\check{\mathbf A}_\mathcal{I}$'s are the local ``dual'' $(D-p-2)$-form gauge fields that transform as $\check{\mathbf A}_\mathcal{I} \to \check{\mathbf A}_\mathcal{I} + \dd{\check{\bm \Lambda}_\mathcal{I}}$, and   give rise to   magnetostatic potentials as follows. On each horizon patch, we have
    \begin{equation}
        \xi \cdot \bm \Upsilon \overset{\mathcal U_\mathcal{I}}{=} \xi \cdot \dd{\check{\mathbf A}_\mathcal{I}} = \mathcal L_\xi \check{\mathbf A}_\mathcal{I} - \dd{(\xi \cdot \check{\mathbf A}_\mathcal{I})}.
    \end{equation}
    As the  background $\bm \Upsilon$ is invariant under the Killing flow, i.e., $\mathcal L_\xi \bm \Upsilon = 0$, the local 
    gauge field is preserved by the Killing flow up to a   pure gauge transformation, i.e. $\mathcal L_\xi \check{\mathbf A}_\mathcal{I} = \dd{\check{\bm \lambda}_{\xi,\mathcal{I}}}$. Hence, 
    \begin{equation}
        \xi \cdot \bm \Upsilon \overset{\mathcal U_\mathcal{I}}{=} \dd{(\check{\bm \lambda}_{\xi,\mathcal{I}} - \xi \cdot \check{\mathbf A}_\mathcal{I})} \equiv (-1)^{n_\Psi}\dd{\bm \Psi_\mathcal{I}}\,,
    \end{equation}
    where we   defined the magnetic potential form $\bm \Psi_\mathcal{I}$ of degree $D-p-3$ on each patch $\mathcal U_\mathcal{I}$ as 
    \begin{equation}
        (-1)^{n_\Psi}\bm \Psi_\mathcal{I} \equiv - \xi \cdot \check{\mathbf A}_\mathcal{I} + \check{\bm \lambda}_{\xi,\mathcal{I}}\,,
    \end{equation}
    and    introduced a minus sign with power $n_\Psi = (D-p-3)(p+1)$ for later convenience. 
    
    By \cref{thm:boost-weight}, we have $\xi \cdot \bm \Upsilon \fheq 0$, so   the magnetic potential form is closed on the horizon, $\dd{\bm \Psi_\mathcal{I}} \fheq 0$; this is  the magnetic analog of \cref{thm:zeroth-law}. Moreover,   the magnetic potential form is invariant, up to an exact term, under gauge transformations $\check{\mathbf A}_\mathcal{I} \to \check{\mathbf A}_\mathcal{I} + \dd{\check{\bm \Lambda}_\mathcal{I}}$ by the same reasoning as for the electric potential form, see   \cref{gaugetranselectricpot}. On the overlaps $\mathcal{U}_{\mathcal{IJ}} $ between   coordinate patches, the local   potentials are related by a pure gauge transformation,
 $\check{\mathbf A}_\mathcal{I} - \check{\mathbf A}_\mathcal{J} = \dd{\check{\bm \Lambda}_{\mathcal{IJ}}}$. Consequently,  we may write $\xi \cdot \bm \Upsilon \fheq (-1)^{n_\Psi}\dd{\bm \Psi}$ on the entire horizon, because the right-hand side is gauge invariant. Below we choose a  harmonic representative of the magnetic potential, still denoted by~$\bm \Psi$, since that is   free of exact terms, and hence   gauge invariant and patch independent. Then, on each patch $\mathcal U_{\mathcal I}$, the magnetic potential takes the form of $\bm \Psi_{\mathcal I} = \bm \Psi + \dd{\bm \beta_\mathcal{I}}$. 

    Now we decompose the horizon integral on the overlapping patches 
        \begin{align}
            \int_{\tilde{\mathcal C}} (\xi \cdot \bm \Upsilon) \wedge \delta \mathbf A & = \sum_{I} \int_{\mathcal U_\mathcal{I}} (-1)^{n_\Psi}\dd{\bm \Psi} \wedge \delta \mathbf A_\mathcal{I} \nonumber\\
            & = \sum_{I} \left[\int_{\partial \mathcal U_\mathcal{I}} (-1)^{n_\Psi}(\bm \Psi + \dd{\bm \beta_\mathcal{I}}) \wedge \delta \mathbf A_\mathcal{I} - (-1)^{(D-p-3)+n_\Psi} \int_{\mathcal U_{\mathcal I}} (\bm \Psi + \dd{\bm \beta_\mathcal{I}}) \wedge \delta \mathbf F\right] \nonumber\\
             & = \sum_{I} \int_{\partial \mathcal U_\mathcal{I}} (-1)^{n_\Psi}\bm \Psi \wedge \delta \mathbf A_\mathcal{I} - (-1)^{(D-p-3)+n_\Psi} \int_{\tilde{\mathcal C}} \bm \Psi \wedge \delta \mathbf F\,, \label{eq:magnetic-charge-term}
        \end{align}
    where we   integrated by parts and used the fact that the field strength $\mathbf F$ is patch independent because of gauge invariance. We noticed the terms involving the patch-dependent $\dd{\bm \beta_\mathcal{I}}$ cancel out because
    \begin{equation}
    \begin{aligned}
        \int_{\partial \mathcal U_{\mathcal I}} \!\!\!\dd{\bm \beta_{\mathcal I}} \wedge \delta \mathbf A_{\mathcal I} & = (-1)^{D-p-3} \int_{\partial \mathcal U_{\mathcal I}} \!\!\!\bm \beta_{\mathcal I} \wedge \delta \mathbf F\\
        & = (-1)^{(D-p-3)} \int_{\mathcal U_{\mathcal I}} \!\!\! \dd{(\bm \beta_{\mathcal I} \wedge \delta \mathbf F)}\\
        & = (-1)^{(D-p-3)} \int_{\mathcal U_{\mathcal I}} \!\!\! \dd{\bm \beta_{\mathcal I}} \wedge \delta \mathbf F\,.
    \end{aligned}
    \end{equation} 
  There are two types of terms in the final expression of \cref{eq:magnetic-charge-term}: the seam terms localised on the boundary $\partial \mathcal U_\mathcal{I}$ of patches  and a horizon integral that would appear as a magnetic charge contribution to the first law. However, we still have $\xi \cdot \bm \Upsilon \fheq 0$, so the seam integrals exactly cancel the horizon integral,  and   the magnetic charge does not contribute to the first law. 

   From a covariant perspective, the appearance of patch-dependent terms is unsatisfactory, as it indicates that the intermediate quantities are not globally well defined. Moreover, the patches have boundaries, which lead to ambiguities in the improved Noether charge. As we will explain in  \cref{sec:jkm}, the improved Noether charge $\delta_\varphi \tilde{\mathbf Q}_\xi \equiv \delta_\varphi \mathbf{Q}_\xi - \xi \cdot \mathbf{\Theta}(\varphi, \delta \varphi)$ is only defined up to an exact term.  It transforms as 
    \begin{equation}
         \delta_\varphi \tilde{\mathbf Q}_\xi \to \delta_\varphi \tilde{\mathbf Q}_\xi +  \dd \left(  \xi \cdot \mathbf{Y}(\varphi, \delta \varphi) + \delta_\varphi \mathbf Z(\varphi, \xi)\right) 
    \end{equation}
    under the Jacobson-Kang-Myers (JKM) ambiguities of the presymplectic potential and the Noether charge
    \begin{equation}
        \bm \Theta(\varphi, \delta \varphi) \to \bm \Theta(\varphi, \delta \varphi) + \dd{\mathbf{Y}(\varphi, \delta \varphi)}, \quad \mathbf Q_\xi \to \mathbf Q_\xi + \dd{\mathbf Z(\varphi, \xi)}
    \end{equation}
    where $\mathbf Y(\varphi, \delta \varphi)$ is   linear in $\delta \varphi$ and $\mathbf Z(\varphi, \xi)$ is linear in $\xi$. When integrated on compact horizon cross-sections, these
    ambiguities vanish due to Stokes' theorem, if the fields are smooth on the entire cross-section. However, when the gauge field is only defined patchwise, patch-dependent JKM ambiguities could persist at the seams of the patches. In particular, such ambiguity may be linear in $\delta \mathbf A$, and it generally takes the form of $\xi \cdot \mathbf{Y}(\phi, \delta \mathbf A) + \delta_\varphi \mathbf Z(\varphi, \xi) = \mathbf W(\phi,\xi) \wedge \delta \mathbf A$, which, upon integration, produces new seam terms  
    \begin{equation}
        \sum_\mathcal{I} \int_{\partial \mathcal U_\mathcal{I}} \mathbf W(\phi,\xi) \wedge \delta \mathbf A_\mathcal{I}\,.
    \end{equation}
To solve the problems of patch dependence and JKM ambiguities at seams altogether, we notice that there is a unique and physical prescription: we fix the ambiguity and redefine the improved Noether charge $\delta_\varphi \tilde{\mathbf Q}^{\text{GD}}_\xi \to \delta_\varphi \tilde{\mathbf Q}^{\text{BC}}_\xi$ by choosing 
    \begin{equation}
        \mathbf W(\phi,\xi) = -(-1)^{D-p+n_\Psi} \bm \Psi\,,
    \end{equation}
    which cancels the previous patch-dependent term $\bm \Psi \wedge \delta \mathbf A_{\mathcal I}$. We call this the \emph{bundle-covariant} prescription. A similar prescription was also chosen by \cite{Ortin:2022uxa} to reveal the magnetic charge contribution to the first law. 
    
    Together with the contribution from the on-shell gauge-dependent part of the presymplectic current $\bm \omega^{\text{GD}}(\phi, \delta \phi, \mathcal{L}_\xi \varphi) = \dd{(\delta \bm \Upsilon \wedge \bm \lambda_\xi)}$, we have a bundle-covariant quantity
    \begin{equation}
        \delta_\varphi \tilde{\mathbf Q}^{\text{BC}}_\xi - \delta \bm \Upsilon \wedge \bm \lambda_\xi \fheq - \delta \bm \Upsilon \wedge \bm \Phi + (-1)^{n_\Psi}\bm \Psi \wedge \delta \mathbf F = - \delta \bm \Upsilon \wedge \bm \Phi + \delta \mathbf F \wedge \bm \Psi\,.
    \end{equation}
   The second term on the right side gives rise to a magnetic charge  term in the first law,  as follows. By the closedness of $\bm \Psi$ (on the horizon) and $\delta \mathbf F$ we can use the same Poincar\'{e} duality trick as for the electric charge case, i.e., 
    \begin{equation}
        \int_{\tilde{\mathcal C}} \delta \mathbf F \wedge \bm \Psi = \sum_{\nu} \Psi^{\mathcal H^+}_\nu \int_{\nu} \delta \mathbf F = \sum_{\nu} \Psi^{\mathcal H^+}_\nu \delta P_\nu\,. 
    \end{equation}
    Here, we have resolved the harmonic magnetic potential $\bm \Psi$ into an orthonormal   basis~$\{\hat{\bm \zeta}_\nu\}$
    \begin{equation}
        \bm \Psi \fheq \sum_\nu \Psi^{\mathcal H^+}_\nu \hat{\bm \zeta}_\nu\,,
    \end{equation}
    where $\nu$ are the non-trivial $(p+1)$-cycles dual to the $(D-p-3)$-harmonic basis form $\hat{\bm \zeta}_\nu$. Moreover,  $\Psi_\nu^{\mathcal H^+}$ are the associated constant magnetic potentials on the horizon, 
    \begin{equation}
        \Psi_\nu^{\mathcal H^+} = \int_{\tilde \nu} \mathbf \Psi |_{\tilde{\mathcal C}} 
    \end{equation}
   where $\tilde \nu$ are non-trivial $(D-p-3)$-cycles of $\tilde{\mathcal C}$ dual to $\star_{\tilde{\mathcal C}} \hat{\bm \zeta}_\nu$,  and 
    \begin{equation}
        P_\nu = \int_\nu \mathbf F
    \end{equation}
    are the magnetic charges on $\nu \subset \tilde{\mathcal C}$. 
Similarly, at spatial infinity, assuming $\mathcal S_\infty$ has non-trivial cycles that are homologous to the horizon cycles $\nu$, there is a magnetic potential-charge term 
\begin{equation}
\label{magneticatinfinity}
    \int_{{\mathcal S}_\infty} \delta {\mathbf F} \wedge \bm \Psi = \sum_\nu \Psi_\nu^\infty \delta P_\nu\,.
\end{equation}
Note that this term would   vanish, if we   impose the boundary condition $\delta \mathbf A \to \mathcal O(1/r^{D-3})$ at spatial infinity, as we did in the electric case (see below \eqref{gdimpnoether}). Hence, in order to obtain a non-zero magnetic charge contribution from spatial infinity, we do not impose this boundary condition here.   Rather, we require that the field strength variation  does not blow up at infinity, $\delta \mathbf F \to \mathcal O(1)$, such that the magnetic charge variation at infinity remains finite or vanishes. The magnetic charge variation vanishes if the (higher) principal bundle is trivial, since in that case the field strength is globally exact, hence the integral of $\delta \mathbf F$ vanishes on the compact surface $\mathcal S_\infty$.

  The number of different   magnetic charges equals the number of  non-trivial $(p+1)$-cycles of the horizon cross-section, which is given by the rank of the homology group $H_{p+1} (\tilde{\mathcal{C}})$, i.e. by the $(p+1)$-th Betti number $b_{p+1}$ (that equals   $b_{D-p-3}$ by Poincar\'e duality).  Therefore, the number of independent magnetic potential-charge pairs is  
 $b_{p+1}$,  whereas the number of independent electric charges was $b_{p-1}.$

    In the end, the first law becomes
    \begin{equation}
        T \delta S_\text{dyn} = \delta M - \sum_I \Omega_I \delta J_I - \sum_\sigma \bar \Phi_\sigma \delta Q_\sigma - \sum_\nu \bar{\Psi}_\nu  \delta P_\nu
    \end{equation}
    where, $\bar \Phi_\sigma = \Phi_\sigma^{\mathcal H^+} - \Phi_\sigma^{\infty}$ is the electric potential difference and $\bar \Psi_\nu  = \Psi_\nu^{\mathcal H^+} - \Psi_\nu^{\infty}$ is the magnetic potential difference.

    \section{Physical process version of the first law with $p$-form charge}\label{sec:ppfl}

    In contrast to the comparison version, the physical process version of the thermodynamic first law of a black object follows from integrating the fundamental variational identity \eqref{fundidentityorigin} on an interval of the  event horizon. This leads to a relationship  between the state of the black object on the initial and final horizon slices and any matter sources that fall into the black object through the interval. In standard derivations   \cite{Wald:1995yp,Gao:2001ut} of the physical process first law it is assumed that the initial and final state of the black hole are stationary, in which case the black hole entropy is given by the Wald entropy. On the other hand,   for a non-statinary initial and final state at arbitrary affine null times it was shown in \cite{Hollands:2024vbe} (see also \cite{Rignon-Bret:2023fjq,Visser:2024pwz}) that the   black hole entropy receives a dynamical correction term, yielding the same dynamical entropy that appears in the comparison version of the non-stationary first law. In this section we  generalise  this derivation to $p$-form gauge fields that are non-minimally coupled to the metric.  
    
    We consider a  Lagrangian $\mathbf L_\text{source}(g^{ab}, A_{a_1 \cdots a_p}, \{\psi\}, \left\{ \Psi \right\},\nabla_a)$ that sources the metric field via an external stress-energy tensor $T_{ab}$ and   the $p$-form gauge field via an external electric current $\mathbf{j}$. We assume that the external source matter   fields   $\{ \Psi \}$ are minimally coupled to the metric, and that  $\mathbf L_\text{source}$ depends linearly on $\mathbf A$. For the main part of our discussion, we assume the other gauge-independent matter fields $\{\psi\}$ are not externally sourced, however, at the end of this section we will briefly comment on the case of non-zero sources for $\{\psi\}$. For this Lagrangian the    gauge-invariant fields are $\phi\equiv (g, \{\psi\}, \{\Psi\}, \mathbf F)$ and the   collection of gauge-invariant and gauge-dependent fields is $\varphi \equiv (\phi, \mathbf{A})$. 
    
    Since the Lagrangian is now divided into two sectors $L_\text{grav-EM}$ and $L_\text{source}$, all the covariant phase space quantities   split   into two terms accordingly, e.g., 
    \begin{equation}
        \bm \Theta = \bm \Theta_{\text{grav-EM}} + \bm \Theta_{\text{source}}, \qquad \bm \omega = \bm \omega_{\text{grav-EM}} + \bm \omega_{\text{source}},  \qquad \mathbf{Q} = \mathbf{Q}_{\text{grav-EM}} + \mathbf{Q}_{\text{source}}\,,
    \end{equation}
    Hence, there are two separate fundamental variational identities, one for each sector. 
    We emphasise that the observables associated to the black object are solely defined in terms of the covariant phase space quantities in the gravity-electromagnetism sector. We will show below that the  physical process version of the first law follows from the fundamental identity in this sector:
    \begin{align}
            & \quad~\bm \omega_{\text{grav-EM}}^{\text{GI}}(\phi, \delta \varphi, \mathcal{L}_\xi \varphi) +  \bm \omega_{\text{grav-EM}}^{\text{GD}}(\phi, \delta \phi, \mathcal{L}_\xi \varphi) + \delta_\varphi \mathbf{C}_{\text{grav-EM},\xi} - \xi \cdot \left( \mathbf{E}(\phi) \cdot \delta \tilde \phi + \bm{\mathcal{E}} (\phi) \wedge \delta \mathbf{A}\right) \nonumber\\
            & = \dd{\left(\delta_\phi \mathbf{Q}_{\text{grav-EM},\xi}^{\text{GI}} - \xi \cdot \mathbf{\Theta}_{\text{grav-EM}}^{\text{GI}}(\phi, \delta \phi) + \delta_\varphi \mathbf{Q}_{\text{grav-EM},\xi}^{\text{GD}} - \xi \cdot \mathbf{\Theta}_{\text{grav-EM}}^{\text{GD}}(\phi, \delta \mathbf{A})\right)}.
       \label{eq:fund-id-xi-grav-EM}
    \end{align}
    First, we will  evaluate the constraint form on the horizon, then we will see how external matter sources contribute to the first law.

    \subsection{Constraint form on the future horizon}
    In the presence of  non-zero sources, the variation of the gravitational-electromagnetic part of the constraint form,   $\delta_\varphi \mathbf{C}_{\text{grav-EM},\xi}$,  does not vanish. To uncover the physical interpretation of this term, we first rewrite the terms $\mathbf{C}^{(A)}_\xi$ contained in  the unvaried constraint form $ \mathbf{C}_{\text{grav-EM},\xi}$ that are proportional to $\mathbf{A}$, using differential form techniques.

    Let us start with computing
    \begin{align}
            \mathbf{C}^{(A)}_\xi & = C_{(A)}^{ab} \xi_b \bm \epsilon_a \nonumber\\
            & = (E_{(A)}^{a a_2 \cdots a_p} A_{b a_2 \cdots a_p} + E_{(A)}^{a_1 a a_3 \cdots a_p} A_{a_1 b a_3 \cdots a_p} + \cdots + E_{(A)}^{a_1 \cdots a_{p-1} a} A_{a_1 \cdots a_{p-1} b})\xi^b \bm \epsilon_a \nonumber \\
            & = p E_{(A)}^{a a_2 \cdots a_p} A_{b a_2 \cdots a_p} \xi^b \bm \epsilon_a \,,
    \end{align}
    where we recall that $E_{(A)}^{a_1 \cdots a_p}$ is the equation of motion of $A_{a_1 \cdots a_p}$, and we have used the antisymmetry of the indices to simplify the expression. When pulled back to the future horizon, we have (see eq.~(4.47) of our previous paper \cite{Visser:2024pwz})
    \begin{equation}
        \mathbf{C}^{(A)}_\xi \fheq - p E_{(A)}^{a a_2 \cdots a_p} A_{b a_2 \cdots a_p} k_a \xi^b \bm \epsilon_{\mathcal{H^+}} \fheq - p (k \cdot E_{(A)})^{a_2 \cdots a_p} (\xi \cdot A)_{a_2 \cdots a_p} \bm \epsilon_{\mathcal{H^+}} \,,
    \end{equation}
    where $k = \partial_v$ is the generator of affine null translations along $\mathcal{H}^+$, and $\bm \epsilon_\mathcal{H^+}$ is the intrinsic volume form of the horizon $\mathcal{H^+}$. The pullback expression follows from the orientation
    \begin{equation}
        \bm \epsilon = - \bm k \wedge \bm \epsilon_\mathcal{H^+}\,.
    \end{equation}
    The intrinsic volume form of the horizon $\mathcal{H^+}$ can be further decomposed  as 
    \begin{equation}
        \bm \epsilon_\mathcal{H^+} = - \bm l \wedge \bm \epsilon_\mathcal{C}\,,
    \end{equation}
    where $\mathcal{C}$ are   codimension-2 cross-sections of the future horizon with null normal $- \bm l$ within the horizon. 
    
    For the contracted scalar in front of $\bm \epsilon_\mathcal{H^+}$ in the pullback of $\mathbf{C}^{(A)}_\xi$, we carry out a double null decomposition for all of its contracted indices:
        \begin{align}
            (k \cdot E_{(A)})^{a_1 \cdots a \cdots a_{p-1}} (\xi \cdot A)_{a_1 \cdots a \cdots a_{p-1}} & \fheq (k \cdot E_{(A)})^{a_1 \cdots b \cdots a_{p-1}} (-k^a l_b - l^a k_b + \gamma^a_b)(\xi \cdot A)_{a_1 \cdots a \cdots a_{p-1}} \nonumber\\
            & \fheq (k \cdot E_{(A)})^{a_1 \cdots b \cdots a_{p-1}} \gamma^a_b(\xi \cdot A)_{a_1 \cdots a \cdots a_{p-1}} \nonumber\\
            & \fheq (k \cdot E_{(A)})^{b_1  \cdots b_{p-1}} \gamma^{a_1}_{b_1} \cdots \gamma^{a_{p-1}}_{b_{p-1}} (\xi \cdot A)_{a_1 \cdots a_{p-1}} \nonumber\\
            & \fheq (k \cdot E_{(A)})^{i_1  \cdots i_{p-1}} (\xi \cdot A)_{i_1 \cdots i_{p-1}}\,.
        \end{align}
    Here, we used that the null projections are proportional to double (or multiple) contractions, like $k \cdot k \cdot \mathbf{E}_{(A)}$ or $k \cdot \xi \cdot \mathbf{A}$, that  vanish. In the last line, we   converted to intrinsic indices $i_1, i_2, \cdots$ on~$\mathcal{C}$, because the procedure is equivalent to pulling back $k \cdot \mathbf{E}_{(A)}$ and $\xi \cdot \mathbf{A}$ to $\mathcal{C}$ and then contracting them.

   Next,  we employ \cref{cor:subcontraction} in appendix \ref{appaaaa} to rewrite the constraint form that is proportional to  $\mathbf{A}$ as 
    \begin{equation}
        \mathbf C^A_\xi  \fheq p (k \cdot E_{(A)})^{i_1 \cdots i_{p-1}} (\xi \cdot A)_{i_1 \cdots i_{p-1}} \bm l \wedge \bm \epsilon_\mathcal{C} \fheq p!\, \bm l \wedge [\star_\mathcal{C}(k \cdot \mathbf{E}_{(A)})] \wedge (\xi \cdot \mathbf{A})\,, \label{eq:A-constraint-half-way}
    \end{equation}
    where $\star_\mathcal{C}$ is the intrinsic Hodge dual on $\mathcal{C}$.
    To further simplify this expression, we use the results in \cref{lem:dnd-hodge} to carry out a double null decomposition of $\star \mathbf{E}_{(A)}$: 
    \begin{align}
            \star \mathbf{E}_{(A)} & = \bm k \wedge \biggl( \bm l \wedge (\star_\mathcal{C} \mathbf{E}_{(A)}) + (-1)^{D-p-1} {\star_\mathcal{C} (l \cdot \mathbf{E}_{(A)})} \biggr) + (-1)^{D-p} \bm l \wedge [\star_\mathcal{C}(k \cdot \mathbf{E}_{(A)})] + \star_\mathcal{C} (l \cdot k \cdot \mathbf{E}_{(A)}) \nonumber \\
            & \fheq (-1)^{D-p} \bm l \wedge [\star_\mathcal{C}(k \cdot \mathbf{E}_{(A)})] + \star_\mathcal{C} (l \cdot k \cdot \mathbf{E}_{(A)})\,,
    \end{align}
    where in the last line we   pulled back to the future horizon, and the terms involving $\bm k \wedge \cdots$ are projected out. By plugging this back  into \cref{eq:A-constraint-half-way} we obtain
    \begin{equation}
        \begin{split}
            \mathbf{C}_\xi^A & \fheq (-1)^{D-p} p! \,  \bigl( \star \mathbf{E}_{(A)} \wedge (\xi \cdot \mathbf{A}) -  \star_\mathcal{C}(l \cdot k \cdot \mathbf{E}_{(A)}) \wedge (\xi \cdot \mathbf{A}) \bigr)\\
            & \fheq (-1)^{D-p} p! \, {\star \mathbf{E}_{(A)}} \wedge (\xi \cdot \mathbf{A})\,. \label{constraintAfinal}
        \end{split}
    \end{equation}
  Here, the second line follows from the fact  that $\star_\mathcal{C}(l \cdot k \cdot \mathbf{E}_{(A)})$ and $\xi \cdot \mathbf{A}$ both have spatial polarisations only, i.e., there is at least one doubly wedged direction. Hence, $\star_\mathcal{C}(l \cdot k \cdot \mathbf{E}_{(A)}) \wedge (\xi \cdot \mathbf{A}) \fheq 0$. 

    \subsection{Perturbations by matter sources}
Next, we turn on the perturbations by external matter fields $\{\Psi\}$ that source  both the metric and the gauge field.     In the  background geometry the equations of motion  are $E_{(g)ab} = 0$ and $E_{(A)}^{a_1 \cdots a_p} = 0$. The perturbed equations of motions now receive   matter source terms
    \begin{equation}
        \delta E_{(g)ab} = \delta T_{ab} \qquad \text{and} \qquad \delta E_{(A)}^{a_1 \cdots a_p} =  - \frac{(-1)^{D-p}}{p!} \delta j^{a_1 \cdots a_p}\,,
    \end{equation}
  where the factor in front of $\delta j$ is chosen as a convention such that the perturbed equation of motion of the gauge field takes the form
    \begin{equation}
        \dd{(\delta \bm \Upsilon)} = - {\star} \delta \mathbf{j}\,,
    \end{equation}
    which reduces to $\dd{{\star} \mathbf F} = \star \mathbf j$ for the Maxwell   Lagrangian  $ -(1/4) \bm \epsilon F_{ab}F^{ab}= - (1/2) \star \mathbf{F}  \wedge  \mathbf F$. Notice that other matter fields $\{\psi\}$ are not sourced in this setup, hence their perturbations satisfy $\delta E_{(\psi)} = 0$. 
  Thus, the perturbation of the constraint form, when pulled back to the horizon, reads 
    \begin{equation}
        \begin{split}
            \delta_\varphi \mathbf{C}_{\text{grav-EM},\xi} & = (- \delta E^{ab}_{(g)} + \delta C_{(A)}^{ab}) \xi_b \bm \epsilon_a\\
            & \fheq \delta T_{ab} k^a \xi^b \bm \epsilon_\mathcal{H^+} + (-1)^{D-p} p! \, {\star \delta \mathbf{E}_{(A)}} \wedge (\xi \cdot \mathbf{A}) \\
            & \fheq \delta T_{ab} k^a \xi^b \bm \epsilon_\mathcal{H^+} -  {\star \delta \mathbf{j}} \wedge (\xi \cdot \mathbf{A})\,,
        \end{split}
    \end{equation}
    where we   used   $\delta C^{ab}_{(\psi)} \propto \delta E_{(\psi)} = 0$ in the first line, and we inserted \eqref{constraintAfinal} in the second line.  

Now, coming back to the fundamental variational identity \eqref{eq:fund-id-xi-grav-EM}, we compute the other two terms on the left side of the identity, namely the   presymplectic currents. First, the gauge-dependent part of the presymplectic current (\cref{eq:omega-xi-GD}) becomes
    \begin{equation}
        \begin{split}
            \bm \omega^\text{GD}_{\text{grav-EM}}(\phi, \delta \phi, \mathcal{L}_\xi \varphi) & = \dd{\left( \delta \bm \Upsilon \wedge \bm \lambda_\xi \right)} - (-1)^{D-p} p!\, {\star}\delta \mathbf{E}_{(A)} \wedge \bm \lambda_\xi\\
            & = \dd{\left( \delta \bm \Upsilon \wedge \bm \lambda_\xi \right)} + {\star}\delta \mathbf{j} \wedge \bm \lambda_\xi\,,
        \end{split}
    \end{equation}
    where we once again used $\bm{\mathcal{E}} = p!\, {\star} \mathbf{E}_{(A)}$. The gauge-independent part of the presymplectic current again vanishes by the background Killing symmetry, as we saw in \eqref{vanishingcurrent}.

    Finally, pulled back to the horizon, the fundamental identity \eqref{eq:fund-id-xi-grav-EM} becomes 
    \begin{equation} 
        \dd{\left(\delta_\phi \mathbf{Q}_{\text{grav-EM},\xi}^{\text{GI}} - \xi \cdot \mathbf{\Theta}_{\text{grav-EM}}^{\text{GI}}(\phi, \delta \phi) - \delta \bm \Upsilon \wedge \bm \Phi\right)}\fheq \delta T_{ab} k^a \xi^b \bm \epsilon_\mathcal{H^+} +  {\star \delta \mathbf{j}} \wedge \bm \Phi\,,\label{eq:fund-id-sourced}
    \end{equation}
    where we used the definition $\bm \Phi = - \xi \cdot \mathbf{A} + \bm \lambda_\xi$, we employed    \cref{eq:improv-Q-GD}, and we used the fact that the pullback of $\xi \cdot (\cdots)$ vanishes as $\xi$ is tangent to the horizon.

    \subsection{Physical process first law}
    
    We now integrate the fundamental identity \eqref{eq:fund-id-sourced} on a horizon interval $\mathcal{H}_{v_1}^{v_2}$ between two arbitrary cross-sections $\mathcal{C}(v_1)$ and $\mathcal{C}(v_2)$ at affine null parameters $v_1, v_2$ to obtain the physical process version of the first law for a black object. For notational convenience we assume   the horizon cross-sections are compact, i.e. $\mathcal{C} = \tilde{\mathcal{C}}$, but the results below can be straightforwardly generalised to non-compact horizons by the same methods as in \cref{sec:cfl}. It is important in that (non-compact) case to first downgrade the differential forms $\delta \bm \Upsilon$ and $\bm \Phi$ to lower-dimensional forms on the compact horizon cross-section $\tilde{\mathcal C}$, and then perform the cohomology/homology duality trick.   
    
   We want to study the structure of the integral of the electric charge term $-\dd{(\delta \bm \Upsilon \wedge \bm \Phi)}$ and the electric current term ${\star}\delta \mathbf{j} \wedge \bm \Phi$ on the horizon segment $\mathcal H_{v_1}^{v_2}$. We want to use similar arguments as in \cref{sec:cfl} to convert these integrals into those on the non-trivial cycles dual to $\bm \Phi$ in $\mathcal{H}_{v_1}^{v_2}$. However, there are some slight twists in the mathematics:  
    \begin{enumerate}[1)]
        \item The topology of the horizon interval is $\tilde{\mathcal C}\times [0,1]$, so it is technically a compact manifold with boundaries. The algebraic topology statements we use should be generalised to accommodate the boundaries $\tilde{\mathcal C}(v_1)$ and $\tilde{\mathcal C}(v_2)$.
        \item The integral of the exact term $-\dd{(\delta \bm \Upsilon \wedge \bm \Phi)}$  localises on the closed boundary $\tilde{\mathcal C}(v_1) \cup \tilde{\mathcal C}(v_2)$. However, the Poincar\'e duality argument used in \cref{sssec:compact} needs to be modified because $\delta \bm \Upsilon$ is no longer closed in the presence of the source current $\delta \mathbf j$.
        \item Another subtlety arises when evaluating the integral of the electric current term ${\star}\delta \mathbf{j} \wedge \bm \Phi$: The horizon interval $\mathcal{H}_{v_1}^{v_2}$ is a null hypersurface, whose intrinsic Hodge decomposition (technically speaking, the Hodge-Morrey-Friedrichs decomposition as boundaries are present) is ill-defined due to the degenerate metric, but the topological structures such as the homology and cohomology, and the duality between them are still valid.
    \end{enumerate} 
    First, we consider the integral of the electric charge term
    \begin{equation}
        - \int_{\mathcal H_{v_1}^{v_2}} \dd{(\delta \bm \Upsilon \wedge \bm \Phi)} = \left[- \int_{\tilde{\mathcal C}(v)} \delta \bm \Upsilon \wedge \bm \Phi \right]_{v=v_1}^{v=v_2}\,.
    \end{equation}
    It is tempting to use the exact results in \cref{sec:cfl} to evaluate this integral by Hodge decomposing $\bm \Phi$ into a harmonic part and an exact part, and then converting it into integrals of $\delta \bm \Upsilon$ on non-trivial cycles dual to the harmonic part of $\bm \Phi$ using the Poincar\'e duality. However, the previous argument is based on the closedness of $\delta \bm \Upsilon$, which is no longer true. The procedure can be revised as follows. The closedness requirement of $\delta \bm \Upsilon$ can be bypassed as long as we choose an orthonormal \emph{Thom form} basis to resolve the harmonic part of $\bm \Phi$. These Thom forms are special representatives of the harmonic part that are compactly supported in a tubular region around the dual cycles, and such construction is valid for compact and orientable manifolds \cite{Bott:1982xhp,PetersenManifolds} --- our setup satisfies these assumptions. With these Thom forms, we can Hodge decompose the potential form $\bm \Phi$ on a horizon cross-section $\tilde{\mathcal C}(v)$ as 
    \begin{equation}
        \bm \Phi \overset{\tilde{\mathcal{C}}(v)}{=} \bm \eta^\text{Thom} + \dd{\bm \alpha} = \sum_\sigma \Phi_\sigma^{\mathcal H^+} \hat{\bm \eta}_{\sigma}^\text{Thom} + \dd{\bm \alpha}\,,
    \end{equation}
    where the $\hat{\bm \eta}_{\sigma}^\text{Thom}$ are dual to the non-trivial $(D-p-1)$-cycles $\sigma(v)$ in $\tilde{\mathcal C}(v)$, and the coefficients are  defined as $\Phi_\sigma^{\mathcal H^+} = \int_{\tilde \sigma} \bm \Phi   \big |_{\mathcal{C}(v)}$, the same as \cref{electricpotentialfundeq}, which are constants by closedness of $\bm \Phi$.  Poincar\'e duality now implies
    \begin{equation}
        \int_{\mathcal{C}(v)} \mathbf{X} \wedge \hat{\bm \eta}^\text{Thom}_\sigma = \int_{\sigma(v)} \mathbf{X}
    \end{equation}
    for any $(D-p-1)$-form $\mathbf{X}$. Then, we have 
    \begin{equation}
        \begin{split}
            \left[- \int_{\tilde{\mathcal C}(v)} \delta \bm \Upsilon \wedge \bm \Phi \right]_{v=v_1}^{v_2} & = \left[- \sum_\sigma \Phi_\sigma^{\mathcal H^+} \int_{\tilde{\mathcal C}(v)} \delta \bm \Upsilon \wedge \hat{\bm \eta}_\sigma^\text{Thom} - \int_{\tilde{\mathcal C}(v)} \delta \bm \Upsilon \wedge \dd{\bm \alpha} \right]_{v=v_1}^{v=v_2}\\
            & = \left[- \sum_\sigma \Phi_\sigma^{\mathcal H^+} \int_{\sigma(v)} \delta \bm \Upsilon + (-1)^{D-p-1} \int_{\tilde{\mathcal C}(v)} \dd{(\delta \bm \Upsilon)} \wedge \bm \alpha \right]_{v=v_1}^{v=v_2}\\
            & = \left[\sum_\sigma \Phi_\sigma^{\mathcal H^+} \delta Q_\sigma(v) + (-1)^{D-p} \int_{\tilde{\mathcal C}(v)} {\star}\delta \mathbf j \wedge \bm \alpha \right]_{v=v_1}^{v=v_2}\,.
        \end{split}
    \end{equation}
In the second equality, Stokes' theorem and the compactness of $\tilde{\mathcal{C}}(v)$   imply that the integral of $\dd (\delta \bm \Upsilon \wedge \bm \alpha)$ over  $\tilde{\mathcal{C}}(v)$ vanishes. In   the last equality, we   identified the electric charge on a cycle $\sigma$ at affine null time $v$ as $Q_\sigma = - \int_{\sigma(v)} \delta \bm \Upsilon$, and used the equation of motion $\dd{(\delta \bm \Upsilon)} = - {\star} \delta \mathbf{j}$. We noticed that the exact piece $\dd{\bm \alpha}$ would also contribute because $\delta \bm \Upsilon$ is not closed due to the source current.

    Now we turn to the electric current integral. Though the intrinsic Hodge decomposition is ill defined for a null hypersurface, the closed potential form $\bm \Phi$ can still be decomposed on $\mathcal H_{v_1}^{v_2}$ as
    \begin{equation}
        \bm \Phi \hinteq \bm \eta_\text{abs}^\text{Thom} + \dd{\bm \alpha}\,,
    \end{equation}
    by the definition of the de Rham cohomology, where $\bm \eta_\text{abs}^\text{Thom}$ is an absolute Thom form as the cohomology representative. Such representative loses the harmonicity in $\mathcal H_{v_1}^{v_2}$ due to the degenerate metric, yet it does not matter for our purposes. Here, ``absolute'' means that it does not have any $v$-leg, just as for $\bm \Phi$ by \cref{thm:boost-weight}. The Thom form prescription is chosen here for compatibility with the electric charge terms. Further, following the spirit of \cref{harmonicbasis} again, we resolve the cohomology representative part of the potential form in an orthonormal basis $\{\hat{\bm \eta}_{\text{abs},\sigma'}^\text{Thom}\}$ 
    \begin{equation}
        \bm \eta_\text{abs}^\text{Thom} = \sum_{\sigma'} \Phi_{\sigma'}^{\mathcal{H^+}} \hat{\bm \eta}_{\text{abs},\sigma'}^\text{Thom}\,,
    \end{equation}
    where the basis representatives $\hat{\bm \eta}_{\text{abs},\sigma'}^\text{Thom}$ are dual to non-trivial relative $(D-p)$-cycles $\sigma'$ in $\mathcal H_{v_1}^{v_2}$. Here, ``relative'' means that the cycles $\sigma'$ have boundaries purely in $\partial \mathcal H_{v_1}^{v_2} = \tilde{\mathcal C}(v_1) \cup \tilde{\mathcal C}(v_2)$. These relative cycles $\sigma'$ have topology $\sigma \times [0,1]$, where the $\sigma$ slices at null time $v \in [v_1, v_2]$ are the non-trivial cycles dual to the harmonic basis form $\hat{\bm \eta}_\sigma$ within the spatial slices $\tilde{\mathcal C}(v)$ of $\mathcal H_{v_1}^{v_2}$, the same as  in~\cref{harmonicbasis}. The boundary of each $\sigma'$ is $\sigma(v_1) \cup \sigma(v_2) \subset \tilde{\mathcal C}(v_1) \cup \tilde{\mathcal C}(v_2)$. 

    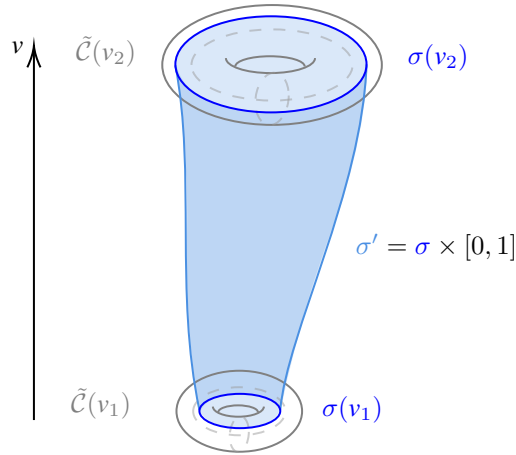
\begin{figure}[H]
        \centering

\tikzset{every picture/.style={line width=0.75pt}} %set default line width to 0.75pt        

\begin{tikzpicture}[x=0.75pt,y=0.75pt,yscale=-1,xscale=1]
%uncomment if require: \path (0,300); %set diagram left start at 0, and has height of 300

%Shape: Polygon Curved [id:ds5086013965594738] 
\draw  [draw opacity=0][fill={rgb, 255:red, 74; green, 144; blue, 226 }  ,fill opacity=0.2 ] (104.88,80.75) .. controls (105.15,80.59) and (104.34,108.05) .. (152.59,107.74) .. controls (200.85,107.43) and (199.76,83.26) .. (200.3,83.26) .. controls (200.85,83.26) and (200.58,118.73) .. (182.58,167.69) .. controls (164.59,216.66) and (157.37,256.68) .. (157.23,257.15) .. controls (157.09,257.63) and (154.5,266.57) .. (137.6,266.57) .. controls (120.69,266.57) and (117.41,258.4) .. (116.88,258.1) .. controls (116.34,257.79) and (109.24,217.29) .. (110.42,168.63) .. controls (111.6,119.97) and (104.61,80.9) .. (104.88,80.75) -- cycle ;
%Shape: Polygon Curved [id:ds5707457696676217] 
\draw  [draw opacity=0][fill={rgb, 255:red, 74; green, 144; blue, 226 }  ,fill opacity=0.2 ] (105.15,81.06) .. controls (105.42,80.9) and (107.88,59.71) .. (152.59,59.4) .. controls (197.31,59.09) and (200.03,83.57) .. (200.58,83.57) .. controls (201.12,83.57) and (200.85,119.04) .. (182.86,168.01) .. controls (164.86,216.98) and (157.64,256.99) .. (157.5,257.47) .. controls (157.36,257.95) and (155.86,248.99) .. (137.05,249.31) .. controls (118.24,249.62) and (117.68,258.72) .. (117.15,258.41) .. controls (116.62,258.1) and (109.52,217.6) .. (110.69,168.94) .. controls (111.87,120.28) and (104.88,81.21) .. (105.15,81.06) -- cycle ;
%Shape: Ellipse [id:dp6830416535971531] 
\draw  [color={rgb, 255:red, 128; green, 128; blue, 128 }  ,draw opacity=1 ] (98.34,83.88) .. controls (98.34,67.24) and (122.87,53.75) .. (153.14,53.75) .. controls (183.4,53.75) and (207.94,67.24) .. (207.94,83.88) .. controls (207.94,100.53) and (183.4,114.02) .. (153.14,114.02) .. controls (122.87,114.02) and (98.34,100.53) .. (98.34,83.88) -- cycle ;
%Shape: Arc [id:dp7269885450278397] 
\draw  [draw opacity=0] (174.58,80.43) .. controls (174.58,84.59) and (164.63,87.96) .. (152.34,87.96) .. controls (140.06,87.96) and (130.11,84.59) .. (130.11,80.43) -- (152.34,80.43) -- cycle ; \draw  [color={rgb, 255:red, 128; green, 128; blue, 128 }  ,draw opacity=1 ] (174.58,80.43) .. controls (174.58,84.59) and (164.63,87.96) .. (152.34,87.96) .. controls (140.06,87.96) and (130.11,84.59) .. (130.11,80.43) ;  
%Shape: Arc [id:dp12928404931326032] 
\draw  [draw opacity=0] (134.78,84.66) .. controls (134.78,84.66) and (134.78,84.66) .. (134.78,84.66) .. controls (134.78,81.81) and (142.54,79.5) .. (152.11,79.5) .. controls (161.67,79.5) and (169.43,81.81) .. (169.43,84.66) -- (152.11,84.66) -- cycle ; \draw  [color={rgb, 255:red, 128; green, 128; blue, 128 }  ,draw opacity=1 ] (134.78,84.66) .. controls (134.78,84.66) and (134.78,84.66) .. (134.78,84.66) .. controls (134.78,81.81) and (142.54,79.5) .. (152.11,79.5) .. controls (161.67,79.5) and (169.43,81.81) .. (169.43,84.66) ;  
%Shape: Ellipse [id:dp6708347893480646] 
\draw  [color={rgb, 255:red, 155; green, 155; blue, 155 }  ,draw opacity=0.5 ][dash pattern={on 4.5pt off 4.5pt}][line width=0.75]  (112.92,83.88) .. controls (112.92,74) and (130.93,65.99) .. (153.14,65.99) .. controls (175.35,65.99) and (193.35,74) .. (193.35,83.88) .. controls (193.35,93.77) and (175.35,101.78) .. (153.14,101.78) .. controls (130.93,101.78) and (112.92,93.77) .. (112.92,83.88) -- cycle ;
%Shape: Ellipse [id:dp04791266859324539] 
\draw  [color={rgb, 255:red, 155; green, 155; blue, 155 }  ,draw opacity=0.5 ][dash pattern={on 4.5pt off 4.5pt}] (161.46,99.26) .. controls (161.46,106.48) and (157.74,113.08) .. (153.14,114.02) .. controls (148.54,114.95) and (144.81,109.86) .. (144.81,102.64) .. controls (144.81,95.42) and (148.54,88.82) .. (153.14,87.88) .. controls (157.74,86.95) and (161.46,92.04) .. (161.46,99.26) -- cycle ;
%Shape: Ellipse [id:dp5965077560988057] 
\draw  [color={rgb, 255:red, 128; green, 128; blue, 128 }  ,draw opacity=1 ] (105.43,258.06) .. controls (105.43,246.81) and (119.46,237.69) .. (136.78,237.69) .. controls (154.1,237.69) and (168.13,246.81) .. (168.13,258.06) .. controls (168.13,269.31) and (154.1,278.43) .. (136.78,278.43) .. controls (119.46,278.43) and (105.43,269.31) .. (105.43,258.06) -- cycle ;
%Shape: Arc [id:dp24452679582612413] 
\draw  [draw opacity=0] (149.05,255.72) .. controls (149.05,255.72) and (149.05,255.72) .. (149.05,255.72) .. controls (149.05,258.54) and (143.35,260.81) .. (136.33,260.81) .. controls (129.3,260.81) and (123.6,258.54) .. (123.6,255.72) -- (136.33,255.72) -- cycle ; \draw  [color={rgb, 255:red, 128; green, 128; blue, 128 }  ,draw opacity=1 ] (149.05,255.72) .. controls (149.05,255.72) and (149.05,255.72) .. (149.05,255.72) .. controls (149.05,258.54) and (143.35,260.81) .. (136.33,260.81) .. controls (129.3,260.81) and (123.6,258.54) .. (123.6,255.72) ;  
%Shape: Arc [id:dp9784952832033866] 
\draw  [draw opacity=0] (126.28,258.58) .. controls (126.28,256.66) and (130.72,255.1) .. (136.19,255.1) .. controls (141.66,255.1) and (146.1,256.66) .. (146.1,258.58) -- (136.19,258.58) -- cycle ; \draw  [color={rgb, 255:red, 128; green, 128; blue, 128 }  ,draw opacity=1 ] (126.28,258.58) .. controls (126.28,256.66) and (130.72,255.1) .. (136.19,255.1) .. controls (141.66,255.1) and (146.1,256.66) .. (146.1,258.58) ;  
%Shape: Ellipse [id:dp3057254006335144] 
\draw  [color={rgb, 255:red, 155; green, 155; blue, 155 }  ,draw opacity=0.5 ][dash pattern={on 4.5pt off 4.5pt}][line width=0.75]  (113.77,258.06) .. controls (113.77,251.38) and (124.07,245.97) .. (136.78,245.97) .. controls (149.49,245.97) and (159.79,251.38) .. (159.79,258.06) .. controls (159.79,264.74) and (149.49,270.15) .. (136.78,270.15) .. controls (124.07,270.15) and (113.77,264.74) .. (113.77,258.06) -- cycle ;
%Shape: Ellipse [id:dp7857334004350445] 
\draw  [color={rgb, 255:red, 155; green, 155; blue, 155 }  ,draw opacity=0.5 ][dash pattern={on 4.5pt off 4.5pt}] (141.54,268.45) .. controls (141.54,273.33) and (139.41,277.79) .. (136.78,278.43) .. controls (134.15,279.06) and (132.02,275.61) .. (132.02,270.74) .. controls (132.02,265.86) and (134.15,261.39) .. (136.78,260.76) .. controls (139.41,260.13) and (141.54,263.57) .. (141.54,268.45) -- cycle ;
%Curve Lines [id:da26451586033025754] 
\draw [color={rgb, 255:red, 74; green, 144; blue, 226 }  ,draw opacity=1 ]   (104.88,84.04) .. controls (115.24,138.97) and (104.34,197.51) .. (116.88,257.47) ;
%Curve Lines [id:da4376126915711471] 
\draw [color={rgb, 255:red, 74; green, 144; blue, 226 }  ,draw opacity=1 ]   (200.3,83.57) .. controls (198.12,147.29) and (160.5,218.86) .. (157.23,257.47) ;
%Straight Lines [id:da5131589115761059] 
\draw    (34.27,262.49) -- (34.27,76.15) ;
\draw [shift={(34.27,74.15)}, rotate = 90] [color={rgb, 255:red, 0; green, 0; blue, 0 }  ][line width=0.75]    (10.93,-3.29) .. controls (6.95,-1.4) and (3.31,-0.3) .. (0,0) .. controls (3.31,0.3) and (6.95,1.4) .. (10.93,3.29)   ;
%Shape: Ellipse [id:dp40690585802333246] 
\draw  [color={rgb, 255:red, 0; green, 0; blue, 255 }  ,draw opacity=1 ][line width=0.75]  (104.88,83.57) .. controls (104.88,70.22) and (126.24,59.4) .. (152.59,59.4) .. controls (178.94,59.4) and (200.3,70.22) .. (200.3,83.57) .. controls (200.3,96.92) and (178.94,107.74) .. (152.59,107.74) .. controls (126.24,107.74) and (104.88,96.92) .. (104.88,83.57) -- cycle ;
%Shape: Ellipse [id:dp0710802773606165] 
\draw  [color={rgb, 255:red, 0; green, 0; blue, 255 }  ,draw opacity=1 ][line width=0.75]  (116.88,257.94) .. controls (116.88,253.17) and (125.91,249.31) .. (137.05,249.31) .. controls (148.2,249.31) and (157.23,253.17) .. (157.23,257.94) .. controls (157.23,262.71) and (148.2,266.57) .. (137.05,266.57) .. controls (125.91,266.57) and (116.88,262.71) .. (116.88,257.94) -- cycle ;

% Text Node
\draw (26.09,74.78) node    {$v$};
% Text Node
\draw (67.85,254.09) node  [color={rgb, 255:red, 128; green, 128; blue, 128 }  ,opacity=1 ]  {$\tilde{\mathcal{C}}( v_{1})$};
% Text Node
\draw (193.27,257.23) node  [color={rgb, 255:red, 0; green, 0; blue, 255 }  ,opacity=1 ]  {$\sigma ( v_{1})$};
% Text Node
\draw (236.07,79.88) node  [color={rgb, 255:red, 0; green, 0; blue, 255 }  ,opacity=1 ]  {$\sigma ( v_{2})$};
% Text Node
\draw (193.38,167.34) node [anchor=north west][inner sep=0.75pt]    {$\textcolor[rgb]{0.29,0.56,0.89}{\sigma '} =\textcolor[rgb]{0,0,1}{\sigma } \times [ 0,1]$};
% Text Node
\draw (70.31,77.06) node  [color={rgb, 255:red, 128; green, 128; blue, 128 }  ,opacity=1 ]  {$\tilde{\mathcal{C}}( v_{2})$};

\end{tikzpicture}

        \caption{The relative cycle $\sigma'$ of the horizon segment $\mathcal H_{v_1}^{v_2}$ has the topology of $\sigma \times [0,1]$, and it has two boundaries $\sigma(v_1) \subset \tilde{\mathcal C}(v_1)$ and $\sigma(v_2) \subset \tilde{\mathcal C}(v_2)$.}
        \label{fig:relative-cycle}
    \end{figure}
    
    Moreover, the coefficients $\Phi_{\sigma'}^{\mathcal{H^+}}$ can be calculated by
    \begin{equation}
        \Phi_{\sigma'}^{\mathcal H^+} = \int_{\tilde \sigma'} \bm \Phi|_{\mathcal H_{v_1}^{v_2}} = \int_{\tilde \sigma} \bm \Phi|_{\tilde{\mathcal C}} = \Phi_\sigma^{\mathcal H^+}\,,
    \end{equation}
    where $\tilde \sigma'$ are the absolute $(p-1)$-cycles satisfying $\int_{\tilde \sigma_2'} \hat{\bm \eta}_{\text{abs},\sigma_1'}^\text{Thom} = \delta_{\sigma_1'\sigma_2'}$. They are the same as the coefficients $\Phi_\sigma^{\mathcal H^+}$ in \cref{harmonicbasis,electricpotentialfundeq} obtained by the Hodge decomposition of $\bm \Phi$ purely on the spatial slices because the $\tilde \sigma'$ cycles are homologous to $\tilde \sigma$ cycles of the cross-sections within the horizon segment $\mathcal H_{v_1}^{v_2}$. Also, they are constant (independent of the null parameter $v$) over the relative cycles $\sigma'$ as a result of the closedness of $\bm \Phi$ on the horizon.

    Now, on $\mathcal H_{v_1}^{v_2}$, we could use the Poincar\'e-Lefschetz duality (the generalisation of Poincar\'e duality to manifolds with boundaries) to convert the integrals over $\mathcal H_{v_1}^{v_2}$ into integrals along $\sigma'$,
    \begin{equation}
        \int_{\mathcal H_{v_1}^{v_2}} \mathbf X \wedge \hat{\bm \eta}_{\text{abs},\sigma'}^\text{Thom} = \int_{\sigma'} \mathbf X
    \end{equation}
    for any $(D-p)$-form $\mathbf X$. (Here, the Thom form seems to be an overkill as the $\mathbf X$ in consideration is closed, but we need it for compatibility.) 

    Applying this to the electric current integral, we have 
    \begin{equation}
        \begin{split}
            \int_{\mathcal H_{v_1}^{v_2}} {\star}\delta \mathbf j \wedge \bm \Phi & = \sum_{\sigma'} \Phi_\sigma^{\mathcal H^+} \int_{\sigma'} {\star}\delta \mathbf j +  \int_{\mathcal H_{v_1}^{v_2}} {\star}\delta \mathbf j \wedge \dd{\bm \alpha}\\
            & = \sum_{\sigma'} \Phi_\sigma^{\mathcal H^+} \delta q_{\sigma'} + (-1)^{D-p}\left[\int_{\tilde{\mathcal C}(v)} {\star} \delta \mathbf j \wedge \bm \alpha \right]_{v=v_1}^{v=v_2},
        \end{split}
    \end{equation}
    and we notice that the extra contribution from the exact part $\dd{\bm \alpha}$ in $\bm \Phi$ does not vanish because $\mathcal H_{v_1}^{v_2}$ has a boundary. Here, we   used the closedness of ${\star} \delta \mathbf j = - \dd{(\delta \bm \Upsilon)}$, and we   identified the electric charge contribution from the matter source over each relative $(D-p)$-cycle $\sigma'$ as 
    \begin{equation}
        \delta q_{\sigma'} = \int_{\sigma'} {\star}\delta \mathbf j\,.
    \end{equation}
    Combining the results above, we see the $\bm \alpha$ dependent terms cancel each other, and the integrated fundamental identity reads 
    \begin{equation}
        \left[\int_{\mathcal{C}(v)}\!\!\!\!\left( \delta_\phi \mathbf{Q}_{\text{grav-EM},\xi}^{\text{GI}} - \xi \cdot \mathbf{\Theta}_{\text{grav-EM}}^{\text{GI}}(\phi, \delta \phi) \right) + \sum_\sigma \Phi_\sigma^{\mathcal{H}^+} \delta Q_\sigma(v)\right]_{v=v_1}^{v=v_2} = \int_{\mathcal{H}_{v_1}^{v_2}} \!\!\!\!\delta T_{ab} k^a \xi^b \bm \epsilon_\mathcal{H^+} + \sum_{\sigma'} \Phi_\sigma^{\mathcal{H}^+} \delta q_{\sigma'}\,.
    \end{equation} 
    Next, we identify the dynamical entropy of the black object as the integral of the gauge-invariant part of the improved Noether charge of the gravity-electromagnetic sector:
    \begin{equation}
        \begin{aligned}
        \frac{\kappa}{2 \pi} \delta S_\text{dyn}(v) & = \int_{\mathcal{C}(v)}\left( \delta_\phi \mathbf{Q}_{\text{grav-EM},\xi}^{\text{GI}} - \xi \cdot \mathbf{\Theta}_{\text{grav-EM}}^{\text{GI}}(\phi, \delta \phi) \right)\\
        & = \delta_\phi \int_{\mathcal{C}(v)}\left( \mathbf{Q}_{\xi,\text{grav-EM}}^{\text{GI}} - \xi \cdot \mathbf{B}_{\mathcal{H^+},\text{grav-EM}}^{\text{GI}}(\phi) \right).
        \end{aligned}
    \end{equation}
  Thus, we obtain an intermediate version of the  physical process first law for a  black object:
    \begin{equation} \label{physprocessintermediate}
        T \Delta \delta S_\text{dyn}  + \sum_\sigma \Phi_\sigma^{\mathcal{H}^+} \Delta \delta Q_\sigma   = \int_{\mathcal{H}_{v_1}^{v_2}} \delta T_{ab} k^a \xi^b \bm \epsilon_\mathcal{H^+} + \sum_{\sigma'} \Phi_\sigma^{\mathcal{H}^+} \delta q_{\sigma'}\,,
    \end{equation}
    where $T=\kappa/2\pi$ is the background Hawking temperature  and  $\Delta\delta S_\text{dyn} \equiv\delta S_\text{dyn}(v_2) - \delta S_\text{dyn}(v_1) $ is the leading-order entropy difference between the horizon cuts $\tilde{\mathcal C} (v_2)$ and $\tilde{\mathcal C}(v_2)$. Further, the term $\Delta \delta Q_\sigma \equiv \delta Q_\sigma (v_2) - \delta Q_\sigma (v_1)$ on the left-hand side represents the net electric charge variation of the black object during the physical process, whereas $\delta q_{\sigma'}$ on the right-hand side denotes the   flux of electric charge of the matter source flowing into the horizon   interval $\mathcal H_{v_1}^{v_2}$. These first-order charge variations   are, in fact, equal to each other
    \begin{equation} \label{chargesareequal}
        \Delta \delta Q_\sigma = \delta q_{\sigma'}\,,
    \end{equation}
as a consequence of   the first-order variation of the  sourced equation of motion for the $p$-form  gauge field
\begin{equation} 
   \dd{(\delta \bm \Upsilon)} = - {\star} \delta \mathbf{j}\,.
\end{equation}
Integrating this perturbed equation of motion on relative cycle $\sigma'$ yields \eqref{chargesareequal}, because
   \begin{equation}
       \delta Q_\sigma (v_2) - \delta Q_\sigma (v_1) = \left[- \int_{\sigma(v)} \delta \mathbf{\Upsilon}\right]_{v=v_1}^{v=v_2} = - \int_{\partial \sigma'} \delta \mathbf{\Upsilon}  = - \int_{\sigma'} \dd{(\delta \bm{\Upsilon})} = \int_{\sigma'} {\star} \delta \mathbf{j} = \delta q_{\sigma'}\,.
   \end{equation}
 This means that the physical process first law \eqref{physprocessintermediate} simplifies to 
 \begin{equation} \label{physicalprocesssimple}
    T \Delta \delta S_\text{dyn}     = \int_{\mathcal{H}_{v_1}^{v_2}} \delta T_{ab} k^a \xi^b \bm \epsilon_\mathcal{H^+} .
 \end{equation}
The right-hand side is the perturbed matter Killing energy flux, relative to the horizon Killing field~$\xi,$ through the horizon interval $\mathcal{H}_{v_1}^{v_2}. $

We note that   the physical process first law implies a \emph{linearised second law} for the dynamical gravitational entropy. Following~\cite{Hollands:2024vbe},   the local  version of the physical
process  first law  between two infinitesimally close   horizon slices is:
    \begin{equation}
       T \partial_v \delta S_\text{dyn}(v) = \int_{\tilde{\mathcal C}(v)} \delta T_{ab} k^a \xi^b \bm \epsilon_\mathcal{C}  \,.
    \end{equation}
    The right-hand side is non-decreasing if the stress-energy tensor $\delta T_{ab}$ of the external matter source obeys the null energy condition $\delta T_{vv} \geq 0$. The   linearised second law is then $\partial_v \delta S_\text{dyn}   \ge 0$.

 We close the discussion by briefly commenting on the scenario where other gauge-invariant matter fields $\{\psi_{a_1 \cdots a_s}\}$ are also externally sourced. Their perturbed equations of motion are
    \begin{equation}
        \delta E^{a_1 \cdots a_s}_{(\psi)} = - \delta \zeta^{a_1 \cdots a_s}
    \end{equation}
    and the fundamental identity is modified as 
    \begin{equation}
        \dd{\left(\delta_\phi \mathbf{Q}_{\text{grav-EM},\xi}^{\text{GI}} - \xi \cdot \mathbf{\Theta}_{\text{grav-EM}}^{\text{GI}}(\phi, \delta \phi) - \delta \bm \Upsilon \wedge \bm \Phi\right)}\fheq \delta T_{ab} k^a \xi^b \bm \epsilon_\mathcal{H^+} + {\star \delta \mathbf{j}} \wedge \bm \Phi - \delta C_{(\psi)ab} k^a \xi^b \bm \epsilon_{\mathcal H^+}
    \end{equation}
    where 
    \begin{equation}
        -\delta C_{(\psi)ab} = \delta \zeta_{a}{}^{a_2 \cdots a_s} \psi_{ba_2 \cdots a_s} + \delta \zeta^{a_1}{}_a{}^{a_3\cdots a_s} \psi_{a_1ba_3 \cdots a_s} + \cdots + \delta \zeta^{a_1 \cdots a_{s-1}}{}_{a} \psi_{a_1 \cdots a_{s-1}b}\,.
    \end{equation}
    If $\psi$ is a smooth scalar or (non-gauge) vector field, then $\delta C_{(\psi)ab} k^a \xi^b = 0$ on the horizon \cite{Wall:2024lbd}. The non-trivial contributions come from massive gravitons or higher-spin fields with $s \geq 3$ which are externally sourced. If we define 
    \begin{equation}
        \Xi^{(r)}_{a_1 \cdots a_{s-1}} \equiv \xi^b \psi_{a_1 \cdots a_{r-1}b a_{r}\cdots a_{s-1}} \quad \text{and} \quad \delta \tilde \zeta_{(r)}^{a_1 \cdots a_{s-1}} \equiv k^a \delta \zeta^{a_1 \cdots a_{r-1}}{}_a{}^{a_r \cdots a_{s-1}}
    \end{equation}
    for $1\leq r \leq s$ and view $\Xi^{(r)}$ as the ``higher-spin potentials'' and $\delta \tilde \zeta_{(r)}$ as the ``higher-spin currents'', the physical process first law (after cancellations of the electric terms) becomes 
    \begin{equation}
        T \Delta \delta S_\text{dyn}     = \int_{\mathcal{H}_{v_1}^{v_2}} \delta T_{ab} k^a \xi^b \bm \epsilon_\mathcal{H^+} + \sum_{r=1}^s \int_{\mathcal H_{v_1}^{v_2}} \Xi^{(r)}_{a_1 \cdots a_{s-1}} \delta \tilde \zeta_{(r)}^{a_1 \cdots a_{s-1}} \bm \epsilon_\mathcal{H^+}\,.
    \end{equation}

    \section{Differential form ambiguities}
\label{sec:ambi}
    Due to the nilpotency of the exterior derivative on differential forms, quantities defined through their exterior derivative  naturally exhibit differential form ambiguities. For example, if $\bm \beta$ is defined by $\dd{\bm \beta} = \bm \alpha$, then it is only defined up to an exact term: $\bm \beta \sim \bm \beta + \dd{\bm \gamma}$. Below, we discuss three different types of differential form ambiguities: Jacobson-Kang-Myers (JKM) ambiguities of covariant phase space quantities, gauge redundancies of the $p$-form gauge field, and redundancies in the gauge parameter $\bm \lambda_\xi$. We focus, for convenience, on the thermodynamic quantities that appear in the comparison version of the non-stationary first law. We will show for non-compact horizons that,  under appropriate  boundary conditions at the asymptotic ends of the extended horizon directions, the variation of the dynamical gravitational entropy, mass, angular momenta  and the electric charge   in the first law are invariant under these ambiguities (but not the magnetic charge term, see section \ref{ssec:magnetic}).  For compact horizons all the terms in the first law are automatically free of ambiguities irrespective of boundary conditions. 
    
    \subsection{JKM ambiguities}\label{sec:jkm}

    The covariant phase space quantities expressed in differential form language are subject to various ambiguities known as the Jacobson-Kang-Myers (JKM) ambiguities \cite{Jacobson:1993vj} (see also \cite{Iyer:1994ys})
    \begin{equation}
        \begin{aligned}
            \mathbf{L}(\varphi) &\to \mathbf{L}(\varphi) + \dd{\bm \mu(\varphi)}\\
            \bm \Theta(\varphi, \delta \varphi) &\to \bm \Theta(\varphi, \delta \varphi) + \dd{\mathbf{Y}(\varphi, \delta \varphi)}\\
            \mathbf{Q}_\chi & \to \mathbf{Q}_\chi + \dd{\mathbf{Z}(\varphi, \chi)}.
        \end{aligned}
    \end{equation}
  The dynamical field equations are not affected by these transformations.
    In order to match the properties of $\bm \Theta$ and $\mathbf{Q}_\chi$, $\mathbf{Y}(\varphi, \delta \varphi)$ is required to be linear in $\delta \varphi$, and $\mathbf{Z}(\varphi, \chi)$ is required to be linear in the vector field $\chi$ that generates diffeomorphisms.

The variation of the dynamical entropy is defined in terms of the 
 field variation of the gauge-invariant part of the  improved Noether charge   associated to the horizon Killing field $\xi$: \begin{equation}\delta S_{\text{dyn}} = \frac{2\pi}{\kappa} \int_{\mathcal C} \delta_\phi \tilde{\mathbf{Q}}_\xi^{\text{GI}}\,.\end{equation} The latter quantity is given by
\begin{equation}
\delta_\phi \tilde{\mathbf{Q}}_\xi^{\text{GI}}= \delta_\phi \mathbf{Q}_\xi^{\text{GI}} - \xi \cdot \mathbf{\Theta}^{\text{GI}} (\phi, \delta \phi)
\end{equation}
and transforms under the JKM ambiguities as \cite{Hollands:2024vbe,Visser:2024pwz}
\begin{equation}
  \delta_\phi \tilde{\mathbf{Q}}^{\text{GI}}_\xi \to \delta_\phi \tilde{\mathbf{Q}}^{\text{GI}}_\xi +  \dd \left[  \xi \cdot \mathbf{Y}(\phi, \delta \phi) + \delta_\phi \mathbf{Z} (\phi, \xi)\right] \,, 
\end{equation}
which follows from the background stationarity conditions $\mathcal L_\xi \phi = 0$. If the horizon cross-section~$\mathcal C$ is compact, then the exact differential ambiguity vanishes due to Stokes' theorem. This implies the  variation of the dynamical black
hole entropy for first-order perturbations of a stationarity background is JKM-invariant for compact horizons, as shown   in \cite{Hollands:2024vbe,Visser:2024pwz}. For non-compact horizon slices, however, the JKM ambiguities in the dynamical entropy reduce to boundary integrals over the codimension-3 surfaces at the asymptotic ends of the non-compact (flat) horizon directions.  The JKM shift in the variation of the dynamical entropy is 
\begin{equation}
    \Delta_{\text{JKM}} \delta S_{\text{dyn}} =\frac{2\pi}{\kappa} \int_{\partial \mathcal C}    \left [  \xi \cdot \mathbf{Y}(\phi, \delta \phi) + \delta_\phi \mathbf{Z} (\phi, \xi)\right]
    \label{jkmshift}
\end{equation}
We emphasise that, for non-compact horizon cross-sections, this JKM shift can be made to 
vanish by imposing suitable boundary conditions along the flat directions 
(see condition~4 in section~\ref{sec:falloff}). In particular, since we 
restrict to the gauge-invariant sector, the forms $\mathbf Y(\phi,\delta\phi)$ 
and $\delta_\phi \mathbf  Z(\phi,\xi)$ are linear in the perturbations 
$\delta\phi$ and do not depend explicitly on the gauge potential $\mathbf A$. 
It is sufficient to require that both the background fields $\phi$ and 
the perturbations $\delta\phi$ approach the same constant asymptotic values  as 
$y^\aleph \to +\infty$ and $y^\aleph \to -\infty$ along each flat 
direction $y^\aleph$, with $\mathcal{L}_{m_\aleph}\delta\phi \to 0$. Under 
this condition the integrands in~\eqref{jkmshift} take the same constant limit 
on the two opposite faces of~$\partial \mathcal C$. Because these faces enter with opposite 
orientation, their contributions cancel pairwise and hence
\begin{equation}
\Delta_{\rm JKM}\delta S_{\rm dyn} = 0 .
\label{VI.6}
\end{equation}
Thus, under these mild equal-asymptotics conditions, the variation of the 
dynamical entropy is JKM-invariant even for non-compact horizon slices. Similarly, the variation of the canonical mass $M$ and angular momenta $J_I$ in \eqref{eq:int-S-infty} is also JKM-invariant under these asymptotic conditions.

Now, there is a similar JKM ambiguity in the gauge-dependent part of the non-stationary first law:
\begin{equation}  
   \left (  \int_{\mathcal{C}}- \int_{\mathcal{S}_\infty} \right) \left[ \delta_\varphi \mathbf{Q}^{\text{GD}}_\xi - \xi \cdot \mathbf{\Theta}^{\text{GD}}(\phi, \delta \mathbf{A}) - \delta \bm \Upsilon \wedge \bm \lambda_\xi \right]\,.
    \end{equation}
In particular,   the first two terms in the integrand are JKM ambiguous, whereas the last term is not, since  $\delta \bm \Upsilon$ is the perturbed equation of motion for the field strength, and $\bm \lambda_\xi$ is defined via the gauge field. The JKM ambiguity is again an exact term, which vanishes when integrated over a compact horizon, and when integrated over a non-compact horizon it becomes a boundary integral  
\begin{equation}  
   \left (  \int_{\partial \mathcal{C}}- \int_{\partial \mathcal{S}_\infty} \right)  \left [  \xi \cdot \mathbf{Y}(\varphi, \delta \varphi) + \delta_\varphi \mathbf{Z} (\varphi, \xi)\right]  \,.
    \end{equation}
    This ambiguity   vanishes if we require the fourth asymptotic fall-off condition (in section \ref{sec:falloff}), but now applied to all dynamical fields $\varphi$ and their perturbations, including the gauge field~$\mathbf A$.  This ensures that the electric charge variation in the non-stationary first law for non-compact horizons is JKM unambiguous. 

    \subsection{Gauge redundancies}

    Under a gauge transformation  
    \begin{equation}
        \mathbf{A} \to \mathbf{A} + \dd{\bm \Lambda}
    \end{equation}
    the gauge parameter $ \bm{\lambda}_\xi$, defined in \eqref{eq:A-Killing},  transforms as
    \begin{equation}
        \bm{\lambda}_\xi \to \bm{\lambda}_\xi + \mathcal L_\xi \mathbf \Lambda\,,
    \end{equation}
    since $\mathcal L_\xi \mathbf A \to \mathcal L_\xi (\mathbf A + \dd \mathbf \Lambda) = \dd (\bm \lambda_\xi + \mathcal L_\xi \mathbf \Lambda)$.
  Moreover, the  gauge-dependent parts of the presymplectic potential $\bm \Theta^{\text{GD}}(\phi, \delta \mathbf{A})$ and the Noether charge $\mathbf{Q}^{\text{GD}}_\chi$ transform as 
    \begin{equation}
        \begin{split}
            \bm \Theta^{\text{GD}}(\phi, \delta \mathbf{A}) & \to \bm \Theta^{\text{GD}}(\phi, \delta \mathbf{A}) + (-1)^{D-p-1} \bm \Upsilon \wedge \delta (\dd{\bm \Lambda})\\
            \mathbf{Q}^{\text{GD}}_\chi & \to \mathbf{Q}^{\text{GD}}_\chi + \bm \Upsilon \wedge (\chi \cdot \dd{\bm \Lambda})
        \end{split}
    \end{equation}
Hence, the gauge ambiguity in the  combination of these differential forms that appears in \eqref{eq:improv-Q-GD}
   \begin{equation} \label{combinationofformsgauge}
        \delta_\varphi \mathbf{Q}^\text{GD}_\xi - \xi \cdot \mathbf \Theta^{\text{GD}} - \delta \bm \Upsilon \wedge \bm \lambda_\xi = \delta \mathbf{\Upsilon} \wedge (\xi \cdot \mathbf{A}) + (-1)^{D-p} (\xi \cdot \bm \Upsilon) \wedge \delta \mathbf{A}- \delta \bm \Upsilon \wedge \bm \lambda_\xi\,
    \end{equation}
is given by
\begin{equation} \label{twotermsgaugeamb}
       \delta \mathbf{\Upsilon} \wedge \dd(\xi \cdot  \mathbf{\Lambda} )+ (-1)^{D-p} (\xi \cdot \bm \Upsilon) \wedge \delta (\dd\mathbf{\Lambda}) \,.
\end{equation}
In the derivation of  the black hole first law, the combination of forms \eqref{combinationofformsgauge} was integrated over a horizon cross-section and spatial infinity. When pulled back to the Killing  horizon the second term in \eqref{twotermsgaugeamb} vanishes, as shown in \eqref{secondtermvanishesupsilon}. At spatial infinity, we impose the asymptotic condition $\delta \mathbf A \to 0$, which should hold independent of the gauge choice, hence we require $\delta (\dd \mathbf \Lambda) \to 0$ as $r \to \infty. $ After imposing the linearised   gauge field equation $\dd{(\delta \bm \Upsilon)} =0$, the remaining gauge redundancy at $\mathcal C $ and $\mathcal S_\infty$ is thus an exact term
\begin{equation}\label{exacttermgaugetr}
    \dd  (\delta \bm \Upsilon \wedge (\xi \cdot \bm \Lambda))\,.
\end{equation}
    When integrated over a compact horizon this redundancy vanishes, however for non-compact horizons it becomes a codimension-3 boundary integral. 
    By the fourth asymptotic condition (section~\ref{sec:falloff}) the
dynamical fields $\phi$ and their perturbations $\delta\phi$ approach the same
constant limits as $y^\aleph \to \pm \infty$, so that $\delta \bm \Upsilon  \to \delta \bm \Upsilon_\infty. $ If, in addition,
the pullback of  
$\xi \cdot \bm \Lambda$ to the boundary $  \partial \mathcal{C}$  
    goes to the same constant   at both ends of the
non-compact horizon directions, and similarly for the pullback to the boundary $ \partial \mathcal S_\infty$, i.e.,
\begin{equation}
    \xi \cdot \bm \Lambda \big|_{\partial \mathcal{C}} \to \lambda^{\mathcal C}_\infty \,, \qquad \xi \cdot \bm \Lambda \big|_{\partial \mathcal{S}_\infty} \to \lambda^{\mathcal S_\infty}_\infty \qquad \text{as} \qquad y^\aleph \to \pm\infty\,,
\end{equation}
then the codimension-3 contributions from the opposite faces cancel pairwise. 
This restriction on the gauge parameter ensures that residual gauge
transformations act trivially on the physical data at both ends of the
non-compact directions. In particular, it rules out large gauge
transformations that would otherwise shift the relative charge between
the two ends and generate a spurious boundary contribution. Hence, with these boundary conditions, the exact term \eqref{exacttermgaugetr} does not contribute on non-compact horizon cross-sections.

Finally, we discuss another redundancy in the definition of $\bm \lambda_\xi$ in \eqref{eq:A-Killing}, namely that one can freely add an exact term $\bm \lambda_\xi \to \bm \lambda_\xi + \dd  \bm \alpha.$ This redundancy leads to a similar ambiguity in \eqref{combinationofformsgauge} as   the gauge ambiguity, i.e.,
\begin{equation}
   - \dd (\delta \bm \Upsilon \wedge \bm \alpha)\,.
\end{equation}
When integrated over non-compact horizon slices, this ambiguity   vanishes if we require that the pullback of $\bm \alpha$ to $\partial \mathcal C$ goes to the same constant as $y^\aleph \to \pm \infty$ (and similarly for $ \partial \mathcal S_\infty$).
Another  way to remove  this ambiguity is by cancelling it against the ambiguity in \eqref{exacttermgaugetr}, which is realised if $\bm \alpha =   \xi \cdot \bm \Lambda$. Thus, without imposing any boundary condition, and by exploiting the 
interplay between the two ambiguities, it is possible to arrange for their 
cancellation and thereby obtain a fully gauge-invariant definition of the electric potential.

\section{Examples}
\label{sec:exam}

In this section we illustrate our general  framework for electromagnetic potential-charge contributions to
the first law  with three examples: A) four-dimensional dyonic AdS black holes in Einstein-Maxwell gravity with spherical, planar and hyperbolic horizon topology, B) five-dimensional rotating black rings with electric dipole charge, and C) electrically charged $(p-1)$-black branes sourced by a $p$-form gauge field in ten dimensions. We pay special attention to how and where the charges and potentials are defined. Our first law differs from previous work on these examples in the sense that the perturbations are now allowed to be non-stationary, such that the horizon entropy is   given by the dynamical entropy formula.  Moreover, we show how previously derived charges and potentials follow in a systematic fashion from our general formalism.

\subsection{Dyonic AdS black holes}
\label{sec:examplA}
For Einstein--Maxwell theory  with cosmological constant $\Lambda$ 
\begin{equation} \label{einsteinmaxwelltheory}
\mathbf L =  \frac{1}{16\pi G}(R-2 \Lambda)\bm \epsilon - \frac{1}{8\pi}  {\star} \mathbf F \wedge \mathbf F \,,  
\end{equation}
and $\mathbf{F}= \dd \mathbf A$ the two-form field strength,  the relevant covariant phase space quantities   are
\begin{align}
    \mathbf \Upsilon &= - \frac{1}{4\pi}\,{\star} \mathbf F\,, \\
    \mathbf \Theta (\phi, \delta \varphi)&=   \frac{1}{16 \pi G} \bm\epsilon^a \left ( \nabla^b \delta g_{ab} - g^{cd}\nabla_a \delta g_{cd} \right) -  \frac{1}{4\pi}   {\star} \mathbf F \wedge \delta \mathbf A\,,\\
    \mathbf Q_\xi &=   - \frac{1}{16 \pi G}\, {\star} \dd  \xi - \frac{1}{4\pi}(\xi \cdot \mathbf A)\, {\star} \mathbf F \,.
\end{align}
Static, dyonic, asymptotically anti-de Sitter black hole solutions  to \eqref{einsteinmaxwelltheory}  in four spacetime dimensions with constant-curvature horizon cross-sections are described by  \cite{Chamblin:1999tk,Caldarelli:2008ze}
\begin{align}
    ds^2 &= - f(r) \dd{t^2} + \frac{\dd{r^2}}{f(r)} + r^2 (\dd \theta + S_\kappa(\theta) \dd \phi)   \,,\\
    f(r)&= \kappa + \frac{r^2}{L^2} - \frac{m }{  r} + \frac{q^2 + p^2}{r^{2}}\,,\\
    \mathbf F &=  \dd \mathbf A= -\frac{q}{r^2} \dd t \wedge \dd r + p S_\kappa (\theta) \dd \theta \wedge \dd \phi\,,\\
    \star \mathbf F & = - 4 \pi \dd \check{\mathbf A}=    \frac{p}{r^2} \dd t \wedge \dd r +  q S_\kappa (\theta) \dd \theta \wedge \dd \phi\,,
\end{align}
where $S_\kappa (\theta) = \sin \theta (\kappa = 1), 1 (\kappa=0), \sinh \theta (\kappa=-1)$, corresponding to spherical, planar and hyperbolic horizons, respectively. In the blackening factor, $L= \sqrt{-3/  \Lambda}$ is the AdS curvature radius, $m$ is the mass parameter, $q$ is the electric charge parameter and $p$ is the magnetic charge parameter. 
The gauge field and dual gauge field, respectively,   read 
\begin{align}
    \mathbf  A &= \left (- \Phi_\infty  -  \frac{q}{r}\right) \dd t +    p \left ( \int S_\kappa (\theta ) \dd \theta\right) \dd \phi\,,   \\
    \check{\mathbf A} & = \left ( -\Psi_\infty  - \frac{p}{4 \pi r}\right) \dd t     - \frac{q}{4 \pi } \left ( \int S_\kappa (\theta )   \dd \theta \right)  \dd \phi\,,
\end{align}
with   $\Phi_\infty$ and $\Psi_\infty$ being   pure-gauge constants. For the spherical case ($\kappa=1$), the gauge potential must be defined on two overlapping patches, related by a gauge transformation, to avoid the Dirac string singularity. The same   patching applies to the dual potential $\check{\mathbf{A}}$, which is defined on the same principal 
 $U(1)$ bundle, ensuring that both $\mathbf F$ and $\star \mathbf F$ are globally smooth. The standard north and south patch expressions on the two-sphere for the (dual) gauge potentials are: 
 \begin{align}
     A_\phi^\text{N} &= p (1 - \cos \theta) \dd \phi\,, \quad \qquad\,  A_\phi^\text{S} =- p (1 + \cos \theta) \dd \phi\,,\\
      \check{A}_\phi^\text{N} &= - \frac{q}{4\pi} (1 - \cos \theta) \dd \phi\,, \,\,\, \quad  \check{A}_\phi^\text{S} = \frac{q}{4\pi} (1 + \cos \theta) \dd \phi\,,
 \end{align}
 satisfying $  \mathbf{A}^\text{N} - \mathbf{A}^\text{S}=   2 p \dd\phi$ and $   \check{\mathbf{A}}^\text{N}  - \check{\mathbf{A}}^\text{S}= - \frac{q}{2\pi}  \dd\phi$. The gauge transformations relating the north and south patches are the same for the gauge potential and its dual, under the charge transformation 
\((q, p) \to (4\pi p, -q/4\pi)\), which corresponds to the electromagnetic duality 
\(\mathbf{F} \to \check{\mathbf{F}} = -\tfrac{1}{4\pi}{\star}\mathbf{F}\).
 
We now consider dynamical perturbation of these black holes, and study the first law. The dynamical  entropy, ADM mass, and Hawking temperature are, respectively,
 \begin{align}
    S_{\text{dyn}} &= (1- v \partial_v) \frac{\omega_{\kappa}r_h^{ 2}}{4 G}\,, \qquad \qquad 
    M =\frac{ \omega_{\kappa}}{8 \pi G} m\,, \\
    T&= \frac{1}{4\pi} f'(r_h) = \frac{1}{4\pi r_h}\left(\kappa+\frac{3r_h^2}{L^2}-\frac{ q^2 + p^2}{r_h^{4}}\right)\,,
    \end{align}
    where the event horizon radius $r_h$ is the largest root of $f(r_h)=0$. 
    
    Further, the electric and magnetic charge, and the electric and magnetic potential   are   given by
    \begin{align}
      Q &= - \int_{\Sigma_{\kappa}}\!\! \bm \Upsilon = \frac{1}{4 \pi}\int_{\mathcal{C}_{\kappa}}  \!\!{\star}  \mathbf F = \frac{q  }{4\pi }  \omega_{\kappa} \,, \qquad P  = \int_{\mathcal{C}_{\kappa}} \mathbf F = p \omega_\kappa\,,\\
      \bar \Phi &=  \xi \cdot \mathbf A |_{r =\infty} - \xi \cdot \mathbf A |_{r =r_h}=    \frac{q}{r_h}\,, \qquad \,\, \,
      \bar \Psi  = \xi \cdot \check{\mathbf A} |_{r =\infty} - \xi \cdot \check{\mathbf A} |_{r =r_h}=   \frac{p}{4 \pi r_h}\,.
 \end{align} 
  Here, the horizon cross-section $\mathcal{C}_{\kappa}$ is the unit  $2$-dimensional space of constant curvature~$\kappa$, and $\omega_{\kappa}=\int_{\mathcal{C}_\kappa}S_\kappa(\theta)\dd \theta \wedge \dd \phi$ is the volume of $\mathcal{C}_{\kappa}$, which is infinite for $\kappa=0$ and $\kappa=-1$.   The normalisation of the electric and magnetic charges differs by $1/4\pi$, but that is purely conventional. These charges are the same at the horizon and spatial infinity, since $\dd{( {\star} \mathbf F)} = 0 $ and $\dd \mathbf F = 0$. We further observe that the pure-gauge terms $\Phi_\infty$ and $\Psi_\infty$ drop  out in the electric and magnetic potential difference. 
  Moreover, in the spherical case ($\kappa=1$), the   electric potential  
 $\bar \Phi$ (and similarly $\bar \Psi$) is the same on the north and south patches,  as the gauge transformation between the two gauge potentials vanishes identically, since $\xi \cdot \dd \phi=0.$

These thermodynamic quantities satisfy the non-stationary first law for all horizon topologies
    \begin{equation}
        T \delta S_\text{dyn} = \delta M   - \bar \Phi \delta Q - \bar \Psi \delta P\,.
    \end{equation}
    For non-compact horizon slices ($\kappa=0,-1$) one instead often works with finite densities
\begin{equation}
\varepsilon=\frac{M}{\omega_{\kappa}},\qquad 
s_{\text{dyn}}=\frac{S_{\rm dyn}}{\omega_{\kappa}},\qquad 
\rho=\frac{Q}{\omega_{\kappa}}\,, \qquad p= \frac{P}{\omega_k}\,,
\end{equation}
in which case the first law reads
\begin{equation}
   T \delta s_{\text{dyn}} = \delta \varepsilon - \bar \Phi \delta \rho  - \bar \Psi \delta p\,.
\end{equation}
   
\subsection{Black rings with dipole charge}
\label{sec:blackrings}

Next, we  consider five-dimensional Einstein gravity minimally coupled to a dilaton $\chi$ and a two-form gauge potential $\mathbf{B}$,
\begin{equation} \label{EMDfivedim}
\mathbf{L}
=\frac{1}{16\pi G}\left[R\bm \epsilon
-\frac{1}{2}\, {\star} \dd\chi \wedge \dd\chi
-\frac{1}{2}\, e^{-\alpha\chi}\, {\star} \mathbf{H}\wedge  \mathbf{H}\right]\,.
\end{equation}
where $\mathbf{H}=\dd{\mathbf{B}}$ is the three-form field strength, and $\bm \epsilon$ is the five-dimensional volume form. 
The covariant phase space quantities for this Einstein-Maxwell-Dilaton action are  
\begin{align}
\mathbf \Upsilon &= -\frac{1}{16 \pi G}  e^{-\alpha\chi}\,{\star} \mathbf H\,, \\
    \mathbf \Theta (\phi, \delta \varphi)&=   \frac{1}{16 \pi G} \left [\bm\epsilon^a \left ( \nabla^b \delta g_{ab} - g^{cd}\nabla_a \delta g_{cd} \right)-  {\star} \dd \chi \, \delta \chi-    e^{-\alpha \chi} \, {\star} \mathbf H \wedge \delta \mathbf{B} \right]\,,\\
     \mathbf Q_\xi &=   - \frac{1}{16 \pi G}\, {\star} \dd  \xi - \frac{1}{16 \pi G}  e^{-\alpha\chi}    \, {\star} \mathbf{H} \wedge (\xi \cdot \mathbf{B})
    \,.
\end{align}

Emparan \cite{Emparan:2004wy} 
 constructed stationary, asymptotically flat black ring solutions to  \eqref{EMDfivedim} with nonzero electric dipole charge. This is a charge generalisation  of the explicit solution for a neutral rotating black ring found in \cite{Emparan:2001wn}.    The solutions for the metric, dilaton and gauge potential are described in \cite{Emparan:2004wy,Copsey:2005se}, and we do not repeat them here. These black rings  have  horizon topology $S^1\times S^2$ and  are characterised by three independent parameters: the mass $M$,  angular momentum~$J$ along the ring~$S^1$ (with coordinate $\psi$, whose period is $2 \pi$), and a local electric dipole charge~$Q$. Following our definition~\eqref{pcharge}, the   dipole charge is given by integrating the electric flux two-form~$\bm \Upsilon$ over any two-cycle $  S^2 (= \sigma)$ that links the ring
\begin{equation} \label{dipolecharge}
    Q = -  \int_{S^2} \bm \Upsilon = \frac{1}{16 \pi G} \int_{S^2} e^{- \alpha \chi} \,{\star} \mathbf H \,.
\end{equation}
This is independent of the choice of linking two-sphere (inside the horizon cross-section), because the field equation
$\dd(e^{-\alpha\chi}\,{\star} \mathbf H)=0$ ensures the integrand is closed.  As shown in \cite{Copsey:2005se}, the gauge potential $\mathbf{B}$ cannot be defined smoothly on all of spacetime if both  the   dipole charge is nonzero and    the stationarity condition $\mathcal L_\xi \mathbf B =0$ holds. However, this  is a gauge artifact: 
$\mathbf H = \dd \mathbf B$
remains smooth globally, but $\mathbf{B}$
 must be defined patchwise or diverges on either the 
$\psi$-axis or the horizon. In \cite{Copsey:2005se} they set $\mathbf B=0$   on the $\psi$-axis, but allow it to diverge on the bifurcation surface, so that the electrostatic potential is finite on that surface.

Our normalisation for the charge \eqref{dipolecharge} differs   from the one in \cite{Emparan:2004wy,Copsey:2005se}, where the charge is defined as $q_e = 4 G Q $. This is not an issue, since this  is compensated for by our different normalisation for the electrostatic potential $ \Phi$, which differs  from the potential $\phi_e =   \Phi / (4G)$ in~\cite{Emparan:2004wy,Copsey:2005se}. Hence, the product of the   potential and (the variation of) the charge in the first law is the same as in the original works, i.e. $  \Phi \delta Q = \phi_e \delta q_e.$ Equation  \eqref{electricpotentialfundeq} states that the electrostatic potential is defined as the integral of the  potential form over the one-cycle $S^1 (= \tilde \sigma)$
\begin{equation}
  \Phi = - \int_{S^1}  \xi \cdot \mathbf B |_{\mathcal H^+}  =   -  2\pi    B_{t \psi} |_{\mathcal H^+}  \,.
\end{equation}
In the second equality we used $\xi=\partial_t+\Omega\,\partial_\psi$, where $ \Omega$ is the angular velocity along the $S^1$, and $\mathbf{B} = B_{t\psi} \dd t \wedge \dd \psi$. Emparan \cite{Emparan:2004wy} expresses the electrostatic potential as the difference between $B_{t\psi}$ at infinity and the horizon, but $B_{t\psi}|_\infty=0$ because the spacetime is asymptotically flat and the field strength 
  vanishes at infinity, allowing 
 $\mathbf B$ to be gauged to zero there \cite{Copsey:2005se}.
We note the electrostatic potential is   invariant under gauge transformations $\mathbf B \to \mathbf B + \dd{\bm \Lambda}$ that respect the stationarity condition $\mathcal L_\xi \mathbf B = 0$, which implies  $\mathcal L_\xi \bm \Lambda=0$. This is because $\xi \cdot \dd \bm \Lambda = \mathcal L_\xi \bm \Lambda - \dd(\xi \cdot \bm \Lambda)$ and the total derivative term vanishes when integrated on a closed~$S^1.$  This is a special case of \eqref{gaugetranselectricpot}; here we choose a gauge where  $\bm \lambda_\xi =0 $  in accordance with the stationarity condition $\mathcal L_\xi \mathbf B = 0$ used in~\cite{Emparan:2004wy,Copsey:2005se}.   

In this example, the boundary term at spatial infinity \eqref{infintegralcharge} vanishes, since the spacetime is asymptotically flat and the flux variation $\delta \mathbf \Upsilon$ falls off too rapidly to produce a nonzero integral. Moreover, the horizon two-sphere that supports the dipole flux is not homologous to any cycle at infinity, since    $\mathcal{S}_\infty \cong S^3$ has no non-trivial two-cycle, so no asymptotic charge can be defined (see discussion below \eqref{infintegralcharge}). Consequently, the  electric  charge term in the first law arises solely from the horizon contribution, where the potential  $\Phi$ and local charge  $Q$ are evaluated.

Now, for  dipole black rings the   non-starionary comparison first law takes the universal form
\begin{equation}
T\delta S_{\rm dyn}=\delta M-\Omega\delta J- \Phi \delta Q\,,
\end{equation}
where $S_{\rm dyn}=\frac{1}{4G} (1- v \partial_v )  A  $ is the dynamical entropy for general relativity. For   perturbations to nearby stationary black  rings, the first law was derived by Emparan \cite{Emparan:2004wy} and recovered using Hamiltonian methods in \cite{Copsey:2005se} and using covariant phase space methods in \cite{Compere:2007vx}. We have shown that,  for non-stationary perturbations, the black ring first law with the right dipole charge term  follows from the framework of covariant phase space.  We emphasise that, due to Poincar\'{e}  duality between cohomology and homology, the charge precisely localises on the two-sphere and the   potential on the ring of the horizon cross-section.

\subsection{Charged black  branes}

We now turn to the class of electrically charged black ($p-1$)-brane solutions of the ten-dimensional Einstein-Maxwell-Dilaton theory, with $1\leq p \leq 7$ \cite{Gibbons:1987ps,Garfinkle:1990qj,Horowitz:1991cd,Duff:1993ye}.  These backgrounds generalise the neutral branes by supporting a non-trivial $p$-form potential $\mathbf A_{(p)}$ with field strength $\mathbf F_{(p+1)} = \dd \mathbf A_{(p)}.$ For example,  a charged black string corresponds to $p=2$, so it is sourced by a two-form gauge field. The charged black $(p-1)$-branes are solutions to the following theory:
\begin{equation}
    \mathbf L = e^{-\phi} [R \bm \epsilon + 4\, {\star}\dd\phi \wedge \dd\phi] - 2 e^{2 \alpha \phi} {\star} \mathbf F_{(p+1)} \wedge \mathbf F_{(p+1)}.
\end{equation}
This Lagrangian corresponds to the bosonic sector of a low-energy supergravity action truncated to  the metric, dilaton, and a single $p$-form Ramond-Ramond field.
For suitable choices of the coupling~$\alpha$ the theory can be embedded consistently in type II string theory   in ten dimensions, where    the electrically charged solutions correspond to the supergravity limits of   D-branes with Ramond-Ramond charge. The magnetic duals of these electrically charged $p$-branes  are $(7-p)$-brane configurations sourced by the    potential $\check{\mathbf{A}}_{(8-p)}$, whose field strength  is proportional to the Hodge dual of the electric one. 

The covariant phase space quantities for this Lagrangian are
\begin{align}
\mathbf \Upsilon &= - 4 e^{2\alpha \phi} \, {\star} {\mathbf F}\,,\nonumber\\
    \mathbf \Theta (\phi, \delta \varphi)&= e^{-\phi}\bm\epsilon^a \left [ \nabla^b \delta g_{ab} - g^{cd}\nabla_a \delta g_{cd} - (\nabla^b \phi )\delta g_{ab}+  g^{cd}(\nabla_a \phi)\delta g_{cd} + 8 ( \nabla_a \phi) \delta \phi   \right] - 4 e^{2 \alpha \phi}\, {\star} {\mathbf F} \wedge \delta \mathbf A \,, \nonumber\\
     \mathbf Q_\xi &= -  \bm \epsilon_{ab}\left[ e^{-\phi} \nabla^a \xi^b - 2 \xi^a \nabla^b e^{-\phi}  \right] - 4 e^{2\alpha \phi}\,{\star} {\mathbf F} \wedge (\xi \cdot \mathbf A) \, ,
\end{align}
where we have adopted the Killing gauge $\mathcal L_\xi \mathbf A = 0$ for simplicity. These expressions define the electric flux, symplectic potential, and Noether charge, that are needed to derive the first law   below.

The   static, electrically charged black $p$-brane solution, with Euclidean symmetry in $p-1$ dimension and spherical symmetry in $11-p$ dimensions, takes the form  \cite{Horowitz:1991cd}
\begin{align}
    \dd{s^2} &= [f_-(r)]^{\gamma_x}\left(- \frac{f_+(r)}{f_-(r)} \dd{t^2} + \dd{x^i}\dd{x_i}\right) + [f_-(r)]^{\gamma_r + 1} \left(\frac{\dd{r^2}}{f_+(r)f_-(r)} + r^2 \dd{\Omega}_{9-p}^2\right)\,,\\
    \mathbf A & = \frac{Q_0}{(8-p)\gamma_A r_-^{8-p}} [f_-(r)]^{\gamma_A}\bm \epsilon_\text{$p$}\,,\qquad 
    \star \mathbf F   = Q_0 \bm \epsilon_{S^{9-p}}\,,\qquad
    e^{-2 \phi}  = [f_-(r)]^{\gamma_\phi}\,,
\end{align}
where $t,r$ and $x^i$ are the time, radial and $(p-1)$-brane coordinates, respectively, $\bm  \epsilon_{S^{9-p}}$ is the volume element on the unit $(9-p)$-sphere, i.e. $\int_{S^{9-p}}\bm  \epsilon_{S^{9-p}} = \Omega_{9-p}$, and $\bm \epsilon_\text{$p$}=\dd t \wedge \dd x^1 \wedge \cdots \wedge \dd x^{p-1}$. The exponents  
\begin{align}
    \gamma_A & = \frac{p\gamma_x + \gamma_r + 1}{2}\,,\qquad \,\,\,\,\,
    \gamma_r   = \beta (\alpha -1) - \frac{6-p}{8-p}\,,\qquad
    \gamma_x  = \beta (\alpha +1)\,,\\
    \gamma_\phi & = - \beta (4 \alpha + p-4)\,,\qquad 
    \beta = \frac{1}{2 \alpha^2 + (p-4) \alpha + 2}
\end{align}
are fixed by the dilaton coupling $\alpha$   and  ensure that the fields satisfy the coupled Einstein, Maxwell, and scalar equations of motion.

Further, the blackening factors are
\begin{equation}
    f_\pm(r) = 1 - \left(\frac{r_\pm}{r}\right)^{8-p}\,,
\end{equation}
 where $r_+ $ and $r_-$ are integration constants with $r_- \le r_+$. The surface $r=r_+$ is a regular event horizon with cross-sections having topology $\mathcal C \cong S^{9-p} \times \mathbb R^{p-1}$,  and  $r=r_-  $ is the location of a curvature singularity, where the transverse $(9-p)$-spheres shrink to zero size. These two parameters are related to the charge and mass densities of the black brane,  which according to    \cite{Horowitz:1991cd} are given by 
\begin{align}
    Q_0  &= \sqrt{\frac{\beta (8-p)^2 (r_+ r_-)^{8-p}}{2}} \,,\\
     M_0 &= [1 - (8-p) \gamma]r_-^{8-p} + r_{+}^{8-p}\,,
\end{align}
with 
\begin{equation}
    \gamma = \frac{2(10-p)+2\alpha (9-p)(p+2\alpha -4)}{(9-p)(8-p)[2 + \alpha(p+2\alpha -4)]}\,.
\end{equation}
The total electric charge and ADM  mass of the black brane are
\begin{align}
    Q & =- \int_{\mathbb R^{p-1}} \dd[p-1]{x} \int_{S^{9-p}(t)} \mathbf \Upsilon= V_{p-1} \int_{S^{9-p}(t)} 4 e^{2 \alpha \phi} {\star}\mathbf F =4 Q_0 \Omega_{9-p} V_{p-1} e^{2 \alpha \phi} \Big|_{r=r_+} \,,\\
    M&= (9-p) \Omega_{9-p} V_{p-1} M_0\,,
\end{align}
with $ V_{p-1}  = \int_{\mathbb R^{p-1}} \dd{x^1} \cdots \dd{x^{p-1}}$ the    flat, spatial volume of the brane that is formally divergent. Further,    the electric potential difference between the horizon and infinity is  
\begin{equation}
    \bar \Phi  = \partial_{x^{p-1}} \cdot ({\dots} \cdot(\partial_{x^2} \cdot (\partial_{x^1} \cdot [\xi \cdot \mathbf A \big |_{r=\infty} - \xi \cdot \mathbf A \big |_{r=r_+}])))=  \frac{Q_0}{(8-p) \gamma_A r_-^{8-p}}[1-(f_-(r_+))^{\gamma_A}]\,.
    \end{equation}
    Note that the charge is integrated over the non-compact horizon cross-section $\mathcal{C} = \tilde{\mathcal{C}} \times \Sigma_{p-1} \cong S^{9-p} \times \mathbb R^{p-1}$, whereas the potential is a scalar. In the language of section \ref{sssec:non-compact}, this corresponds to the sector with $r=0$ and $k=p-1$, i.e., all of the non-compact legs are on the potential form. The full fibre integration defines the downgraded electric flux density on the compact subspace $\tilde {\mathcal C}$ as $\delta\tilde{\bm \Upsilon} = \int_{\mathbb R^{p-1}} \delta \bm \Upsilon$, which integrates on the fundamental cycle $S^{9-p}$, while the potential downgrades to a scalar potential on $\tilde{\mathcal{C}}$. The same story carries over to the spatial infinity.
    
Finally, the non-stationary comparison version of the first law for this charged black $p$-brane  reads 
\begin{equation}
    T\delta S_\text{dyn} = \delta M - \bar \Phi \delta Q\,,
\end{equation}
where the Hawking temperature and dynamical entropy are given by
\begin{align}
    T&=\frac{\kappa}{2\pi}  = \frac{8-p}{4\pi r_+} [f_-(r_+)]^{(\gamma_x - \gamma_r - 1)/2}\,,\\
    S_\text{dyn} & = (1 - \partial_t) \int_{\mathcal{C}(t)} \bm \epsilon_{\mathcal{C}} e^{-\phi}\,.
\end{align}
The mass, entropy and charge all carry the same divergent $V_{p-1}$ brane volume factor. We could instead work with finite densities 
\begin{equation}
    \varepsilon = \frac{M}{V_{p-1}}, \qquad s_\text{dyn} = \frac{S_\text{dyn}}{V_{p-1}}, \qquad \rho = \frac{Q}{V_{p-1}}\,,
\end{equation}
and write the first law as 
\begin{equation}
    T \delta s_\text{dyn} = \delta \varepsilon - \bar{\Phi} \delta \rho.
\end{equation}

\section{Outlook}

The present analysis provides a universal geometric framework for defining electromagnetic charges and potentials in the covariant phase space formalism and for deriving their contribution to the non-stationary first law of thermodynamics for black objects with non-compact horizons. Several natural extensions of this work remain to be explored. These directions would not only test the robustness of the framework but also deepen our understanding of electromagnetic charges for black objects    and of the entropy of dynamical horizons in higher derivative gravity. We conclude by listing a few promising avenues for future research:

\begin{itemize}
  \item \emph{Non-abelian gauge fields:} 
Generalise the present formalism to non-abelian gauge fields, allowing for black-object solutions carrying Yang-Mills charges and deriving the associated potential-charge terms in the first law. 
The first law for stationary perturbations of Einstein-Yang-Mills black holes has been studied in \cite{Sudarsky:1992ty,Heusler:1993cj,Corichi:2000dm,Ashtekar:2000hw,Gao:2003ys,Prabhu:2015vua}, and it is natural to extend this analysis to non-stationary perturbations. 
The principal-bundle approach developed by Prabhu~\cite{Prabhu:2015vua} is particularly promising, as it provides a fully gauge-covariant formulation of the first law that does not rely on a global choice of gauge. 
This addresses the topological obstructions that arise when gauge potentials cannot be defined   globally as smooth   fields on spacetime, as   in the presence of  magnetic monopoles. Even if such a global gauge choice does exist  for Yang-Mills potentials, it is not clear that such  a choice can be made  consistent with the stationarity condition $\mathcal L_\xi \mathbf A^{\text{YM}}=0$ \cite{Gao:2003ys}.
To avoid this, Prabhu works directly on the principal bundle, and unifies spacetime diffeomorphisms and internal gauge transformations as bundle automorphisms. Our formalism    is closely related in spirit, in that both treat potential-charge pairs as  gauge-covariant quantities rather than as functions of a chosen global gauge, but extends beyond 1-form gauge fields 
(albeit only for the  abelian case) and 
beyond the stationary regime by allowing for non-stationary perturbations. 
A natural next step is to formulate a fully gauge-covariant, non-stationary black hole first law for Yang-Mills fields by combining the dynamical entropy framework with Prabhu's bundle construction.

  \item \emph{Higher-derivative gravities:} Compute dynamical black hole entropy   for diffeomorphism-invariant Lagrangians that depend on the metric field, the Riemann tensor and its covariant derivatives. Even though the dynamical black hole entropy is formally defined by HWZ \cite{Hollands:2024vbe} in terms of the improved Noether charge, computing  the entropy formulae  explicitly for  certain classes of  higher-curvature theories of gravity is not straightforward. In our previous work we evaluated the improved Noether charge     for $f$(Riemann) theories of gravity and showed that it agrees with the general expression for dynamical black hole entropy  $S_{\text{dyn}}= (1- v \partial_v) S_{\text{Wall}}$, where $S_{\text{Wall}}$ was calculated originally by Wall  \cite{Wall:2015raa} for $f$(Riemann) theories. This has not been done for  diffeormphism-invariant Lagrangians  of the form $\mathbf L (g^{ab}, R_{abcd}, \nabla_a)$, nor is the Wall entropy known in those cases. The resulting Wall entropy  should be  compared with the Dong  holographic entanglement entropy for higher-derivative gravity theories  \cite{Dong:2013qoa,Miao:2014nxa}. For $f$(Riemann) theories the Dong and Wall entropy match, but perhaps this is no longer the case for these more general, higher-deriviative theories of gravity.

    \item \emph{Taub-NUT spacetimes:} 
An intriguing direction concerns the inclusion of gravitational magnetic charges, as realised in Taub-NUT spacetimes. The NUT parameter represents a magnetic-type gravitational charge, dual to the ordinary ADM mass in the same sense that magnetic charge is dual to electric charge in electromagnetism. In Lorentzian signature, its presence introduces a Misner string --- the gravitational analogue of the Dirac string of a magnetic monopole --- which reflects the nontrivial $\mathbb{R}$ fibration of the time coordinate over the angular two-sphere and complicates the standard definitions of globally conserved charges and potentials. Recent work~\cite{Hennigar:2019ive,Bordo:2019tyh} on Lorentzian formulations of Taub-NUT thermodynamics has proposed that the NUT charge and its conjugate potential should appear in the first law for the outer Killing horizon. However, it remains unclear what the correct horizon entropy of the Taub-NUT spacetime is --- whether it corresponds to the Bekenstein-Hawking area law or to a more general Noether charge expression~\cite{Fatibene:1999ys,Garfinkle:2000ms}. The covariant phase space formalism may clarify how, if at all, a gravitational magnetic potential-charge pair enters the first law and determine the appropriate entropy for Taub-NUT horizons.

  \item \emph{Extremal horizons:} Investigate the nature of the dynamical entropy for extremal black holes. Extremal horizons have vanishing surface gravity and hence zero temperature but finite entropy, and they lack a bifurcation surface. Consequently, the usual temperature-entropy term is absent from the first law. It is therefore unclear whether the HWZ dynamical black hole entropy formula extends consistently to extremal horizons, even though for general relativity it  applies to arbitrary horizon cross-sections (see \cite{Rignon-Bret:2023fjq} and section 5.1 in \cite{Visser:2024pwz}) and perhaps also in the extremal limit.

  \item \emph{Multi-horizon spacetimes:} Extend the dynamical gravitational entropy to configurations containing multiple disjoint Killing horizons, thereby lifting the assumption that the future horizon is connected. Examples include multi-center Einstein-Maxwell black holes such as the Majumdar-Papapetrou geometries~\cite{Majumdar:1947eu,Papaetrou:1947ib}, higher-dimensional composite objects like black Saturns, black di-rings and  bicycling black rings~\cite{Elvang:2007rd,Iguchi:2007is,Elvang:2007hg,Izumi:2007qx}, and de Sitter black holes with both black hole and cosmological horizons~\cite{Gibbons:1977mu,Kastor:1992nn} (for recent work on dynamical entropy for the de Sitter example, see \cite{Zhao:2025zny}). Each horizon component carries its own entropy, surface gravity  and potential-charge pair --- except for extremal horizons, for which the surface gravities vanishes. A covariant, dynamical formulation encompassing several interacting horizon components could clarify how individual entropies combine and whether a total dynamical entropy can be consistently defined for such multi-horizon systems.

  \item \emph{Comparison with thermodynamics of quasi-local horizons:} 
Clarify how the first law for the dynamical entropy of event horizons relates to the   first laws formulated for quasi-local horizons, such as apparent, trapping, or dynamical horizons~\cite{Collins:1992eca,HaywardPRD,Ashtekar:2002ag,Ashtekar:2004cn}. 
In the HWZ framework, the dynamical entropy of the event horizon agrees to first order with the Bekenstein-Hawking area-entropy of the apparent horizon, suggesting a close correspondence between these notions. 
A precise comparison requires understanding how the Killing vector field entering the event-horizon first law corresponds to the vector field tangent to the quasi-local horizon. It also remains open whether the relation $S_{\text{dyn}} = A_{\text{app}}/4G$ persists beyond first order, or if higher-order corrections distinguish the event horizon and apparent horizon entropies. 
A unified formulation could reveal whether these different horizons encode the same thermodynamics or represent distinct coarse grainings in the  underlying microscopic theory. 

  \item \emph{Fluid/gravity correspondence:} Future work may extend the fluid/gravity correspondence to higher-curvature theories of gravity, with the aim of determining how such corrections modify non-equilibrium entropy production, relaxation dynamics, and transport properties beyond Einstein gravity. This may allow  us to extend the dynamical entropy framework to non-linear   perturbations of equilibrium black hole physics, where the leading order correction to the equilibrium entropy should correspond to the dynamical entropy.  Building on the hydrodynamic reconstruction of Ricci-flat and AdS geometries~\cite{Compere:1103.3022,Skenderis:1404.4824}, one may develop a systematic gradient expansion incorporating higher-derivative interactions~\cite{Eling:2012xa,Eling:2011ct,Chandranathan:2022pfx} and extract the corresponding corrections to the holographic constitutive relations and entropy current. It would also be interesting to compare such results with those obtained in the membrane paradigm and large-$D$ expansions, where analogous higher-curvature corrections to horizon dynamics and entropy currents have been studied~\cite{Saha:2020zho}.

\end{itemize}

\section*{Acknowledgements} We would like to thank Gary Horowitz, Kostas Skenderis, Harvey Reall, Bob Wald and Aron Wall for interesting and helpful conversations. MRV is supported  by the Spinoza Grant of the
Dutch Science Organization (NWO) awarded to Klaas
Landsman. ZY is supported by an Internal Graduate Studentship of Trinity College, Cambridge and the AFOSR grant FA9550-19-1-0260 ``Tensor Networks and Holographic Spacetime''.

\appendix

\section{Differential form identities}
\label{appaaaa}

\begin{lem}\label{lem:contraction}
    For any $p$-forms $\mathbf{A}$ and $\mathbf{B}$ on spacetime $\mathcal{M}$ with volume form $\bm \epsilon$, 
    \begin{equation}
        A^{a_1 \cdots a_p} B_{a_1 \cdots a_p} \bm \epsilon = p!\,{\star \mathbf{A}} \wedge \mathbf{B}.
    \end{equation}
\end{lem}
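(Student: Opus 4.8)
The plan is to exploit the fact that $\star\mathbf A\wedge\mathbf B$ is a top-degree $D$-form in $D$ dimensions, hence proportional to the volume form $\bm\epsilon$; the whole task then reduces to computing the single proportionality scalar. First I would expand both forms in a coordinate basis, writing $\mathbf A=\tfrac1{p!}A_{a_1\cdots a_p}\,dx^{a_1}\wedge\cdots\wedge dx^{a_p}$ and likewise for $\mathbf B$, and insert the paper's Hodge convention $(\star A)_{a_1\cdots a_{D-p}}=\tfrac1{p!}\epsilon_{a_1\cdots a_{D-p}b_1\cdots b_p}A^{b_1\cdots b_p}$. Then $\star\mathbf A\wedge\mathbf B$ becomes a contraction against a wedge of $D$ basis one-forms $dx^{a_1}\wedge\cdots\wedge dx^{a_{D-p}}\wedge dx^{c_1}\wedge\cdots\wedge dx^{c_p}$, which I would collapse to the permutation symbol times $dx^1\wedge\cdots\wedge dx^D=(1/\sqrt{-g})\,\bm\epsilon$.

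Second, I would split each Levi-Civita tensor into $\sqrt{-g}$ and the numerical permutation symbol $\tilde\epsilon$, so that the two factors of $\sqrt{-g}$ (one from $\star A$, one hidden in $\bm\epsilon$) cancel against the $1/\sqrt{-g}$ produced by reordering the basis one-forms. What remains is the purely combinatorial contraction of two permutation symbols, $\tilde\epsilon^{a_1\cdots a_{D-p}c_1\cdots c_p}\,\tilde\epsilon_{a_1\cdots a_{D-p}b_1\cdots b_p}=(D-p)!\,\delta^{c_1\cdots c_p}_{b_1\cdots b_p}$, with $\delta^{c_1\cdots c_p}_{b_1\cdots b_p}$ the generalized Kronecker delta. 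Contracting this against the already antisymmetric $A^{b_1\cdots b_p}$ and $B_{c_1\cdots c_p}$ produces the factor $p!\,A^{b_1\cdots b_p}B_{b_1\cdots b_p}$.

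Finally, I would collect the explicit factorials: the prefactor $1/[(D-p)!\,(p!)^2]$ coming from the two form expansions together with the Hodge definition combines with the $(D-p)!$ from the $\tilde\epsilon$-contraction and the $p!$ from the Kronecker collapse to give exactly $1/p!$, so that $\star\mathbf A\wedge\mathbf B=\tfrac1{p!}A^{a_1\cdots a_p}B_{a_1\cdots a_p}\,\bm\epsilon$, which is the claimed identity after multiplying through by $p!$.

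The step I expect to be the main obstacle is the bookkeeping in these last two stages: correctly tracking all the $1/p!$ and $1/(D-p)!$ combinatorial factors while simultaneously verifying that the signature-dependent sign cancels. The cancellation of the Lorentzian sign is the point easiest to get wrong; it works precisely because each $\epsilon$ has been reduced to $\sqrt{-g}$ times the metric-independent symbol $\tilde\epsilon$, so the symbol contraction is the sign-free combinatorial identity and no spurious $(-1)^{p(D-p)}$ or $\det g$ factor survives. No deeper idea is needed beyond this careful accounting, and the lemma then follows by direct computation.
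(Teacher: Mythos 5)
Your proposal is correct and the factor bookkeeping works out: the prefactor $1/[(D-p)!\,(p!)^2]$ from the two form expansions and the Hodge definition, multiplied by the $(D-p)!$ from the symbol contraction and the $p!$ from collapsing the generalized Kronecker delta against the antisymmetric $A^{b_1\cdots b_p}$, indeed leaves $1/p!$. The approach is essentially the same as the paper's --- a direct index computation resting on Levi-Civita contraction identities --- with one mechanical difference: the paper extracts the proportionality scalar by applying $\star$ to the top form and then invoking $\star(\star X_q)=(-1)^{1+q(D-q)}X_q$, so the Lorentzian $(-1)$ from the tensorial contraction $\epsilon^{a_1\cdots a_D}\epsilon_{a_1\cdots a_D}=-D!$ must cancel against the $(-1)$ from double-dualizing a $D$-form, whereas you expand in a coordinate basis and strip out the $\sqrt{-g}$ factors so that only numerical permutation symbols are ever contracted and no signature-dependent sign arises in the first place. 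Your observation that this is precisely why no spurious sign survives is accurate, and arguably your route makes the sign cancellation more transparent than the paper's.
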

\begin{proof}
    For $p$-forms $A_{a_1\cdots a_p}$ and $B_{a_1 \cdots a_p}$, we have 
\begin{equation}
    (\star A \wedge B)_{a_1 \cdots a_D} = \frac{D!}{p! (D-p)! p!} \epsilon_{[a_1 \cdots a_{D-p}|b_1 \cdots b_p|} A^{b_1 \cdots b_p} B_{a_{D-p+1} \cdots a_D]}
\end{equation}
so
\begin{equation}
    \begin{split}
        \star (\star A \wedge B) & = \frac{1}{p! (D-p)! p!} \epsilon_{a_1 \cdots a_D} \epsilon^{[a_1 \cdots a_{D-p}|b_1 \cdots b_p|} A_{b_1 \cdots b_p} B^{a_{D-p+1} \cdots a_D]}\\
        & = - \frac{1}{p!} \delta^{b_1}_{[a_{D-p+1}} \cdots \delta^{b_p}_{a_D]} A_{b_1 \cdots b_p} B^{a_{D-p+1} \cdots a_D}\\
        & = - \frac{1}{p!} A_{b_1 \cdots b_p} B^{b_1 \cdots b_p}
    \end{split}
\end{equation}
and we finally have 
\begin{equation}
    (\star A \wedge B)_{a_1 \cdots a_D} = \frac{1}{p!} A^{b_1 \cdots b_p} B_{b_1 \cdots b_p} \epsilon_{a_1 \cdots a_D}
\end{equation}
by using 
\begin{equation}
    \star (\star X_q) = (-1)^{1 + q (D-q)} X_q
\end{equation}
for any $q$-form $X_q$ in Lorentzian signature.
\end{proof}

\begin{cor}\label{cor:subcontraction}
    For any submanifold $\Sigma \subset \mathcal{M}$ with inclusion map $\iota : \Sigma \to \mathcal{M}$ and non-degenerate metric with volume form $\bm \epsilon_\Sigma$, the pullbacks of $p$-forms $\mathbf{A}$ and $\mathbf{B}$ to $\Sigma$ satisfy
    \begin{equation}
        (\iota^* A)^{i_1 \cdots i_p} (\iota^* B)_{i_1 \cdots i_p} \bm \epsilon_\Sigma = p!\, (\star_\Sigma \iota^* \mathbf{A}) \wedge \iota^* \mathbf{B}.
    \end{equation}
\end{cor}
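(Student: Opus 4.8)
The plan is to recognise that Corollary~\ref{cor:subcontraction} is nothing more than Lemma~\ref{lem:contraction} applied intrinsically to the submanifold $\Sigma$. Once $\Sigma$ is equipped with the metric induced by $\iota$, it is a (pseudo-)Riemannian manifold of dimension $n \equiv \dim\Sigma$ in its own right, carrying its own volume form $\bm\epsilon_\Sigma$ and Hodge dual $\star_\Sigma$. The pullbacks $\iota^*\mathbf{A}$ and $\iota^*\mathbf{B}$ are genuine $p$-forms on $\Sigma$, and the contraction $(\iota^*A)^{i_1\cdots i_p}(\iota^* B)_{i_1\cdots i_p}$ raises indices with the induced inverse metric. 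Since every operation appearing in the statement — index contraction, wedge product, and Hodge dual — is intrinsic to $\Sigma$, the identity to be proved is exactly the assertion of Lemma~\ref{lem:contraction} with $(\mathcal{M}, \bm\epsilon, \star, D)$ replaced by $(\Sigma, \bm\epsilon_\Sigma, \star_\Sigma, n)$.

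First I would repeat the computation in the proof of Lemma~\ref{lem:contraction} verbatim on $\Sigma$, replacing the spacetime indices $a_i$ by intrinsic indices $i_k$ and $D$ by $n$. The only place where this substitution is not automatic is the double-dual identity $\star(\star X_q) = (-1)^{1+q(D-q)}X_q$, which was quoted in Lorentzian signature. For a submanifold whose induced metric has $t_-$ negative eigenvalues, the correct relation is instead $\star_\Sigma\star_\Sigma X_q = (-1)^{t_-}(-1)^{q(n-q)}X_q$, which reduces to the Lorentzian formula only when $t_-=1$.

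The step I expect to be the main obstacle is therefore checking that the final identity is independent of the signature of $\Sigma$, so that the Riemannian case — the one actually used in~\eqref{eq:A-constraint-half-way}, where $\Sigma=\mathcal{C}$ is spacelike — yields precisely the same coefficient $p!$. Here the key observation is that the signature enters the intermediate computation in two compensating places: the fully contracted Levi-Civita product obeys $\epsilon_{i_1\cdots i_n}\epsilon^{i_1\cdots i_n} = (-1)^{t_-}\, n!$, while the double-dual carries the same factor $(-1)^{t_-}$. I would track both factors through the chain $\star_\Sigma(\star_\Sigma\iota^*\mathbf{A}\wedge\iota^*\mathbf{B})$ and verify that the two occurrences of $(-1)^{t_-}$ cancel, leaving the signature-independent result $(\star_\Sigma\iota^*\mathbf{A})\wedge\iota^*\mathbf{B} = \tfrac{1}{p!}(\iota^*A)^{i_1\cdots i_p}(\iota^* B)_{i_1\cdots i_p}\,\bm\epsilon_\Sigma$. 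Equivalently, and perhaps more transparently, I would note that the contraction identity is simply the coordinate expression of the defining property $\star_\Sigma\mathbf{X}\wedge\mathbf{Y} = \langle\mathbf{X},\mathbf{Y}\rangle\,\bm\epsilon_\Sigma$ of the Hodge dual, which holds in any signature because both the inner product $\langle\cdot,\cdot\rangle$ and the volume form absorb the metric determinant consistently. Either route completes the proof.
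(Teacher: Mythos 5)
Your proposal is correct and matches the paper's (implicit) argument: the corollary is stated without separate proof precisely because it is Lemma~\ref{lem:contraction} applied intrinsically to $(\Sigma,\bm\epsilon_\Sigma,\star_\Sigma)$. Your additional check that the signature factors cancel — needed because the lemma's proof quotes the double-dual identity in Lorentzian signature while the main application is to the spacelike (Riemannian) cross-section $\mathcal{C}$ — is a worthwhile refinement, and your closing observation that the identity is just the coordinate form of $\star_\Sigma\mathbf{X}\wedge\mathbf{Y}=\langle\mathbf{X},\mathbf{Y}\rangle\,\bm\epsilon_\Sigma$ is the cleanest way to see signature independence.
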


\noindent The spacetime Hodge dual of differential forms can be decomposed in terms of intrinsic Hodge duals of submanifolds as long as the induced metric is non-degenerate. To consider such a decomposition on a null hypersurfaces, one has to carry out a \emph{double null decomposition} to the level of codimension-2 spatial leaves of the null surface because the null metric is degenerate, i.e., the null Hodge dual is undefined.

\begin{lem}[Double null decomposition of Hodge dual]\label{lem:dnd-hodge}
    In a $D$-dimensional Lorentzian spacetime $\mathcal{M}$ with volume form $\bm \epsilon$, consider a codimension-2 spacelike surface $\mathcal{C}$ with two null normal 1-forms $-\bm k$, $-\bm l$, and volume form $\bm \epsilon_\mathcal{C}$ with orientation given by $\bm \epsilon = \bm k \wedge \bm l \wedge \bm \epsilon_\mathcal{C}$. Then for any $p$-form $\mathbf{X}$ on $\mathcal{M}$, its Hodge dual can be decomposed as follows
    \begin{equation}
        \star \mathbf{X} = \bm k \wedge \biggl( \bm l \wedge (\star_\mathcal{C} \mathbf{X}) + (-1)^{D-p-1} {\star_\mathcal{C} (l \cdot \mathbf{X})} \biggr) + (-1)^{D-p} \bm l \wedge [\star_\mathcal{C}(k \cdot \mathbf{X})] + \star_\mathcal{C} (l \cdot k \cdot \mathbf{X})
    \end{equation}
    where the pre-Hodge dual on $\mathcal{C}$ (since we have not yet pulled back to $\mathcal{C}$) is defined as 
    \begin{equation}
        (\star_\mathcal{C} X)_{a_1 \cdots a_{D-p-2}} = \frac{1}{p!} (\epsilon_\mathcal{C})_{a_1 \cdots a_{D-2}} X^{a_{D-p-1} \cdots a_{D-2}}.
    \end{equation}
\end{lem}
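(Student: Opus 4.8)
The plan is to prove the identity by decomposing the $p$-form $\mathbf{X}$ according to its legs along the two null normals and then computing the spacetime Hodge dual sector by sector. First I would fix an adapted double-null basis in which $k\cdot l=-1$, $k\cdot k=l\cdot l=0$, and the projector onto $\mathcal{C}$ reads $\gamma^a{}_b=\delta^a{}_b+k^a l_b+l^a k_b$, consistent with the orientation $\bm\epsilon=\bm k\wedge\bm l\wedge\bm\epsilon_{\mathcal C}$. Every $p$-form then admits the unique decomposition
\[
\mathbf{X}=\bm k\wedge\bm l\wedge\mathbf{A}+\bm k\wedge\mathbf{B}+\bm l\wedge\mathbf{C}+\mathbf{D},
\]
where $\mathbf{A},\mathbf{B},\mathbf{C},\mathbf{D}$ are forms purely tangent to $\mathcal{C}$ (annihilated by both $k\cdot$ and $l\cdot$) of degrees $p-2,\,p-1,\,p-1,\,p$, respectively.

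Next I would express the contractions appearing on the right-hand side in terms of these four pieces. Using the antiderivation property of the interior product together with $k\cdot\bm l=l\cdot\bm k=-1$ and $k\cdot\bm k=l\cdot\bm l=0$, one finds $k\cdot\mathbf{X}=\bm k\wedge\mathbf{A}-\mathbf{C}$, $l\cdot\mathbf{X}=-\bm l\wedge\mathbf{A}-\mathbf{B}$, and $l\cdot k\cdot\mathbf{X}=-\mathbf{A}$. The crucial observation is that $\star_{\mathcal C}$, being built from $\bm\epsilon_{\mathcal C}$ which is annihilated by any $k$- or $l$-contraction, kills every sector still carrying a residual null leg; hence $\star_{\mathcal C}\mathbf{X}=\star_{\mathcal C}\mathbf{D}$, $\star_{\mathcal C}(k\cdot\mathbf{X})=-\star_{\mathcal C}\mathbf{C}$, $\star_{\mathcal C}(l\cdot\mathbf{X})=-\star_{\mathcal C}\mathbf{B}$, and $\star_{\mathcal C}(l\cdot k\cdot\mathbf{X})=-\star_{\mathcal C}\mathbf{A}$. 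This cleanly ties each of the four terms of the claimed formula to a single sector of $\mathbf{X}$.

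The core step is to compute $\star\mathbf{X}$ directly from the definition $(\star X)_{a_1\cdots a_{D-p}}=\tfrac{1}{p!}\epsilon_{a_1\cdots a_{D-p}b_1\cdots b_p}X^{b_1\cdots b_p}$ on each sector. Substituting $\bm\epsilon=\bm k\wedge\bm l\wedge\bm\epsilon_{\mathcal C}$ and raising the $\bm k,\bm l$ legs with the $2\times2$ null block $\left(\begin{smallmatrix}0&-1\\-1&0\end{smallmatrix}\right)$ (which swaps $k\leftrightarrow -l$ under index raising), I expect to find that a purely tangent $q$-form $\mathbf{T}$ dualises to $\star\mathbf{T}=\pm\,\bm k\wedge\bm l\wedge(\star_{\mathcal C}\mathbf{T})$, that $\bm k\wedge\mathbf{T}$ dualises to a term proportional to $\bm l\wedge(\star_{\mathcal C}\mathbf{T})$ (and likewise $\bm l\wedge\mathbf{T}$ to $\bm k\wedge(\star_{\mathcal C}\mathbf{T})$), and that $\bm k\wedge\bm l\wedge\mathbf{T}$ dualises to the purely tangent $\star_{\mathcal C}\mathbf{T}$. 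Feeding the four sectors of $\mathbf{X}$ through this sub-lemma reproduces, respectively, the $\star_{\mathcal C}(l\cdot k\cdot\mathbf{X})$, $\bm l\wedge\star_{\mathcal C}(k\cdot\mathbf{X})$, $\bm k\wedge\star_{\mathcal C}(l\cdot\mathbf{X})$, and $\bm k\wedge\bm l\wedge\star_{\mathcal C}\mathbf{X}$ terms.

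The main obstacle will be the sign bookkeeping. The signs arise from three independent sources: the antiderivation signs in the iterated interior products; the factors $(-1)^{D-p-1}$ and $(-1)^{D-p}$ generated when commuting $\bm k$ and $\bm l$ past tangent forms of the appropriate degree inside the wedge products; and the Lorentzian Hodge-dual conventions, encoded in $\star\star X_q=(-1)^{1+q(D-q)}X_q$ and in the determinant $-1$ of the null $k$--$l$ block. I would pin these down by evaluating the identity on the four sectors in a fixed adapted frame, verifying in particular that the relative signs between the four terms match the stated $(-1)^{D-p-1}$ and $(-1)^{D-p}$, after which the abstract-index statement follows by covariance.
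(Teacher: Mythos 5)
Your overall strategy --- decomposing $\mathbf{X}$ into the four null sectors $\bm k\wedge\bm l\wedge\mathbf{A}+\bm k\wedge\mathbf{B}+\bm l\wedge\mathbf{C}+\mathbf{D}$ and dualising sector by sector --- is a legitimate alternative to the paper's proof, which instead peels the null legs off the \emph{volume form} one at a time (writing $\bm\epsilon=-\bm k\wedge\tilde{\bm\epsilon}$, splitting the antisymmetrisation sum to obtain $\star\mathbf{X}=-\bm k\wedge(\tilde\star\mathbf{X})-(-1)^{D-p}\,\tilde\star(k\cdot\mathbf{X})$, and iterating with $\tilde{\bm\epsilon}=-\bm l\wedge\bm\epsilon_{\mathcal C}$). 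Your contraction bookkeeping is correct: $k\cdot\mathbf{X}=\bm k\wedge\mathbf{A}-\mathbf{C}$, $l\cdot\mathbf{X}=-\bm l\wedge\mathbf{A}-\mathbf{B}$, $l\cdot k\cdot\mathbf{X}=-\mathbf{A}$, and $\star_{\mathcal C}$ does annihilate every sector carrying a residual null leg.

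However, your key sub-lemma has the null-leg pairing backwards, and this is not a mere sign issue but a structural one that prevents the proof from closing as written. For a \emph{null} covector one has $\star(\bm k\wedge\mathbf{T})\propto\bm k\wedge\star_{\mathcal C}\mathbf{T}$, not $\bm l\wedge\star_{\mathcal C}\mathbf{T}$: raising the indices of $\bm k\wedge\mathbf{T}$ with $g^{ab}=-k^a l^b-l^a k^b+\gamma^{ab}$ produces $k^a$ factors, and contracting $k^a$ into $\bm\epsilon=\bm k\wedge\bm l\wedge\bm\epsilon_{\mathcal C}$ removes the $\bm l$ leg (since $k\cdot\bm l=-1$) while \emph{retaining} the $\bm k$ leg --- a null direction is self-orthogonal, so its Hodge dual keeps it. (A quick check in $D=2$ with $ds^2=-2\,\dd u\,\dd v$ gives $\star\bm k=-\bm k$ and $\star\bm l=+\bm l$.) With your pairing, the sector $\bm k\wedge\mathbf{B}$ would produce a term proportional to $\bm l\wedge\star_{\mathcal C}\mathbf{B}$, which you then match to $(-1)^{D-p}\bm l\wedge\star_{\mathcal C}(k\cdot\mathbf{X})=-(-1)^{D-p}\bm l\wedge\star_{\mathcal C}\mathbf{C}$; this would require $\star_{\mathcal C}\mathbf{B}\propto\star_{\mathcal C}\mathbf{C}$, which is false in general. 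The correct matching is that the $\bm k\wedge\mathbf{B}$ sector feeds the $(-1)^{D-p-1}\bm k\wedge\star_{\mathcal C}(l\cdot\mathbf{X})$ term (because $l\cdot{}$ is what extracts $\mathbf{B}$), and the $\bm l\wedge\mathbf{C}$ sector feeds the $(-1)^{D-p}\bm l\wedge\star_{\mathcal C}(k\cdot\mathbf{X})$ term. Once the sub-lemma is corrected to $\star(\bm k\wedge\mathbf{T})\propto\bm k\wedge\star_{\mathcal C}\mathbf{T}$ and $\star(\bm l\wedge\mathbf{T})\propto\bm l\wedge\star_{\mathcal C}\mathbf{T}$, your sectorwise argument goes through, and what remains is only the honest sign determination you already anticipate.
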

\begin{proof}
    For convenience, we first define
    \begin{equation}
        \tilde{\bm \epsilon} = - \bm l \wedge \bm \epsilon_{\mathcal{C}}
    \end{equation}
    which is the volume form on the null hypersurface with normal $\bm k$. Hence, the spacetime volume form can be written as $\bm \epsilon = - \bm k \wedge \tilde{\bm \epsilon}$. 
    
    We can expand
    \begin{equation}
        \epsilon_{a_1 \cdots a_D}  = - (k \wedge \tilde \epsilon)_{a_1 \cdots a_D}= - D k_{[a_1} \tilde \epsilon_{a_2 \cdots a_D]}= - \sum_{m=1}^{D} (-1)^{m-1} k_{a_m} \tilde \epsilon_{a_1 \cdots \hat{a}_m \cdots a_D} \label{eq:vf-decomp-1}
    \end{equation}
    where `` $\hat{}$ '' means omission of an index.

    Similarly, we have
    \begin{equation}
        \tilde \epsilon_{a_1 \cdots a_{D-1}} = - \sum_{n=1}^{D-1} (-1)^{n-1} l_{a_n} (\epsilon_\mathcal{C})_{a_1 \cdots \hat a_n \cdots a_{D-1}}. \label{eq:vf-decomp-2}
    \end{equation}

    We first decompose the Hodge dual of $\mathbf{X}$ as 
    \begin{align*}
        (\star X)_{a_1 \cdots a_{D-p}} & = \frac{1}{p!} \epsilon_{a_1 \cdots a_{D-p} a_{D-p+1} \cdots a_D}X^{a_{D-p+1} \cdots a_D}\\
            & = - \frac{1}{p!} X^{a_{D-p+1} \cdots a_D} \sum_{m=1}^{D} (-1)^{m-1} k_{a_m} \tilde \epsilon_{a_1 \cdots \hat{a}_m \cdots a_D}\\
            & = - \frac{1}{p!} X^{a_{D-p+1} \cdots a_D} \sum_{m=1}^{D-p} (-1)^{m-1} k_{a_m} \tilde \epsilon_{a_1 \cdots \hat{a}_m \cdots a_{D-p} a_{D-p+1} \cdots a_D}\\
            &  \qquad - \frac{1}{p!} X^{a_{D-p+1} \cdots a_D} \sum_{m=D-p+1}^{D} (-1)^{m-1} k_{a_m} \tilde \epsilon_{a_1 \cdots a_{D-p} a_{D-p+1}  \cdots \hat{a}_m \cdots a_D}\\
            & = - \sum_{m=1}^{D-p} (-1)^{m-1} k_{a_m} (\tilde \star X)_{a_1 \cdots \hat a_m \cdots a_{D-p}}\\
            & \qquad - \frac{(-1)^{D-p}}{p!} \sum_{m=1}^{p} (-1)^{m-1} k_{b_m} X^{b_1 \cdots b_m \cdots b_p} \tilde \epsilon_{a_1 \cdots a_{D-p} b_1 \cdots \hat b_m \cdots b_p}\\
            & = - (D-p) k_{[a_1} (\tilde \star X)_{a_2 \cdots a_{D-p}]} - \frac{(-1)^{D-p}}{p!} p \tilde \epsilon_{a_1 \cdots a_{D-p} b_1 \cdots b_{p-1}} (k \cdot X)^{b_1 \cdots b_{p-1}}\\
            & = (- k \wedge \tilde \star X)_{a_1 \cdots a_{D-p}} - (-1)^{D-p} [\tilde \star (k \cdot X)]_{a_1 \cdots a_{D-p}} \,.\numberthis
    \end{align*}
  Expressed differently,
    \begin{equation}
        \star \mathbf{X} = - \bm k \wedge (\tilde \star \mathbf{X}) - (-1)^{D-p} {\tilde \star (k \cdot \mathbf{X})},
    \end{equation}
    where, for convenience, we have defined a ``pseudo-Hodge dual''
    \begin{equation}
        (\tilde \star X)_{a_1 \cdots a_{D-p-1}} = \frac{1}{p!} \tilde \epsilon_{a_1 \cdots a_{D-1}} X^{a_{D-p}\cdots a_{D-1}}
    \end{equation}
    where the spacetime metric is used for the contraction --- it is \emph{not} a well-defined intrinsic Hodge dual for the null surface.
    
    In the same spirit, we can expand 
    \begin{equation}
        \tilde \star \mathbf{X} = - \bm l \wedge (\star_\mathcal{C} \mathbf{X}) - (-1)^{D-p-1} {\star_\mathcal{C}(l \cdot \mathbf{X})}
    \end{equation}
    where we have used the definition of pre-Hodge dual on $\mathcal{C}$.

    We finally have 
    \begin{equation}
        \star \mathbf{X} = \bm k \wedge \biggl( \bm l \wedge (\star_\mathcal{C} \mathbf{X}) + (-1)^{D-p-1} {\star_\mathcal{C} (l \cdot \mathbf{X})} \biggr) + (-1)^{D-p} \bm l \wedge [\star_\mathcal{C}(k \cdot \mathbf{X})] + \star_\mathcal{C} (l \cdot k \cdot \mathbf{X}).
    \end{equation}
\end{proof}

\section{Derivation of constraint forms} \label{app:constraint-form}
\subsubsection{Derivation of \cref{eq:phi-A-constraint} and \cref{eq:constraint-form}}
Here, we review the derivation of the expression of the constraint form (\cref{eq:phi-A-constraint} and \cref{eq:constraint-form}) according to the method used  in \cite{Seifert:2006kv} (see also \cite{Iyer:1994ys}). Under a diffeomorphism generated by an arbitrary  vector field $\chi$, the Lagrangian varies as 
\begin{equation}
    \mathcal L_\chi \mathbf L = \frac{1}{2} \mathbf{E}_{(g)ab}(\phi) \mathcal{L}_\chi g^{ab} + \mathbf{E}_{(\psi)}^{a_1 \cdots a_s}(\phi) \mathcal{L}_\chi \psi_{a_1 \cdots a_s} + \mathbf{E}_{(A)}^{a_1 \cdots a_p}(\phi) \mathcal{L}_\chi A_{a_1 \cdots a_p} + \dd{\bm \Theta(\phi, \mathcal{L}_\chi \varphi)}.
\end{equation}
We use Cartan's magic formula $\mathcal{L}_\chi \mathbf{L} = \dd{(\chi \cdot \mathbf{L})}$ and the expressions of the Lie derivatives 
\begin{align}
    \mathcal{L}_\chi g^{ab} & = - 2 \nabla^{(a} \chi^{b)}\\
    \mathcal{L}_\chi \psi_{a_1 \cdots a_s} & = \chi^b \nabla_b \psi_{a_1 \cdots a_s} + \psi_{b a_2 \cdots a_s} \nabla_{a_1} \chi^b + \psi_{a_1 b a_3 \cdots a_s} \nabla_{a_2} \chi^b + \cdots + \psi_{a_1 \cdots a_{s-1} b} \nabla_{a_s} \chi^b \\
    \mathcal{L}_\chi A_{a_1 \cdots a_p} & = \chi^b \nabla_b A_{a_1 \cdots a_p} + A_{b a_2 \cdots a_p} \nabla_{a_1} \chi^b + A_{a_1 b a_3 \cdots a_p} \nabla_{a_2} \chi^b + \cdots + A_{a_1 \cdots a_{p-1} b} \nabla_{a_p} \chi^b
\end{align}
to obtain 
\begin{equation}
    \dd{\mathbf{J}_\chi}\equiv \dd{(\bm \Theta(\phi, \mathcal{L}_\chi \varphi) - \chi \cdot \mathbf{L})} = \bm \epsilon \left[ \left( E_{(g)}^{ab} - C_{(\psi)}^{ab} - C_{(A)}^{ab} \right) \nabla_a \chi_b - \chi^b P_{(\psi,A)b}  \right] \label{eq:dJ-chi}
\end{equation}
where, for convenience, we have defined
\begin{align}
        C^{ab}_{(\psi)} & \equiv E_{(\psi)}^{a a_2 \cdots a_s} \psi^{b}{}_{a_2 \cdots a_s} + E_{(\psi)}^{a_1 a a_3 \cdots a_s} \psi_{a_1}{}^{b}{}_{a_3 \cdots a_s} + \cdots + E_{(\psi)}^{a_1 \cdots a_{s-1}a} \psi_{a_1 \cdots a_{s-1}}{}^{b},\\
        C^{ab}_{(A)} & \equiv E_{(A)}^{a a_2 \cdots a_p} A^{b}{}_{a_2 \cdots a_p} + E_{(A)}^{a_1 a a_3 \cdots a_p} A_{a_1}{}^{b}{}_{a_3 \cdots a_p} + \cdots + E_{(A)}^{a_1 \cdots a_{p-1}a} A_{a_1 \cdots a_{p-1}}{}^{b}\\
        P_{(\psi,A)b} & \equiv  E^{a_1 \cdots a_s}_{(\psi)} \nabla_b \psi_{a_1 \cdots a_s}+ E^{a_1 \cdots a_p}_{(A)} \nabla_b A_{a_1 \cdots a_p}
    \end{align}
Taking the Hodge dual of \cref{eq:dJ-chi}, we find 
    \begin{align}
        - \nabla^a (\star J_\chi)_a & = \left( E_{(g)}^{ab} - C_{(\psi)}^{ab} - C_{(A)}^{ab} \right) \nabla_a \chi_b - \chi^b P_{(\psi,A)b}\\
     & = \nabla_a\left( \left( E_{(g)}^{ab} - C_{(\psi)}^{ab} - C_{(A)}^{ab} \right)  \chi_b \right) - \chi^b \left( P_{(\psi,A)b} + \nabla^a \left( E_{(g)ab} - C_{(\psi)ab} - C_{(A)ab} \right) \right). \nonumber
    \end{align}
Rearranging this yields
\begin{equation}
    \nabla^a \left( (\star J_\chi)_a +  \left( E_{(g)ab} - C_{(\psi)ab} - C_{(A)ab} \right) \chi^b\right) = \chi^b \left( P_{(\psi,A)b} + \nabla^a \left( E_{(g)ab} - C_{(\psi)ab} - C_{(A)ab} \right) \right).
\end{equation}
As the expression is off shell, we can take $\chi^b$ to be compactly supported and integrate over the entire spacetime to get 
\begin{equation}
    0 = \int \dd[D]{x} \sqrt{-g}\, \nabla^a (\cdots) = \int \dd[D]{x} \sqrt{-g}\, \chi^b \left( P_{(\psi,A)b} + \nabla^a \left( E_{(g)ab} - C_{(\psi)ab} - C_{(A)ab} \right) \right).
\end{equation}
As this should hold for any compactly supported $\chi^b$, we arrive at an identity:
\begin{equation}
    P_{(\psi,A)b} + \nabla^a \left( E_{(g)ab} - C_{(\psi)ab} - C_{(A)ab} \right) = 0. \label{eq:gen-Bianchi-A}
\end{equation}
This is the \emph{generalised Bianchi identity}. 

Furthermore, we have 
\begin{equation}
    \begin{split}
        & \quad~\frac{1}{2} \mathbf{E}_{(g)ab}(\phi) \mathcal{L}_\chi g^{ab} + \mathbf{E}_{(\psi)}^{a_1 \cdots a_s}(\phi) \mathcal{L}_\chi \psi_{a_1 \cdots a_s} + \mathbf{E}_{(A)}^{a_1 \cdots a_p}(\phi) \mathcal{L}_\chi A_{a_1 \cdots a_p} \\
        & = \bm \epsilon \left[ \left( - E_{(g)}^{ab} + C_{(\psi)}^{ab} + C_{(A)}^{ab} \right) \nabla_a \chi_b + \chi^b P_{(\psi,A)b}  \right]\\
        & = \bm \epsilon \nabla^a \left( \left( - E_{(g)ab} + C_{(\psi)ab} + C_{(A)ab} \right) \chi^b \right)\\
        & = \dd{\left( \left(-E_{(g)}^{ab} + C_{(\psi)}^{ab} + C_{(A)}^{ab}\right) \chi_b \bm \epsilon_a \right)}\\
        & = \dd{\mathbf{C}_\chi}
    \end{split}
\end{equation}
where we   used the generalised Bianchi identity \eqref{eq:gen-Bianchi-A} in the second equality, and
\begin{equation}
    \bm \epsilon \nabla^a X_a = (-1)^D \bm \epsilon ({\star} \dd{{\star} \mathbf X}) = (-1)^D {\star}{\star} (\dd{{\star} \mathbf X}) = \dd{\left((-1)^{D+1} {\star} \mathbf X\right)}
\end{equation}
for any 1-form $\mathbf X$ (in this case it is $X_a = \left( - E_{(g)ab} + C_{(\psi)ab} + C_{(A)ab} \right) \chi^b$), and 
\begin{equation}
    (-1)^{D+1} ({\star} X)_{a_1 \cdots a_{D-1}} = (-1)^{D+1} \epsilon_{a_1 \cdots a_{D-1}a} X^a = (-1)^{D+1} (-1)^{D-1} X^a \epsilon_{a a_1 \cdots a_{D-1}} = (X^a \epsilon_a)_{a_1 \cdots a_{D-1}}
\end{equation}
to establish the third equality.

We finally identify the constraint form as 
\begin{equation}
    \mathbf{C}_\chi = \left(-E_{(g)}^{ab} + C_{(\psi)}^{ab} + C_{(A)}^{ab}\right) \chi_b \bm \epsilon_a.
\end{equation}

\subsubsection{Derivation of \cref{eq:phi-F-constraint} and \cref{eq:F-constraint-form}}

Using the same methods as above but treating the field strength $F_{a_1 \cdots a_{p+1}}$ as fundamental, we have 
\begin{equation}
    \mathcal L_\chi \mathbf L = \frac{1}{2} \mathbf{E}_{(g)ab}(\phi) \mathcal{L}_\chi g^{ab} + \mathbf{E}_{(\psi)}^{a_1 \cdots a_s}(\phi) \mathcal{L}_\chi \psi_{a_1 \cdots a_s} + \mathbf Y^{a_1 \cdots a_{p+1}}(\phi) \mathcal{L}_\chi F_{a_1 \cdots a_{p+1}} + \dd{\bm \Theta^\text{GI}(\phi, \mathcal{L}_\chi \phi)}.
\end{equation}
under a diffeomorphism generated by $\chi$. Here, for convenience, we   defined
\begin{equation}
    Y^{a_1 \cdots a_{p+1}} = \frac{1}{\sqrt{-g}}\frac{\delta I}{\delta F_{a_1 \cdots a_{p+1}}} \qquad \text{and} \qquad \mathbf Y^{a_1 \cdots a_{p+1}} = Y^{a_1 \cdots a_{p+1}} \bm \epsilon
\end{equation}
and we notice that $\mathbf Y$ (as a differential form itself) relates to $\bm \Upsilon$ by 
\begin{equation}
    \bm \Upsilon = (p+1)! {\star} \mathbf Y.
\end{equation}
Repeating the same procedure as above but instead using 
\begin{equation}
    \mathcal L_\chi F_{a_1 \cdots a_{p+1}} = \chi^b \nabla_b F_{a_1 \cdots a_{p+1}} + F_{b a_2 \cdots a_{p+1}} \nabla_{a_1} \chi^b + F_{a_1 b a_3 \cdots a_{p+1}} \nabla_{a_2}\chi^b + \cdots + F_{a_1 \cdots a_p b} \nabla_{a_{p+1}} \chi^b 
\end{equation}
we find 
\begin{equation}
    \dd{\mathbf J^\text{GI}_\chi} \equiv \dd{(\bm \Theta^\text{GI}(\phi, \mathcal L_\chi \phi) - \chi \cdot \mathbf L)} = \bm \epsilon \left[\left(E_{(g)}^{ab} - C_{(\psi)}^{ab} - c_{(F)}^{ab}\right)\nabla_a \chi_b - \chi^b P_{(\psi, F)b}\right]
\end{equation}
where $C_{(\psi)}^{ab}$ is the same as before, and 
\begin{align}
    c_{(F)}^{ab} & \equiv Y^{a a_2 \cdots a_{p+1}} F^b{}_{a_2 \cdots a_{p+1}} + Y^{a_1 a a_3 \cdots a_{p+1}} F_{a_1}{}^{b}{}_{a_3 \cdots a_{p+1}} + \cdots + Y^{a_1 \cdots a_p a} F_{a_1 \cdots a_p}{}^b\,,\\
    P_{(\psi, F)b} & \equiv E_{(\psi)}^{a_1 \cdots a_s} \nabla_b \psi_{a_1 \cdots a_s} + Y^{a_1 \cdots a_{p+1}}\nabla_b F_{a_1 \cdots a_{p+1}}.
\end{align}
By the same argument, we have
\begin{equation}
    \nabla^a \left( (\star J^\text{GI}_\chi)_a +  \left( E_{(g)ab} - C_{(\psi)ab} - c_{(F)ab} \right) \chi^b\right) = \chi^b \left( P_{(\psi,F)b} + \nabla^a \left( E_{(g)ab} - C_{(\psi)ab} - c_{(F)ab} \right) \right),
\end{equation}
which integrates to zero for arbitrary compactly supported $\chi^b$. Hence, we obtain another generalised Bianchi identity, treating $\mathbf F$ as  fundamental:
\begin{equation}
    P_{(\psi,F)b} + \nabla^a \left( E_{(g)ab} - C_{(\psi)ab} - c_{(F)ab} \right) = 0.
\end{equation}
Finally, we obtain 
\begin{equation}
    \begin{split}
        & \quad~\frac{1}{2} \mathbf{E}_{(g)ab}(\phi) \mathcal{L}_\chi g^{ab} + \mathbf{E}_{(\psi)}^{a_1 \cdots a_s}(\phi) \mathcal{L}_\chi \psi_{a_1 \cdots a_s} + \mathbf Y^{a_1 \cdots a_{p+1}}(\phi) \mathcal{L}_\chi F_{a_1 \cdots a_{p+1}} \\
        & = \bm \epsilon \left[ \left( - E_{(g)}^{ab} + C_{(\psi)}^{ab} + c_{(F)}^{ab} \right) \nabla_a \chi_b + \chi^b P_{(\psi,F)b}  \right]\\
        & = \bm \epsilon \nabla^a \left( \left( - E_{(g)ab} + C_{(\psi)ab} + c_{(F)ab} \right) \chi^b \right)\\
        & = \dd{\left( \left(-E_{(g)}^{ab} + C_{(\psi)}^{ab} + c_{(F)}^{ab}\right) \chi_b \bm \epsilon_a \right)}\\
        & = \dd{\mathbf{c}^{(F)}_\chi}
    \end{split}
\end{equation}
where we identify the ``would-be constraint form'' for $\mathbf F$ as 
\begin{equation}
    \mathbf{c}^{(F)}_\chi = \left(-E_{(g)}^{ab} + C_{(\psi)}^{ab} + c_{(F)}^{ab}\right) \chi_b \bm \epsilon_a.
\end{equation}

\section{Closedness of downgraded  forms}
\label{appc}
Below we prove the two equations in  \eqref{downgradedoncompactsubspace}.

  \subsubsection{Closedness of downgraded electric flux density perturbation form}
    First, we show that 
    \begin{equation}
        \dd{(\delta \tilde{\bm \Upsilon}^{\aleph_{r+1} \cdots \aleph_k}_{(D-p-r-1)})} \tceq 0.
    \end{equation}
    Consider 
    \begin{equation}
        \begin{split}
            &  \dd{[((\delta \bm \Upsilon \cdot m_{\aleph_r}) \cdot \dots) \cdot m_{\aleph_1}]}\\
            & \overset{\tilde{\mathcal{C}}}{\propto} \dd{[m_{\aleph_r}  \cdot ({\dots} \cdot (m_{\aleph_{1}} \cdot \delta \bm \Upsilon))]}\\
            & \tceq \mathcal{L}_{m_{\aleph_r}}[m_{\aleph_{r-1}}  \cdot ({\dots} \cdot (m_{\aleph_{1}} \cdot \delta \bm \Upsilon))] - m_{\aleph_r}  \cdot \dd{[m_{\aleph_{r-1}}  \cdot ({\dots} \cdot (m_{\aleph_{1}} \cdot \delta \bm \Upsilon))]}\\
            & \tceq \cdots\\
            & \tceq \mathcal{L}_{m_{\aleph_r}}(\cdots) + \mathcal{L}_{m_{\aleph_{r-1}}}(\cdots) + \cdots + \mathcal{L}_{m_{\aleph_1}}(\cdots) + (-1)^r m_{\aleph_r}  \cdot ({\dots} \cdot (m_{\aleph_{1}} \cdot \underbrace{\dd{(\delta \bm \Upsilon)}}_{0}))\\
            & \tceq \mathcal{L}_{m_{\aleph_r}}(\cdots) + \mathcal{L}_{m_{\aleph_{r-1}}}(\cdots) + \cdots + \mathcal{L}_{m_{\aleph_1}}(\cdots)\,,
        \end{split}
    \end{equation}
    where we   repeatedly used   Cartan's magic formula.

    Then, we have 
    \begin{equation}
        \begin{split}
            \dd{(\delta \tilde{\bm \Upsilon}^{\aleph_{r+1} \cdots \aleph_k}_{(D-p-r-1)})} & \tceq \frac{1}{r!}\,\iota^* \int_{\Sigma_k} \left[  \mathcal{L}_{m_{\aleph_r}}(\cdots) + \mathcal{L}_{m_{\aleph_{r-1}}}(\cdots) + \cdots + \mathcal{L}_{m_{\aleph_1}}(\cdots)\right] \dd{y^{\aleph_1}} \cdots \dd{y^{\aleph_k}}\\
            & \tceq \left[ \cdots \right]_{y^{\aleph_1} = -\infty}^\infty + \left[ \cdots \right]_{y^{\aleph_2} = -\infty}^\infty + \cdots + \left[ \cdots \right]_{y^{\aleph_r} = -\infty}^\infty \\
            & \tceq 0\,.
        \end{split}
    \end{equation}
   where we used that  the pullback $\iota^*$ commutes with the exterior derivative. In the last equality we imposed    the asymptotic boundary condition (see number 4 in section \ref{sec:falloff}) that $\delta \phi \to \delta \phi_\infty=\text{const.}$ and $\mathcal L_{m_\aleph} \delta \phi_\infty \to 0$ as $y^{\aleph} \to \pm \infty$.

    \subsubsection{Closedness   of downgraded potential form}

    Next, we show that 
    \begin{equation}
        \dd{\tilde{\bm \Phi}^{(p+r-k-1)}_{\aleph_{r+1}\cdots \aleph_k}} \tceq 0.
    \end{equation}
    To begin with, we use the fact that $\bm \Phi$ does not dependent on the coordinates $y^\aleph$ in the extended directions (see equation \eqref{constpotgaugeoriginal})  to obtain 
    \begin{equation}
        \mathcal{L}_{m_\aleph} \bm \Phi \fheq \dd{(m_\aleph \cdot \bm \Phi)} + m_\aleph \cdot \dd{\bm \Phi} \fheq 0\,.
    \end{equation}
    We also notice the action of Lie derivative commutes with both the exterior derivative and the interior product. So the operation of $\dd$ and $m_\aleph \cdot{}$ on $\bm \Phi$ anticommute on the horizon. Then we have,
    \begin{equation}
        \begin{split}
            \dd{\tilde{\bm \Phi}^{(p+r-k-1)}_{\aleph_{r+1}\cdots \aleph_k}} & \tceq \frac{1}{(k-r)!}\,\dd{\left( \iota^*[m_{\aleph_k}  \cdot (m_{\aleph_{k-1}} \cdot ({\dots} \cdot (m_{\aleph_{r+1}} \cdot \bm \Phi)))] \right)}\\
            & \tceq \frac{1}{(k-r)!}\,\iota^*\dd{[m_{\aleph_k}  \cdot (m_{\aleph_{k-1}} \cdot ({\dots} \cdot (m_{\aleph_{r+1}} \cdot \bm \Phi)))]}\\
            & \tceq - \frac{1}{(k-r)!}\,\iota^*[m_{\aleph_k}  \cdot \dd{(m_{\aleph_{k-1}} \cdot ({\dots} \cdot (m_{\aleph_{r+1}} \cdot \bm \Phi)))}]\\
            & \tceq \frac{(-1)^{k-r}}{(k-r)!}\,\iota^*[m_{\aleph_k}  \cdot (m_{\aleph_{k-1}} \cdot ({\dots} \cdot (m_{\aleph_{r+1}} \cdot \dd{\bm \Phi})))]\\
            & \tceq 0\,,
        \end{split}
    \end{equation}
    where   the first equality follows from the fact that the pullback $\iota^*$ commutes with the exterior derivative $\dd$.  

% Bibliography

\bibliography{biblio.bib}

\end{document}